\documentclass[12pt]{article}
\usepackage{amsmath, amssymb, amsthm}
\usepackage{algorithm, algorithmic}
\usepackage{setspace}
\usepackage{geometry}
\usepackage{graphicx}
\usepackage{url}
\usepackage{hyperref}
\usepackage{booktabs}
\usepackage{threeparttable}

\usepackage[suppress]{color-edits}
\usepackage{tikz}
\usepackage{caption}
\usepackage{subcaption}
\usetikzlibrary{decorations.pathmorphing, arrows.meta, calc}
\usetikzlibrary{decorations.pathreplacing}


\geometry{a4paper, margin=1in}

\newtheorem{theorem}{Theorem}
\newtheorem{lemma}[theorem]{Lemma}
\newtheorem{corollary}[theorem]{Corollary}

\newtheorem{definition}[theorem]{Definition}
\newtheorem{claim}[theorem]{Claim}

\newtheorem*{openq}{Open Question}
\newtheorem*{itheorem}{Informal Theorem}

\usepackage{tcolorbox}
\renewenvironment{openq}{%
    \begin{tcolorbox}[colback=gray!7, colframe=black, boxsep=1pt, left=4pt, right=4pt, rounded corners]%
    \textbf{Open Question:}
}{%
    \end{tcolorbox}%
}

\newtheorem{problem}[]{Problem}





\renewcommand{\hat}{\widehat}
\renewcommand{\tilde}{\widetilde}
\renewcommand{\bar}{\overline}



\DeclareMathOperator{\poly}{poly}

\DeclareMathOperator{\OPT}{OPT}


\def\min{\qopname\relax n{min}}
\def\max{\qopname\relax n{max}}

\def\E{\mathbb{E}}
\def\EE{\mathcal{E}}

\def\w{\vec w}

\def\eps{\epsilon}





\newcommand{\ALG}{\mathrm{ALG}}

\newcommand{\AlternativeExists}{Lemma~\ref{lem:missed-properties} (\textsc{AlternativeExists})}

\title{Near-Optimal Sparsifiers for Stochastic Knapsack and Assignment Problems}
\author{
  Shaddin Dughmi\thanks{University of Southern California. Email: \texttt{shaddin@usc.edu}.}
  \and
  Yusuf Hakan Kalayci\thanks{University of Southern California. Email: \texttt{kalayci@usc.edu}.}
  \and
  Xinyu Liu\thanks{University of Southern California. Email: \texttt{xinyul@usc.edu}.}
}

\date{September 2025}

\begin{document}

\maketitle

\begin{abstract}
When uncertainty meets costly information gathering, a fundamental question emerges: which data points should we probe to unlock near-optimal solutions? Sparsification of stochastic packing problems addresses this trade-off. 
The existing notions of sparsification measure the level of sparsity, called degree, as the ratio of queried items to the optimal solution size.
While effective for matching and matroid-type problems with uniform structures, this cardinality-based approach fails for knapsack-type constraints where feasible sets exhibit dramatic structural variation.
We introduce a polyhedral sparsification framework that measures the degree as the smallest scalar needed to embed the query set within a scaled feasibility polytope, naturally capturing redundancy without relying on cardinality.

Our main contribution establishes that knapsack, multiple knapsack, and generalized assignment problems admit $(1-\epsilon)$-approximate sparsifiers with degree polynomial in $1/p$ and $1/\epsilon$ -- where $p$ denotes the independent activation probability of each element -- remarkably independent of problem dimensions.
The key insight involves grouping items with similar weights and deploying a charging argument: when our query set misses an optimal item, we either substitute it directly with a queried item from the same group or leverage that group's excess contribution to compensate for the loss. 
This reveals an intriguing complexity-theoretic separation -- while the multiple knapsack problem lacks an FPTAS and generalized assignment is APX-hard, their sparsification counterparts admit efficient $(1-\epsilon)$-approximation algorithms that identify polynomial degree query sets.
Finally, we raise an open question: can such sparsification extend to general integer linear programs with degree independent of problem dimensions?
\end{abstract}




\section{Introduction}

The sparsification of stochastic packing problems has emerged as a fundamental paradigm for designing algorithms that achieve near-optimal solutions with limited access to uncertain data. This approach proves particularly valuable in settings where probing or querying elements incurs costs or faces constraints. Recent developments in sparsification of stochastic packing problems have revealed elegant techniques for selecting the most pertinent information to probe. 

Our goal in this paper is to expand the sparsification toolbox beyond well-studied matching settings~\cite{assadi2019towards, azarmehr2025, behnezhad2022stochastic, behnezhad2020weighted, behnezhad2020stochastic, behnezhad2019stochastic, blum2015ignorance,derakhshan2023stochastic,derakhshan2025weighted, yamaguchi2018stochastic} and matroid optimization problems~\cite{dughmi2023sparsification,maehara2019stochastic, yamaguchi2018stochastic} to design sparsifiers for knapsack-type constraints. A detailed survey of related work appears later in Section~\ref{sec:related_works}.

We adopt the general framework of Dughmi et al. \cite{dughmi2023sparsification} for a packing problem instance $\langle E, \mathcal{F}, f\rangle$, where $E$ is a ground set of elements, $\mathcal{F} \subseteq 2^E$ is a downward-closed family of feasible sets, and $f : 2^E \to \mathbb{R}_{\geq 0}$ is an objective function. In our stochastic setting, each element $e \in E$ becomes \emph{active} independently with probability $p \in (0,1]$, resulting in a random active set $R \subseteq E$. A \emph{sparsification algorithm} selects a query set $Q \subseteq E$ prior to observing $R$, aiming to maximize the solution value within the revealed subset $Q \cap R$.

For assignment problems like MKP and GAP, solutions are typically defined as sets of item-knapsack pairs (assignments) rather than simple subsets of items. To align with our sparsification framework, we project these problems onto the ground set of items $E$. We say a subset of items $S \subseteq E$ is \emph{feasible} ($S \in \mathcal{F}$) if there exists a valid assignment of items in $S$ to knapsacks that respects all capacity constraints. Accordingly, we extend the objective function to item sets by defining $f(S)$ as the maximum value achievable by any feasible assignment of the items in $S$.

To rigorously evaluate sparsification algorithms, we must balance approximation quality against the ``size'' of the query set. While approximation is defined as usual, quantifying the size of $Q$ for knapsack-type problems requires nuance. Traditional cardinality-based measures -- which normalize $|Q|$ by the rank of $\mathcal{F}$ -- fail when feasible sets vary dramatically in size, as in capacity-constrained problems. To address this, we introduce the \emph{polyhedral sparsification degree}. Let $\mathcal{P}_\mathcal{F} = \operatorname{conv}(\{\mathbf{1}_S : S \in \mathcal{F}\})$ be the polytope of feasible solutions. We define the degree of a query set $Q$ as the minimum scalar $d \geq 1$ such that the normalized indicator vector $\frac{1}{d}\mathbf{1}_Q$ lies within $\mathcal{P}_\mathcal{F}$. This definition naturally generalizes existing notions\footnote{Sparsification in stochastic integer packing was introduced by Blum et al. \cite{blum2015ignorance} using vertex degree, and later generalized by Maehara and Yamaguchi \cite{maehara2019stochastic, yamaguchi2018stochastic} to cardinality-based measures.} by capturing the intuitive notion of redundancy: a query set has degree $d$ if it can be decomposed into $d$ (fractionally) feasible solutions.

\begin{definition}[Sparsifier]
    An algorithm $\mathcal{A}$ is an \emph{$\alpha$-approximate sparsifier with degree $d$} if, for every problem instance, it returns a (possibly randomized) query set $Q \subseteq E$ that satisfies two conditions: 
    \begin{enumerate}
        \item \emph{Polyhedral Feasibility:} the indicator vector of the query set lies within the polytope scaled by $d$, in other words $\frac{1}{d} \cdot \mathbf{1}_Q \in \mathcal{P}_\mathcal{F}$ (holding for every realization of $Q$);
        \item \emph{Approximation Guarantee:} the expected value of the optimal solution within the queried active elements approximates the full-information optimum, such that 
        $$\E_{R, Q}[\max \{ f(S) : S \in \mathcal{F} \cap 2^{Q \cap R} \}] \geq \alpha \cdot \E_R[\OPT].$$
    \end{enumerate}
\end{definition}

\paragraph{Our contributions.}
Our main result establishes that knapsack-type problems admit effective sparsification despite their computational hardness. We design deterministic and non-adaptive
algorithms that produce $(1-\epsilon)$-approximate sparsifiers with polyhedral degree polynomial in $1/p$ and $1/\epsilon$, remarkably independent of the number of items or constraints. Unlike some prior work \cite{maehara2019stochastic, yamaguchi2018stochastic}, we require our degree to be independent of the total number of variables and constraints.\footnote{To be clear, our definition of degree implicitly allows the number of queries to scale with the cardinality of a typical solution. Our pursuit of constant degree (independent of the number of items or constraints) is akin to requiring ``redundancy'' on the order of a constant number of solutions, as a hedge against uncertainty.}

\begin{itheorem}[Main Result]
\label{thm:informal_main}
For parameters $\epsilon \in (0,1/6)$ and $p \in (0,1]$, there exist efficient algorithms that produce a $(1-O(\epsilon))$-approximate sparsifier for the stochastic Knapsack, Multiple Knapsack, and Generalized Assignment Problems with sparsification degree $\text{poly}(1/\epsilon, 1/p)$.
\end{itheorem}

For the single knapsack problem, our approach employs a bucketization strategy combined with a charging argument. Items are partitioned into buckets based on value, and each bucket is filled to approximately $\text{poly}(1/p, 1/\epsilon)$ times the knapsack capacity. This redundancy allows for a substitution mechanism: if an optimal item is not queried, it can likely be replaced by a queried item from the same bucket. Feasibility is maintained by prioritizing items with smaller weights, with a refined density-based strategy for the smallest value bucket.

Extending this to the Generalized Assignment Problem (GAP) presents a fundamental obstacle: item characteristics are knapsack-dependent, breaking the direct substitutability essential to the single knapsack case. An item ideal for one knapsack may be highly inefficient for another. We overcome this by increasing redundancy -- filling buckets to $\text{poly}(1/p, 1/\epsilon) \cdot (1/\epsilon)$ capacity -- and developing a sophisticated charging argument. When an optimal item cannot be directly substituted (often because potential substitutes are assigned elsewhere), we fractionally distribute its lost value across multiple queried elements. This ensures that we recover nearly the full optimal value without overburdening any specific element in the analysis. 


\paragraph{Implications and experiments.}
We turn to the deterministic setting (i.e., $p = 1$) to reflect on complexity-theoretic implications and practical impact. Our results reveal an intriguing separation: while GAP is APX-hard and MKP lacks an FPTAS, their stochastic counterparts admit efficient $(1-\epsilon)$-approximate sparsifiers with polynomial degree. From the Exponential Time Hypothesis \cite{eth}, one would informally expect that the “hard’’ instances are already sparse, whereas “easy’’ instances may be more dense and amenable to sparsification. It is on those dense non–worst-case instances where our sparsifiers shrink the search space. This can serve as a useful preprocessing step for heuristic algorithms, guiding them to pertinent variables, even in the deterministic setting with $p=1$. 

Empirically, we validate the utility of our approach on synthetic datasets, using practical choices of hyperparameters $(\alpha, \tau, \epsilon, K)$ that are less conservative than the theoretical settings needed to ensure worst-case, high-probability guarantees. These practical choices result in sparsifiers with significantly smaller degree than the theoretical degree guaranteed by our theorem, and on instances where the total number of items exceeds the optimal solution size by a factor of four, our sparsification algorithm reduces runtime by $4\times$ while preserving $99\%$ of the solution quality. Furthermore, under fixed time budgets, branch-and-bound algorithms running on our sparsified instances outperform those running on full datasets by a factor of $5$ in objective value. 

\paragraph{Future directions.}
Finally, our work motivates a broader question regarding integer linear programs (ILPs). Current sparsifiers for general ILPs \cite{yamaguchi2018stochastic} depend on problem dimensions or column sparsity. This leaves open a fundamental challenge:
\begin{openq}
    Can we design $(1-\epsilon)$-approximate sparsifiers for general integer linear programs where the sparsification degree scales polynomially with $1/p$, $1/\epsilon$, and intrinsic structural parameters, but remains independent of the total number of variables and constraints? 
\end{openq}
\noindent Resolving this would significantly advance the understanding of information requirements in stochastic optimization.

\section{Stochastic Assignment Problems}

In this section, we formally define the deterministic assignment problems addressed in this work -- Knapsack, Multiple Knapsack, and Generalized Assignment -- and their stochastic counterparts. Throughout our discussion, we use the terms ``knapsack'' and ``bin'' interchangeably.

\subsection{Problem Definitions}

We begin with the classical single-bin setting and progressively generalize to multiple heterogeneous constraints.

\begin{problem}[Knapsack Problem (KP01)]
    Consider a single knapsack with capacity $C$ and a collection of $n$ items denoted by $E$. Each item $i \in E$ is characterized by a value $v_i$ and a weight $w_i$. The objective is to select a subset of items that maximizes total value while respecting the capacity constraint:
    \[
    \max \sum_{i \in S} v_i \quad \text{subject to} \quad \sum_{i \in S} w_i \leq C, \quad S \subseteq [n].
    \]  
\end{problem}

\begin{problem}[Multiple Knapsack Problem (MKP)]
    Consider $m$ knapsacks where knapsack $j \in [m]$ has capacity $C_j$, and $n$ items $E$ where each item $i \in E$ has value $v_i$ and weight $w_i$. The objective is to assign items to the knapsacks to maximize total value while ensuring no knapsack exceeds its capacity. Formally, we seek disjoint subsets $S_j \subseteq E$ for $j \in [m]$ such that:
    \[
    \max \sum_{j=1}^m \sum_{i \in S_j} v_i \quad \text{subject to} \quad \sum_{i \in S_j} w_i \leq C_j \quad \text{for all } j \in [m].
    \]
\end{problem}

\begin{problem}[Generalized Assignment Problem (GAP)]
    Consider $m$ knapsacks where knapsack $j \in [m]$ has capacity $C_j$, and $n$ items $E$. In contrast to the previous problems, each item $i \in E$ exhibits knapsack-dependent characteristics: when assigned to knapsack $j$, item $i$ contributes value $v_{ij}$ and consumes weight $w_{ij}$. The objective is to find disjoint subsets $S_j \subseteq E$ for $j \in [m]$ that maximize total value subject to capacity constraints:
    \[
    \max \sum_{j=1}^m \sum_{i \in S_j} v_{ij} \quad \text{subject to} \quad \sum_{i \in S_j} w_{ij} \leq C_j \quad \text{for all } j \in [m].
    \]
\end{problem}
\paragraph{Stochastic Variants.}
For any deterministic instance $\Pi$ defined above, its stochastic variant $\langle \Pi, p\rangle$ is characterized by a parameter $p \in (0,1]$. A random subset $R \subseteq E$, termed the \emph{active set}, is generated by sampling each element $e \in E$ independently with probability $p$. The goal is to select a feasible solution using only the items present in the realization $R$ to maximize the objective value.

\paragraph{Assumptions.}
Without loss of generality, we assume that every individual item is feasible. That is, for every item $i \in E$, there exists at least one knapsack $j$ such that the item's weight fits within the capacity ($w_{ij} \leq C_j$).

\subsection{Notation and Feasibility}

We distinguish between items and assignments using non-bold and bold notation, respectively.

\begin{itemize}
    \item \textbf{Assignments ($\mathbf{S}$):} We denote a specific assignment as $\mathbf{S} \subseteq E \times [m]$, where pairs $(i,j) \in \mathbf{S}$ indicate that item $i$ is assigned to knapsack $j$. For a solution to be valid, each item must appear in at most one pair. We define the total value $v(\mathbf{S}) = \sum_{(i,j) \in \mathbf{S}} v_{ij}$ and the weight consumed in knapsack $j$ as $w_j(\mathbf{S}) = \sum_{(i,j) \in \mathbf{S}} w_{ij}$.
    \item \textbf{Item Sets ($S$):} When $S \subseteq E$ denotes a subset of the ground set, it refers purely to the items themselves. We abuse notation slightly in the context of GAP: for a subset of items $S$ implicitly assigned to a specific knapsack $j$, we write $w_j(S) = \sum_{i \in S} w_{ij}$.
\end{itemize}

A crucial distinction in our framework is the definition of feasibility for item sets. While the optimization problems maximize over assignments $\mathbf{S}$, our sparsification framework operates on the ground set $E$. 
We say that an item set $S \subseteq E$ is \emph{feasible} if there exists a valid assignment $\mathbf{S}$ such that the set of assigned items is exactly $S$ (i.e., $S = \{i \mid \exists j, (i,j) \in \mathbf{S}\}$).
Consequently, the feasibility family $\mathcal{F}$ used to define the polytope $\mathcal{P}_{\mathcal{F}}$ consists of all such feasible item sets. 
Therefore, the condition for sparsification degree, $\mathbf{1}_Q \in d \cdot \mathcal{P}_{\mathcal{F}}$, is a constraint on the query set $Q$ in the item space. The specific assignment $\mathbf{S}$ is determined only after the intersection of the query set and active set, $Q \cap R$, is revealed.

\subsection{Hardness and Approximability}
The computational complexity of these problems forms a natural hierarchy. The classical knapsack problem, which was proven NP-hard by Karp \cite{karp1972reducibility}, admits a fully polynomial-time approximation scheme (FPTAS)~\cite{chen2024nearly, ibarra1975fptas, mao2024approximation}. However, the Multiple Knapsack Problem (MKP), even when restricted to just two knapsacks, does not admit an FPTAS~\cite{chekuri2005mkp}. The Generalized Assignment Problem (GAP) is APX-hard; Chakrabarty and Goel \cite{chakrabarty2010gapbound} demonstrated that it cannot be approximated better than a factor of $10/11$ unless $\mathrm{P}=\mathrm{NP}$. The current best polynomial-time approximation algorithm for GAP achieves a ratio of $1 - 1/e + \varepsilon$ for a small constant $\varepsilon > 0$~\cite{feige2006allocation}.

\section{Warm-up: Knapsack Sparsification}
We begin with a sparsification algorithm for the classical knapsack problem. This foundational case introduces key techniques that will be extended to more complex scenarios throughout the paper.

The algorithm employs a bucket-based strategy that partitions elements by value into geometrically increasing ranges. The fundamental principle ensures that the queried subset of items in each bucket is sufficiently large to either contain all items within the bucket or independently fill the knapsack constraint with high probability. In the former case, all elements remain available in the query set, ensuring that no item from the optimal solution is missed. In the latter scenario, the objective is to guarantee that whenever an item from the optimal solution is unavailable in the query set, a suitable substitute can be found.

To achieve this, the algorithm applies distinct selection criteria: for low-value elements (bucket $B_0$), it prioritizes items with high value-to-weight density, while for high-value buckets ($B_k$ with $k \geq 1$), it selects the lightest elements to maximize the probability of accommodating valuable items within capacity constraints.

\begin{algorithm}[h]
    \caption{Bucket-Based Sparsifier for Knapsack}
    \label{alg:bucket_KP_sparsifier}
    \begin{algorithmic}[1]
        \STATE \textbf{Input:} accuracy parameter $\epsilon \in (0,1/3)$.
        \STATE Set concentration factor $\tau(\eps) = 1 + \ln\left(\frac{1}{\eps}\right) + \sqrt{ \ln^2\left(\frac{1}{\eps}\right) + 2 \ln\left(\frac{1}{\eps}\right) }$ as defined in Lemma~\ref{lem:bucket_weight_guarantee}.
        \STATE Estimate $\E_R[\OPT]$ via sampling to obtain $(1 \pm \epsilon)$-approximation $M$.
        \STATE Initialize query set \( Q \gets \emptyset \).
        \STATE Set number of buckets $K := \lceil \frac{1}{\epsilon} \log(\frac{1}{\epsilon p}) \rceil$.
        \STATE Partition elements into buckets:
        $$B_k = \begin{cases}
            \{i \in E : v_i \leq \epsilon M\} & \text{if } k = 0 \\
            \{i \in E : v_i \in (\epsilon(1+\epsilon)^{k-1} M, \epsilon(1+\epsilon)^k M]\} & \text{if } 1 \le k \le K
        \end{cases}$$
        \STATE Sort $B_0$ by decreasing value density $v_i/w_i$.
        \STATE Define $\overline{B}_0$ as the shortest prefix of $B_0$ with total weight $\sum_{i \in \overline{B}_0} w_i \geq \frac{\tau(\epsilon)}{p} C$, or set $\overline{B}_0 = B_0$ if no such prefix exists.
        \STATE Add to query set: $Q \leftarrow Q \cup \overline{B}_0$.
        
        \FOR{\( k = 1 \) to \( K \)}
            \STATE Sort \( B_k \) in ascending order of weight \( w_i \).
            \STATE Let \( \overline{B}_k \) be the minimal prefix of $B_k$ such that
            \(
            \sum_{i \in \overline{B}_k} w_i \geq \frac{\tau(\epsilon)}{p} \cdot C
            ,
            \)
            or let \( \overline{B}_k := B_k \) if no such prefix exists.
            \STATE Update \( Q \gets Q \cup \overline{B}_k \).
        \ENDFOR
        
        \STATE \textbf{Output:} Return the query set \( Q \).
    \end{algorithmic}
\end{algorithm}

Before we proceed with proving that Algorithm~\ref{alg:bucket_KP_sparsifier} is a good sparsifier, we first establish a key concentration result for our sparsifier analysis. The proof follows from standard concentration inequalities and is presented in Appendix~\ref{app:weight-concentration}.

\begin{lemma}[Activation Weight Concentration] \label{lem:bucket_weight_guarantee}
Let \( S \subseteq E \) be a set of elements, each with weight \( w_i \), such that
\[
\sum_{i \in S} w_i \geq \frac{\tau(\eps)}{p} \cdot C,
\]
where
\(
\tau(\eps) := 1 + \ln(1/\eps) + \sqrt{ \ln^2(1/\eps) + 2 \ln(1/\eps) }.
\)
Then, if each item is active independently with probability \( p \), the total weight of active items in \( S \) is at least \( C \) with probability at least \( 1 - \eps \). 
\end{lemma}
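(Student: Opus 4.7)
The plan is to reduce the statement to a standard multiplicative Chernoff lower-tail bound, normalize to bring all weights into $[0,1]$, and then verify that the choice of $\tau(\eps)$ is exactly what is needed so that the Chernoff exponent exceeds $\ln(1/\eps)$.

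First, I would introduce independent activation indicators $Z_i \sim \mathrm{Bernoulli}(p)$ for $i \in S$ and let $X = \sum_{i \in S} w_i Z_i$ denote the total active weight in $S$. By the hypothesis, $\mu := \E[X] = p \sum_{i \in S} w_i \geq \tau(\eps)\, C$. Using the standing assumption that every item is individually feasible (so $w_i \leq C$), I normalize by $C$: let $a_i := w_i/C \in [0,1]$ and $Y := X/C = \sum_{i \in S} a_i Z_i$. Then each summand $a_i Z_i$ takes values in $[0,1]$, the $a_i Z_i$ are mutually independent, and $\mu_Y := \E[Y] \geq \tau(\eps)$. The event $\{X \geq C\}$ translates to $\{Y \geq 1\}$.

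Next, I would apply the multiplicative Chernoff lower-tail bound for independent $[0,1]$-valued random variables:
\[
\Pr\bigl[Y \leq (1-\delta)\mu_Y\bigr] \;\leq\; \exp\!\bigl(-\delta^2 \mu_Y / 2\bigr), \qquad \delta \in (0,1).
\]
Setting $(1-\delta)\mu_Y = 1$, i.e.\ $\delta = 1 - 1/\mu_Y$ (valid since $\mu_Y \geq \tau(\eps) > 1$), yields
\[
\Pr[Y < 1] \;\leq\; \exp\!\left(-\frac{(\mu_Y - 1)^2}{2\mu_Y}\right).
\]
It therefore suffices to show that $\mu_Y \geq \tau(\eps)$ implies $(\mu_Y - 1)^2/(2\mu_Y) \geq \ln(1/\eps)$.

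Finally, I would verify this last inequality by rearranging it as the quadratic $\mu_Y^2 - 2\bigl(1 + \ln(1/\eps)\bigr)\mu_Y + 1 \geq 0$. Writing $a := \ln(1/\eps)$, the larger root of $\mu^2 - 2(1+a)\mu + 1$ is $(1+a) + \sqrt{(1+a)^2 - 1} = 1 + a + \sqrt{a^2 + 2a}$, which is precisely $\tau(\eps)$. Since the quadratic is nonnegative for $\mu_Y$ at least this root, and since $\mu \mapsto (\mu-1)^2/(2\mu)$ is increasing for $\mu \geq 1$, the bound $\mu_Y \geq \tau(\eps)$ yields the desired $\Pr[X < C] \leq \eps$.

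The only delicate step is the last one: the definition of $\tau(\eps)$ is tuned exactly so that the Chernoff exponent equals $\ln(1/\eps)$ at the boundary, so the main thing to get right is identifying $\tau(\eps)$ as the larger root of the relevant quadratic. All other steps are routine applications of standard tools.
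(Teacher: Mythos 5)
Your proposal is correct and follows essentially the same route as the paper's proof: normalize by $C$ so the summands lie in $[0,1]$, apply the multiplicative Chernoff lower-tail bound with $\delta = 1 - 1/\mu$, and identify $\tau(\eps)$ as the larger root of the quadratic $\mu^2 - 2(1+\ln(1/\eps))\mu + 1$. Your explicit verification of the quadratic and the monotonicity of $(\mu-1)^2/(2\mu)$ is if anything slightly more detailed than the paper's.
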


\begin{theorem}[Knapsack Sparsifier Performance]
\label{thm:knapsack_sparsifier}
For parameters $\epsilon \in (0,1/3)$ and $p \in (0,1]$, Algorithm~\ref{alg:bucket_KP_sparsifier} produces a $(1-4\epsilon)$-approximate sparsifier for the stochastic knapsack problem with sparsification degree
\[
O\left(\frac{\log(1/\epsilon)\cdot \log(1/(\epsilon p))}{\epsilon p}\right).
\]

\end{theorem}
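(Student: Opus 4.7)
The plan has two main components: the polyhedral degree bound and the $(1-4\epsilon)$-approximation.

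\textbf{Degree bound.} Each bucket set $\overline{B}_k$ has total weight at most $\tau(\epsilon) C/p + C$: the halting condition in the algorithm overshoots by at most one item, whose weight is at most $C$ by the individual-feasibility assumption. Summing over the $K+1 = O(\log(1/(\epsilon p))/\epsilon)$ buckets yields $w(Q) = O\!\left(\log(1/\epsilon)\log(1/(\epsilon p))/(\epsilon p)\right) C$. A greedy first-fit packing then partitions $Q$ into $d = O(w(Q)/C)$ knapsack-feasible subsets $S_1, \ldots, S_d$; the identity $\frac{1}{d}\mathbf{1}_Q = \frac{1}{d}\sum_{j=1}^d \mathbf{1}_{S_j}$ exhibits $\frac{1}{d}\mathbf{1}_Q \in \mathcal{P}_\mathcal{F}$ as a convex combination, giving the claimed degree.

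\textbf{Approximation via bucket-wise substitution.} Let $O_R$ be the offline optimum for realization $R$, and partition $O_R = \bigcup_k O_k$ by bucket. Split each $O_k = L_k \sqcup H_k$ with $L_k := O_k \cap \overline{B}_k$ (captured by $Q$) and $H_k := O_k \setminus \overline{B}_k$ (missed). If $H_k \ne \emptyset$ then $\overline{B}_k \subsetneq B_k$, so by construction $w(\overline{B}_k) \ge \tau(\epsilon) C/p$; define the good event $\mathcal{E}_k := \{w(\overline{B}_k \cap R) \ge C\}$, for which Lemma~\ref{lem:bucket_weight_guarantee} yields $\Pr[\mathcal{E}_k] \ge 1 - \epsilon$. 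For $k \ge 1$, items in $\overline{B}_k$ are the lightest in $B_k$ (so every candidate substitute in $\overline{B}_k$ weighs no more than every item in $H_k$), and items in $B_k$ share values within a $(1+\epsilon)$-factor. Under $\mathcal{E}_k$ the plan is to injectively map each $i \in H_k \cap R$ to a distinct $\sigma(i) \in (\overline{B}_k \cap R) \setminus L_k$. Taking $S := \bigcup_k \big((L_k \cap R) \cup \sigma(H_k \cap R)\big) \subseteq Q \cap R$, the lightness of substitutes forces $w(S) \le w(O_R) \le C$ (feasibility), and the within-bucket value gap forces $v(S) \ge v(O_R)/(1+\epsilon)$. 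Bucket $B_0$ is handled analogously with value-density replacing weight, losing at most one item of value $\le \epsilon M$ per realization. Failure events $\mathcal{E}_k^c$ are charged in expectation to $\epsilon \cdot \E[v(O_k)]$ per bucket (thereby avoiding a loose union bound over $K$ events); summing gives an additive loss of $\epsilon \cdot \E[v(O_R)]$. Combined with $|M - \E[v(O_R)]| \le \epsilon \E[v(O_R)]$ from the sampling estimate, the net approximation ratio is $(1-4\epsilon)$.

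\textbf{Main obstacle.} The crux is constructing the injective substitution $\sigma$ under $\mathcal{E}_k$: I need $|(\overline{B}_k \cap R) \setminus L_k| \ge |H_k \cap R|$. Setting $w^*_k := \max_{i \in \overline{B}_k} w_i$, the heavy-item capacity bound $|H_k| \le C/w^*_k$ (since items in $H_k$ each weigh at least $w^*_k$ and $w(O_k) \le C$) pairs with $|\overline{B}_k \cap R| \ge C/w^*_k$ (implied by $\mathcal{E}_k$ and the per-item weight cap $w^*_k$) to nearly give the required count. The delicate part is correctly accounting for the $|L_k|$ items already committed to $L_k$ and absorbing any residual slack of a single overshoot item per bucket into the $O(\epsilon)$ error budget; since items in bucket $k$ have value at most $\epsilon(1+\epsilon)^k M$, any overshoot must be charged against the value that bucket $k$ already contributes to $v(O_R)$ rather than additively, and carrying out this local-to-global accounting cleanly is the step I expect to require the most care.
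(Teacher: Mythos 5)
Your proposal follows essentially the same route as the paper: the same weight bound $w(Q) = O(\tau(\epsilon) K / p)\cdot C$ for the degree (the paper passes through the LP relaxation and a factor-2 integrality gap where you use first-fit packing, but both give $O(w(Q)/C)$), and the same bucket-wise substitution argument — weight-dominating, value-comparable replacements for $k\ge 1$ under the concentration event of Lemma~\ref{lem:bucket_weight_guarantee}, density-greedy for $B_0$ with a single-item loss of at most $\epsilon M$, and per-bucket (rather than union-bound) charging of failure events. The injective-matching counting you flag as the main obstacle does close, and is the same step the paper leaves implicit: each missed item weighs at least $w^*_k := \max_{i\in\overline{B}_k} w_i$, so $|H_k|\,w^*_k \le C - w(L_k)$, while $w(\overline{B}_k\cap R)\ge C$ and the per-item cap $w^*_k$ give $|(\overline{B}_k\cap R)\setminus L_k| \ge (C - w(L_k))/w^*_k \ge |H_k|$, with no additional overshoot loss needed in the high-value buckets.
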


\begin{proof}
Let $M$ denote our $(1 \pm \epsilon)$-approximation to $\E_R[\OPT]$, satisfying $(1-\epsilon) \cdot \E_R[\OPT] \leq M \leq (1+\epsilon) \cdot \E_R[\OPT]$ with probability at least $1-\epsilon$. Let $\EE_{\text{est}}$ be the event that this inequality holds and so $M$ is estimated within $(1\pm\epsilon)$ approximation. Condition on the event $\EE_{\text{est}}$ holds.

\paragraph{Sparsification Degree Analysis.}
We first bound the size of the query set $Q$. The algorithm selects at most $K+1$ buckets. For each bucket $\overline{B}_k$, the total weight is bounded by $\sum_{i \in \overline{B}_k} w_i \leq C \cdot \frac{\tau(\epsilon)}{p} + \max_{i} w_i \leq C \left( \frac{\tau(\epsilon)}{p} + 1 \right)$. Summing over all $K = O(\frac{1}{\epsilon} \log(\frac{1}{\epsilon p}))$ buckets, the total weight of the query set satisfies:
\[
w(Q) = \sum_{k=0}^K w(\overline{B}_k) \leq O\left( \frac{\tau(\epsilon)}{p} \cdot K \right) \cdot C.
\]
To translate this weight bound into our polyhedral sparsification degree, we consider the linear programming relaxation of the knapsack polytope, $\mathcal{P}_{LP} = \{ x \in [0,1]^E \mid \sum x_i w_i \leq C \}$. Our calculation shows that $\mathbf{1}_Q \in d' \cdot \mathcal{P}_{LP}$ for $d' = O(\frac{w(Q)}{C})$. However, the sparsification degree requires embedding into the convex hull of integer solutions, $\mathcal{P}_{\mathcal{F}}$. We know that the integrality gap of the knapsack relaxation is bounded by 2 (assuming singletons are feasible) \cite{williamson2011design}. Therefore, $\mathcal{P}_{LP} \subseteq 2 \cdot \mathcal{P}_{\mathcal{F}}$, implying that the sparsification degree is at most $2d'$, which remains $O\left(\frac{\log(1/\epsilon)\cdot \log(1/(\epsilon p))}{\epsilon p}\right)$.

\paragraph{Approximation Analysis.}
Consider any realization $R \subseteq E$ and let $S^* \subseteq R$ denote an optimal solution with value $\OPT(R) = \sum_{i \in S^*} v_i$ and weight $w(S^*) \leq C$.

We partition the optimal solution as $S^* = S_0 \cup S_1 \cup \cdots \cup S_K$ where $S_k = S^* \cap B_k$, and define $S^{\text{low}} = S_0$ (low-value items) and $S^{\text{high}} = \bigcup_{k=1}^K S_k$ (high-value items).
Completeness of this partition follows from the range of buckets. Since each item $i$ is active with probability $p$, $\E[\OPT] \geq p v_i$, implying $v_i \leq \E[\OPT]/p$. Our largest bucket boundary is at least $M/p = \E[\OPT]/p$, ensuring all items are covered.

\emph{High-Value Item Recovery.} For each bucket $k \geq 1$, let $Q_k = \overline{B}_k \cap R$ represent the active queried items. By Lemma~\ref{lem:bucket_weight_guarantee}, we have $w(Q_k) \geq C$ with the probability of at least $1-\epsilon$ when $w(\bar{B_k})$ is sufficiently large (equivalently when $\overline{B}_k \neq B_k$). Define the event $\EE_k$ as $\overline{B}_k = B_k$ or $w(Q_k) \geq C$.

Conditioning on this event $\EE_k$, since $\overline{B}_k$ contains the lightest items in $B_k$, we can establish a matching $\phi_k: S_k \rightarrow Q_k$ such that each item $i \in S_k$ maps to some $\phi_k(i) \in Q_k$ with $w_i \geq w_{\phi_k(i)}$ and $v_i \leq (1+\epsilon) \cdot v_{\phi_k(i)}$ (due to items being in the same value bucket). Notice that such a matching trivially exists when $\overline{B}_k = B_k$.
This matching implies:
\begin{align*}
    \E_R\left[ \max_{\substack{T_k \subseteq Q_k\\w(T_k) \leq w(S_k)}} v(T_k) \mid \EE_k \right] \geq \frac{1}{1+\epsilon} \cdot \E_R[v(S_k)].
\end{align*}
Since $\Pr[\EE_k] \geq (1-\epsilon)$, we obtain:
\begin{equation}
    \label{eq:high_value_bound}
    \E_R\left[ \max_{\substack{T_k \subseteq Q_k\\w(T_k) \leq w(S_k)}} v(T_k) \right] \geq (1-\epsilon) \cdot \frac{1}{1+\epsilon} \cdot \E_R[v(S_k)] \geq (1-2\epsilon) \cdot \E_R[v(S_k)],
\end{equation}
where the final inequality uses $1/(1+\epsilon) \geq 1-\epsilon$ for small $\epsilon$.

\emph{Low-Value Item Recovery.} For bucket $B_0$, we apply greedy selection on $\overline{B}_0 \cap R$ by value density up to a total capacity of $w(S_0) := \sum_{i \in S_0} w_i$. Let $T_0$ denote this greedy solution.

Each item in $B_0$ has value at most $\epsilon M$. When $\EE_{\text{est}}$ holds, the maximum value in $B_0$ is at most $\epsilon(1+\epsilon) \E_R[\OPT] \leq 2\epsilon \E_R[\OPT]$. By fractional knapsack analysis, the greedy algorithm achieves value within $2\epsilon \E_R[\OPT]$ of the fractional optimum:
$$\E_R[v(T_0) \mid \EE_{\text{est}}] \geq \E_R[v(S_0) \mid \EE_{\text{est}}] - 2\epsilon \E_R[\OPT].$$
Since $\Pr[\EE_{\text{est}}] \geq 1-\epsilon$:
$$\E_R[v(T_0)] \geq (1-\epsilon) \cdot \E_R[v(S_0)] - 2 \epsilon \cdot \E_R[\OPT].$$

\emph{Final Approximation Bound.} Combining our bounds for high-value and low-value items, let $T_k$ denote the maximum-value feasible subset of $Q_k$ with \( w(T_k) \leq w(S_k) \) for each $k \geq 1$, and define $T = \bigcup_{k=0}^K T_k$. Then:
\begin{align*}
    \E_R[v(T)] 
        &= \sum_{k=0}^K \E_R[v(T_k)]\\
        &\geq (1-\epsilon) \cdot \E_R[v(S_0)] - 2\epsilon \cdot \E_R[\OPT] + (1-2\epsilon) \cdot \sum_{k=1}^K \E_R[v(S_k)]\\
        &= (1-2\epsilon) \cdot \E_R[v(S^*)] - 2\epsilon \cdot \E_R[\OPT]\\
        &\geq (1-4\epsilon) \cdot \E_R[\OPT].
\end{align*}

Finally, since $w(T_k) \leq w(S_k)$ for each $k$, we have $w(T) = \sum_{k=0}^K w(T_k) \leq \sum_{k=0}^K w(S_k)$ $= w(S^*) \leq C$, so $T$ is feasible.

\end{proof}

\section{Sparsifier for the General Assignment Problem}

We now extend our sparsification framework to the Generalized Assignment Problem (GAP), which encompasses the Multiple Knapsack Problem as a special case.

\subsection{Key Challenges and Algorithmic Innovations}

Extending our knapsack sparsifier to the GAP presents fundamental challenges that require a complete algorithmic redesign. The core difficulty stems from knapsack-dependent item characteristics, which destroy the substitutability properties essential to our knapsack analysis.

In the knapsack problem, items have fixed values $v_i$ and weights $w_i$, enabling a global bucketing scheme where items with similar characteristics substitute seamlessly. GAP breaks this structure: items exhibit knapsack-specific parameters $(v_{ij}, w_{ij})$, so an item valuable for one knapsack may be worthless for another. This forces us to maintain separate buckets $B_{j,k}$ for each knapsack-bucket pair, immediately complicating the design.

The main challenge arises from cross-knapsack substitutability issue. Two items may belong to the same bucket for knapsack $j$ due to similar values $v_{ij}$, yet reside in different buckets for knapsack $j'$ due to vastly different values $v_{ij'}$. When our sparsifier fills bucket $B_{j,k}$ based on suitability for knapsack $j$, these items fail as substitutes if the optimal solution assigns corresponding items to different knapsacks. This dependency fundamentally breaks the matching argument underlying our knapsack analysis.

We address this through enhanced redundancy combined with a more involved charging argument. Our GAP sparsifier fills each bucket to $\poly(1/\epsilon)$ times the knapsack capacity, creating substantial redundancy in the query set. When an optimal item $i^*$ assigned to knapsack $j^*$ is missing from the query set, we first seek a direct substitute among queried items with similar weight and value characteristics for knapsack $j^*$. When direct substitution fails -- typically because suitable substitutes are assigned to different knapsacks in the optimal solution -- we leverage the redundancy to fractionally charge value $v_{i^*j^*}$ across $\textrm{poly}(1/\epsilon)$ other queried items. This ensures no queried item receives excessive charge (at most $1 + \textrm{poly}(\epsilon)$ times its own value) while recovering nearly the optimal value.

A secondary challenge is ensuring the completeness of the bucket structure. In the knapsack setting,  $\mathbb{E}_R[\mathrm{OPT}]/p$ provides a natural upper bound for feasible item values. In GAP, however, an item $i$ may be active with probability $p$, yet appear in a specific knapsack $j$'s optimal assignment with significantly lower probability, making localized value bounds difficult to establish.

To address this, we introduce a \emph{super bucket} for each knapsack with no upper bound on its value range. While items in this bucket may have arbitrarily large values and lack mutual substitutability, we prove a surprising property: even if the reconstruction algorithm makes \emph{no attempt} to substitute for missed items in the super bucket, the aggregate loss remains globally bounded.

\subsection{Algorithm Design}

Our GAP sparsifier employs a bucket-based approach that incorporates substantial redundancy to handle cross-knapsack dependencies. The complete procedure is presented in Algorithm~\ref{alg:query_set_construction}.

\begin{algorithm}[!htb]
\caption{Bucket-Based Sparsifier for GAP}
\label{alg:query_set_construction}
\begin{algorithmic}[1]
\STATE \textbf{Oracle access:} For each knapsack \( j \in [m] \), assume oracle access to \( \E_R[\mathrm{OPT}_j] \), and denote it by \( M_j \).
\STATE \textbf{Input:} accuracy parameter $\epsilon \in (0,1/3)$.
\STATE Set concentration factor $\tau(\eps) = 1 + \ln\left(\frac{1}{\eps}\right) + \sqrt{ \ln^2\left(\frac{1}{\eps}\right) + 2 \ln\left(\frac{1}{\eps}\right) }$ as defined in Lemma~\ref{lem:bucket_weight_guarantee}.
\STATE Initialize query set \( Q \gets \emptyset \).
\STATE Define the number of buckets \( K := \left\lceil \frac{2}{\epsilon^2} \log \left( \frac{1}{\epsilon^3} \right) \right\rceil \).

\STATE \textbf{Bucket Formation:}
\FOR{each knapsack \( j \in [m] \)}
    \STATE Set buckets:
    \[
    B_{j,k} := 
    \begin{cases}
        \{ (i,j) \mid i \in E,\ v_{ij} \le \epsilon^2 M_j \}, & k = 0, \\[6pt]
        \{ (i,j) \mid i \in E,\ v_{ij} \in (\epsilon^2 (1+\epsilon^2)^{k-1} M_j,\ \epsilon^2 (1+\epsilon^2)^k M_j] \}, & 1 \le k \le K, \\[6pt]
        \{ (i,j) \mid i \in E,\ v_{ij} > \epsilon^2 (1+\epsilon^2)^K M_j \ \ge\ M_j / \epsilon \}, & k = K+1.
    \end{cases}
    \]
\ENDFOR

\STATE \textbf{Preprocessing:} For each \( (i,j) \), define \( \beta(i,j) \) as the bucket index such that \( (i,j) \in B_{j, \beta(i,j)} \).

\STATE \textbf{Iterative Assignment:}
\STATE Define the number of rounds \( \alpha := \frac{1}{\epsilon} \).
\FOR{round \( t = 1 \) to \( \alpha - 1 \)} 
    \FOR{each knapsack \( j \in [m] \), each bucket \( k \in \{0,1,\dots,K+1\} \)}
        \STATE Initialize remaining capacity: \( b(j,k) \gets \frac{\tau(\epsilon^2)}{p} \cdot C_j\). 
    \ENDFOR
    
    \WHILE{there exists \( (i,j) \) with \( i \notin Q \), \( \beta(i,j) \geq 1 \), \( w_{ij} \leq C_j \), and \( b(j, \beta(i,j)) > 0 \)}
        \STATE Select \( (i,j) \) with minimal \( w_{ij} \) among valid pairs.
        \STATE Update \( Q \gets Q \cup \{ i \} \) and remaining capacity \( b(j,\beta(i,j)) \gets b(j,\beta(i,j)) - w_{ij} \).
    \ENDWHILE

    \WHILE{there exists \( (i,j) \) with \( i \notin Q \), \( \beta(i,j) = 0 \), \( w_{ij} \leq C_j \), and \( b(j,0) > 0 \)}
        \STATE Select \( (i,j) \) maximizing \( v_{ij}/w_{ij} \) among valid pairs.
        \STATE Update \( Q \gets Q \cup \{ i \} \) and remaining capacity \( b(j,0) \gets b(j,0) - w_{ij} \).
    \ENDWHILE
\ENDFOR

\STATE \textbf{Output:} Return the constructed query set \( Q \).
\end{algorithmic}
\end{algorithm}

Beyond the algorithmic complexities discussed above, our GAP sparsifier requires access to knapsack-level value estimates. This presents an additional challenge compared to the knapsack setting, where estimating the global expectation $\E_R[\mathrm{OPT}]$ suffices for bucket boundary determination. In GAP, assignment decisions are interdependent across knapsacks, creating a more complex estimation problem.

Formally, for each realization $R$ of the active set, let $\mathcal{OPT}(R)$ denote the set of all optimal feasible GAP assignments on $R$. We fix once and for all an arbitrary but deterministic tie-breaking rule that selects a canonical optimal assignment $\mathbf{OPT}(R) \in \mathcal{OPT}(R)$. We then define $\mathrm{OPT}(R)$ as the total value of assignment $\mathbf{OPT}(R)$ and $\mathrm{OPT}_j(R)$ as the total value of items assigned to knapsack $j$ in $\mathbf{OPT}(R)$, so that
\[
\mathrm{OPT}(R) = \sum_{j=1}^m \mathrm{OPT}_j(R).
\]
Although the individual quantities $\mathrm{OPT}_j(R)$ may vary with the choice of tie-breaking rule, the equality above implies that
\[
\sum_{j=1}^m \mathbb{E}_R[\mathrm{OPT}_j(R)]
\;=\;
\mathbb{E}_R[\mathrm{OPT}(R)],
\]
and thus the aggregate contribution across knapsacks is invariant.
The relative contribution of each knapsack can still vary substantially across different realizations and approximate solutions, which makes estimating the individual knapsack contributions $\E_R[\mathrm{OPT}_j]$ from global information alone highly challenging.

To address this issue, we assume oracle access to the expected knapsack-level optima $\E_R[\mathrm{OPT}_j]$ for all $j \in [m]$. Under this assumption, we establish the following performance guarantee:

\begin{theorem}[GAP Sparsifier]
\label{thm:main-approximation}
For parameters $\epsilon \in (0,1/6)$ and $p \in (0,1]$, assume oracle access to the expected knapsack optima $\E_R[\mathrm{OPT}_j]$ for each knapsack $j \in [m]$. Then Algorithm~\ref{alg:query_set_construction} produces a $\left(1 - 6\epsilon \right)$-approximate sparsifier for the stochastic GAP problem with sparsification degree
\[
O\left(\frac{\log^{2}(1/\epsilon)}{\epsilon^{3} p}\right).
\]
\end{theorem}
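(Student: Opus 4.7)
The proof decomposes into a degree bound and an approximation bound.

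For the degree, each item enters $Q$ at most once (by the condition $i \notin Q$), through exactly one pair $(i, j)$ in one round $t$. For a fixed knapsack $j$ and bucket $k$, each round adds items via $(\cdot, j)$ with $\beta(\cdot, j) = k$ totaling at most $\tau(\epsilon^2)/p \cdot C_j + C_j$ weight into knapsack $j$, where the additive $C_j$ absorbs the item that may push $b(j, k)$ below zero. Summing over the $K + 2$ buckets and $\alpha - 1$ rounds gives $O(\alpha K \tau(\epsilon^2)/p) \cdot C_j$ weight of items tagged to knapsack $j$. Slicing each such collection into chunks of weight at most $C_j$ decomposes $Q$ into $d = O(\alpha K \tau(\epsilon^2)/p) = O(\log^2(1/\epsilon)/(\epsilon^3 p))$ feasible GAP assignments, which certifies $\mathbf{1}_Q \in d \cdot \mathcal{P}_{\mathcal{F}}$.

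For the approximation guarantee, I would fix a realization $R$, decompose its canonical optimum $\mathbf{OPT}(R)$ into $S^*_{j, k} := \{i : (i, j) \in \mathbf{OPT}(R),\ \beta(i, j) = k\}$, and recover value in three regimes. The low-value regime $k = 0$ mirrors the knapsack warm-up: density-greedy over $B_{j, 0} \cap Q \cap R$ loses at most $O(\epsilon^2 M_j)$ per knapsack, aggregating to $O(\epsilon) \E_R[\mathrm{OPT}]$. The super-bucket regime $k = K + 1$ uses a tail argument: every super-bucket item has value exceeding $M_j/\epsilon$, so Markov's inequality on $\mathrm{OPT}_j$ bounds the probability that any super-bucket item appears in $\mathbf{OPT}(R)$ for knapsack $j$ by $\epsilon$; using the lightest-weight queried super-bucket item as substitute still contributes at least $M_j/\epsilon$, so a careful accounting yields an aggregate loss of $O(\epsilon) \E_R[\mathrm{OPT}]$.

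The main regime $1 \leq k \leq K$ is the crux. For each round $t$ and each $(j, k)$, Lemma~\ref{lem:bucket_weight_guarantee} (instantiated with accuracy $\epsilon^2$) guarantees that the active portion of the round-$t$ batch $Q^{(t)}_{j, k}$ has total weight at least $C_j$ with probability at least $1 - \epsilon^2$. On this event, sorting by weight produces a weight-respecting injection $S^*_{j, k} \hookrightarrow Q^{(t)}_{j, k} \cap R$ losing at most a multiplicative $(1 + \epsilon^2)$ factor in value, since bucket values differ by at most that factor. The difficulty is that a single queried item $i$ can serve as a candidate substitute across many $(j, k)$ pairs---an item sits in a different bucket for each knapsack it is feasible for---yet can be assigned to only one knapsack in the reconstructed GAP solution. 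My plan is to treat the $\alpha - 1 = 1/\epsilon - 1$ rounds as independent redundant substitution pools: distribute each missed $v_{i^*, j^*}$ fractionally across its substitutes in all $\alpha - 1$ rounds, and then argue via a Hall-/defect-type condition that no queried item receives a total charge exceeding $(1 + O(\epsilon))$ times its own value for its assigned knapsack.

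The main obstacle is this cross-knapsack charging. Translating the $\alpha$-fold per-bucket redundancy into a consistent, globally feasible GAP assignment is where the factor $\alpha = 1/\epsilon$ in the degree is essential and where the asymmetry---items are single-use globally but multi-listed across per-knapsack buckets---must be resolved. A naive union bound over all $O(\alpha K m)$ round-bucket events is too weak when $m$ is large, so the argument must be arranged per-knapsack, with per-bucket failure probabilities aggregated through the expectation over $\mathrm{OPT}_j$ rather than a universal bound, and only the relative loss (measured against $\E_R[\mathrm{OPT}_j]$) propagated when summing across $j$. Combining the three regimes and tracking constants carefully yields the stated $(1 - 6\epsilon)$ approximation.
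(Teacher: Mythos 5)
Your overall architecture (weight-based degree bound, three value regimes, $\alpha$-fold redundancy to absorb cross-knapsack conflicts) matches the paper's, and your degree argument is fine --- in fact decomposing $Q$ directly into $O(\alpha K \tau(\epsilon^2)/p)$ feasible item sets (combining one weight-$\le C_j$ chunk per knapsack into each set) avoids the integrality-gap factor the paper invokes. But two steps in the approximation bound have genuine gaps.

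First, the super-bucket argument as you state it fails. Markov's inequality does give $\Pr[\mathrm{OPT}_j > M_j/\epsilon] \le \epsilon$, but a small probability of the bad event does not bound the \emph{expected lost value}, because super-bucket items have no upper bound on $v_{ij}$: the loss for knapsack $j$ conditional on the event can be the entire $\mathrm{OPT}_j$, so summing $\E[\mathrm{OPT}_j \cdot \mathbf{1}[\mathrm{OPT}_j > M_j/\epsilon]]$ over $j$ only yields $\E[\mathrm{OPT}]$, not $O(\epsilon)\E[\mathrm{OPT}]$. Substituting with "the lightest-weight queried super-bucket item" does not repair this, since two super-bucket items need not have comparable values. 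The paper instead makes \emph{no} substitution in super buckets and proves a global bound: letting $J$ be the (deterministic) set of knapsacks whose super bucket contains an unqueried feasible item, the loss is at most $\sum_{j\in J} M_j$, and an exchange argument --- a queried active super item $i$ with $v_{ij} > M_j/\epsilon$ exists with probability $1-\epsilon^2$, and by optimality $v_i^{\OPT} + v(\mathbf{OPT}_j) > M_j/\epsilon$ --- combined with disjointness of the queried buckets across $j$ forces $\sum_{j\in J} M_j \le 3\epsilon\,\E_R[v(\mathbf{OPT})]$. That optimality-based exchange is the missing idea.

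Second, the crux you correctly identify --- that a queried item is multi-listed across per-knapsack buckets but single-use globally --- is left unresolved: you propose a Hall-type fractional charging but do not exhibit the matching or verify the deficiency condition. The paper's resolution is constructive and sequential: for each missed item in bucket $(j,k)$ with no unused direct substitute, it assembles a bundle of one candidate per round $t$ (a queried item of $\bar B_{j,k,t}$ that is used by $\mathbf{OPT}$ elsewhere and not yet consumed by $\mathbf{ALG}$), whose existence follows from the per-round \textsc{SizeMatch} property; $\mathbf{ALG}$ then keeps the best $\alpha-1$ of these $\alpha$ items (either dropping the missed item or reassigning the least valuable candidate to knapsack $j$), losing a $1/\alpha = \epsilon$ fraction of that bundle's value. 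Consumed candidates are removed from the pool, and the buckets $\bar B_{j,k}$ are disjoint across $(j,k)$, which is exactly what prevents double-charging. Your observation that failure probabilities must be charged multiplicatively per bucket rather than union-bounded over $m$ is correct and is how the paper proceeds, but without the explicit bundle mechanism the $(1-O(\epsilon))$ guarantee is not established.
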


Since the Multiple Knapsack Problem is a special case of GAP, we obtain the following immediate corollary:

\begin{corollary}[Multiple Knapsack Sparsifier]
\label{cor:mkp-sparsification}
Under the same oracle assumption, Algorithm~\ref{alg:query_set_construction} produces a $\left(1 - 6\epsilon \right)$-approximate sparsifier for the stochastic Multiple Knapsack problem, with sparsification degree
\[
O\left(\frac{\log^{2}(1/\epsilon)}{\epsilon^{3} p}\right).
\]
\end{corollary}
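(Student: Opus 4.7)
The plan is to prove Theorem~\ref{thm:main-approximation} in detail, since Corollary~\ref{cor:mkp-sparsification} follows immediately: MKP is the specialization of GAP with knapsack-independent parameters ($v_{ij} = v_i$, $w_{ij} = w_i$), and every step of Algorithm~\ref{alg:query_set_construction} together with its analysis applies verbatim without leveraging this independence. I would structure the proof in two independent parts -- degree and approximation -- paralleling Theorem~\ref{thm:knapsack_sparsifier}.

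For the sparsification degree, I would assign each $i \in Q$ to its \emph{home} knapsack, namely the unique $j$ whose inner while loop first inserted $i$ into $Q$, inducing a partition $\{R_j\}_{j \in [m]}$ of $Q$. For each $j$, each of the $K + 2$ buckets is refilled in each of the $\alpha - 1$ rounds to $j$-weight at most $(\tau(\epsilon^2)/p + 1) C_j$, so $\sum_{i \in R_j} w_{ij} = O((K+2)(\alpha-1)\tau(\epsilon^2)/p) \cdot C_j$. Substituting $K = O(\log(1/\epsilon)/\epsilon^2)$, $\alpha = 1/\epsilon$, and $\tau(\epsilon^2) = O(\log(1/\epsilon))$ gives $D \cdot C_j$ for $D = O(\log^2(1/\epsilon)/(\epsilon^3 p))$. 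Defining $\hat{x}_{ij} := \mathbf{1}[i \in R_j]/D$ then exhibits $\frac{1}{D}\mathbf{1}_Q$ as a feasible point of the GAP linear programming relaxation; the constant integrality gap of GAP (via Shmoys--Tardos-style rounding) yields the stated polyhedral degree up to a constant factor.

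For the approximation analysis, fix a realization $R$ and its canonical optimal assignment $\mathbf{OPT}(R)$, and partition its pairs by bucket index into low ($k = 0$), regular ($1 \leq k \leq K$), and super ($k = K+1$) contributions. Low-bucket items admit a per-knapsack density-greedy recovery analogous to Theorem~\ref{thm:knapsack_sparsifier}: since each such item contributes at most $\epsilon^2 M_j$, the per-knapsack loss is $O(\epsilon^2 M_j)$, summing to $O(\epsilon) \cdot \E_R[\mathrm{OPT}]$. For super-bucket items, the defining threshold $v_{i^*j^*} > M_{j^*}/\epsilon$ forces $\mathrm{OPT}_{j^*}(R) > M_{j^*}/\epsilon$ whenever any such item is assigned to $j^*$, so non-trivial super-bucket contributions occur only on low-probability realizations; a Markov-type argument combining this event bound with the deterministic inequality $V^{\mathrm{super}}_j(R) \leq \mathrm{OPT}_j(R)$ controls the aggregate super-bucket value across knapsacks by $O(\epsilon) \cdot \E_R[\mathrm{OPT}]$, so simply discarding these items costs only $O(\epsilon)$ multiplicatively.

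The technical heart -- and main obstacle -- is the regular-bucket substitution/charging argument. For a missed optimal pair $(i^*, j^*)$ with $1 \leq \beta(i^*, j^*) = k^* \leq K$, applying Lemma~\ref{lem:bucket_weight_guarantee} with a union bound over the $\alpha - 1$ rounds that refilled $B_{j^*, k^*}$ yields, with probability $1 - O(\epsilon)$, active queried items in $B_{j^*, k^*}$ of total $j^*$-weight at least $(\alpha - 1) C_{j^*}$. Every item in this bucket has $v_{\cdot, j^*}$ within a $(1 + \epsilon^2)$ factor of $v_{i^*, j^*}$, so any active queried item with $j^*$-weight at most $w_{i^*, j^*}$ can replace $i^*$ at multiplicative loss $(1 + \epsilon^2)$. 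I would frame the simultaneous substitution of all missed pairs as a fractional matching in a bipartite graph: direct substitutes absorb one unit each, while already-committed substitutes receive $\Theta(\epsilon)$-shares of $v_{i^*, j^*}$ spread over $\Theta(1/\epsilon)$ such items, ensuring no queried item's aggregated charge exceeds $(1 + O(\epsilon))$ times its own contribution in the chosen assignment. The difficulty is cross-knapsack coupling: a single queried item may serve as a substitute for missed pairs in several knapsacks with distinct effective values, so feasibility of the matching must be verified as a multi-commodity flow whose existence rests on the $\alpha$-fold redundancy via a Hall-type argument. Combining the three bucket contributions with union bounds over the relevant concentration events yields the final $(1 - 6\epsilon)$-approximation.
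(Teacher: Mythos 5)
Your reduction of the corollary to Theorem~\ref{thm:main-approximation} is exactly what the paper does (MKP is GAP with $v_{ij}=v_i$, $w_{ij}=w_i$, and nothing in the algorithm or analysis uses knapsack-dependence), and your degree analysis matches the paper's (per-knapsack weight bound, LP relaxation, integrality gap $2$). The gaps are in your proof of the underlying theorem.

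The clearest failure is the super-bucket step. You argue that a super item assigned to $j^*$ forces $\mathrm{OPT}_{j^*}(R) > M_{j^*}/\epsilon$, and then invoke ``a Markov-type argument'' together with $V^{\mathrm{super}}_j(R) \le \mathrm{OPT}_j(R)$ to bound the loss by $O(\epsilon)\,\E_R[\mathrm{OPT}]$. Markov gives $\Pr[\mathrm{OPT}_j > M_j/\epsilon] \le \epsilon$, but what you need is a bound on the tail expectation $\E[\mathrm{OPT}_j \cdot \mathbf{1}\{\mathrm{OPT}_j > M_j/\epsilon\}]$, and this quantity is not controlled by the mean alone: if $\mathrm{OPT}_j$ equals $M_j/\delta$ with probability $\delta \ll \epsilon$ and $0$ otherwise, the tail carries the entire expectation $M_j$, so discarding super items could cost $\sum_j M_j = \E_R[\mathrm{OPT}]$ under your argument. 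The paper's Lemma~\ref{lem:fill-super-value} takes a different route: it bounds the loss by $\sum_{j\in J} M_j$ where $J$ is the \emph{deterministic} set of knapsacks whose super bucket is not fully queried, and then shows $\sum_{j\in J} M_j \le 3\epsilon\,\E_R[v(\mathbf{OPT})]$ by observing that, with probability $1-\epsilon^2$, the queried super bucket of each $j\in J$ contains an active item of value $> M_j/\epsilon$, whence optimality forces $\sum_{i\in\hat B_{j,K+1,1}} v_i^{\OPT} + v(\mathbf{OPT}_j) > M_j/\epsilon$; summing over the disjoint sets $\hat B_{j,K+1,1}$ and the disjoint $\mathbf{OPT}_j$ caps the left side by $2\,\E_R[v(\mathbf{OPT})]$. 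This use of the queried (not missed) super items is the idea your sketch is missing.

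Second, for the regular buckets you correctly identify cross-knapsack coupling as the crux but leave it unresolved: you say feasibility of your fractional matching ``must be verified as a multi-commodity flow whose existence rests on the $\alpha$-fold redundancy via a Hall-type argument,'' which is precisely the step that needs a proof. The paper resolves it constructively and sequentially (Algorithm~\ref{alg:fill-large}): the query set for each bucket is split into $\alpha-1$ disjoint per-round sub-buckets $\bar B_{j,k,t}$, each independently filled to weight $\ge C_j$ with probability $1-\epsilon^2$; by \textsc{SizeMatch} each sub-bucket supplies at least $|S^{\mathrm{missed}}_{j,k}|$ lighter items, so for every missed item one unused OPT-item can be drawn from each round, and the ``keep all vs.\ reassign the cheapest'' comparison bounds the per-bundle loss by $1/\alpha$ while the weight-domination lemmas (\ref{lem:fill-large-less-capacity}--\ref{lem:fill-super-less-capacity}) guarantee $\Delta w_{j'}(\mathbf{ALG}) \le \Delta w_{j'}(\overline{\mathbf{OPT}})$ in \emph{every} knapsack $j'$, including those from which substitutes are removed -- a feasibility obligation your proposal does not discharge. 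Relatedly, your small-bucket step cannot simply be ``analogous to Theorem~\ref{thm:knapsack_sparsifier}'': a density-ordered substitute may already be assigned elsewhere in $\mathbf{OPT}$, which is why the paper's prefix $S$ additionally requires $v_{ij} > v_i^{\OPT}$ and charges the reassignment against the value lost at the donor knapsacks.
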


\subsection{Relaxing Oracle Assumptions}

{In practice, obtaining precise offline computations of each $\E_R[\mathrm{OPT}_j]$ may be infeasible. We therefore analyze the robustness of our algorithm under weaker information settings.}

\paragraph{Approximate Oracle Access.} 
Given a $\beta$-approximation to the total stochastic optimum $\mathrm{OPT}$, we can estimate each $\E_R[\mathrm{OPT}_j]$ via expected marginal contributions of knapsacks in the approximate solution. Using these estimates, Algorithm~\ref{alg:query_set_construction} achieves a $\beta \cdot (1-6\epsilon)$-approximation with the same sparsification degree bound.
The analysis remains identical to Theorem~\ref{thm:main-approximation}, using the $\beta$-approximate assignment as the benchmark for the charging argument.

\paragraph{Global Oracle Access.}
A plausible scenario is having access to the global expectation $\E_R[\mathrm{OPT}]$ (or a constant factor estimate thereof) without granular knapsack-level details. In this case, we can uniformly distribute the expectation by setting $M_j = \E_R[\mathrm{OPT}]/m$ for all $j$. This forces the algorithm to cover a wider range of potential values per knapsack, introducing a logarithmic dependence on $m$ in the sparsification degree.

Intuitively, the contribution of any specific knapsack $j$ lies somewhere between a uniform share ($\approx \E_R[\mathrm{OPT}]/m$) and the total value ($\approx \E_R[\mathrm{OPT}]$). To ensure we capture the relevant items regardless of how the optimal solution distributes value, we set the minimum value threshold $M_j$ based on the uniform lower bound $\E_R[\mathrm{OPT}]/m$. Consequently, the bucket hierarchy for \emph{every} knapsack must span the expansive range from this uniform average up to the global maximum. This widens the value range by a factor of $m$, which, due to the geometric progression of bucket boundaries, incurs a logarithmic penalty in the number of buckets.

\begin{corollary}\label{cor:global-oracle-sparsifier}
Assume oracle access to the expected global optimum $\E_R[\mathrm{OPT}]$. Then the modified version of Algorithm~\ref{alg:query_set_construction} with $M_j = \E_R[\mathrm{OPT}]/m$ and 
\( K := \left\lceil \frac{2}{\epsilon^2} \log \left( \frac{m}{\epsilon^3} \right) \right\rceil \) is a $(1-6\epsilon)$-sparsifier with degree 
$O\left(\frac{\log^{2}(1/\epsilon)\log(m)}{\epsilon^{3} p}\right)$.
\end{corollary}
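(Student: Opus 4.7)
The plan is to adapt the proof of Theorem~\ref{thm:main-approximation} to the modified algorithm. Only two parameters change — $M_j = \E_R[\mathrm{OPT}]/m$ is now knapsack-agnostic, and $K$ is enlarged — while the iterative assignment loop, the density-based rule for bucket $0$, and the minimum-weight rule for buckets $k \geq 1$ are all untouched. The task therefore reduces to checking that each piece of the original analysis is compatible with these two changes.

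For the sparsification degree, each bucket contributes at most $\tau(\epsilon^2) C_j / p + \max_i w_{ij} = O(\log(1/\epsilon)/p) \cdot C_j$ weight per round. Summing over $\alpha = 1/\epsilon$ rounds and $K+2 = O(\log(m/\epsilon^3)/\epsilon^2)$ buckets gives a total queried weight per knapsack of $O(\log(m)\log^2(1/\epsilon)/(\epsilon^3 p)) \cdot C_j$, which translates to the claimed polyhedral degree after the standard integrality-gap argument used in Theorem~\ref{thm:knapsack_sparsifier} and Theorem~\ref{thm:main-approximation}.

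For the approximation guarantee, the key observation is that the enlarged $K$ satisfies $\epsilon^2 (1+\epsilon^2)^K \geq m/\epsilon$, so the super-bucket threshold evaluates to $\epsilon^2 (1+\epsilon^2)^K M_j \geq \E_R[\mathrm{OPT}]/\epsilon$. This ensures every optimal pair $(i,j)$ with $v_{ij} \leq \E_R[\mathrm{OPT}]/\epsilon$ falls in a standard bucket $B_{j,k}$ with $1 \leq k \leq K$, where the substitution-and-charging argument of Theorem~\ref{thm:main-approximation} applies verbatim since it depends only on the within-bucket value ratio of $(1+\epsilon^2)$. For the low-value bucket, items satisfy $v_{ij} \leq \epsilon^2 M_j = \epsilon^2 \E_R[\mathrm{OPT}]/m$, so the per-knapsack rounding loss of the density-greedy rule is at most $\epsilon^2 \E_R[\mathrm{OPT}]/m$, summing to $\epsilon^2 \E_R[\mathrm{OPT}]$ globally — matching the total low-value loss of Theorem~\ref{thm:main-approximation}, but obtained via a uniform rather than knapsack-specific allocation.

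The main obstacle lies in the super-bucket analysis under the new threshold. A missed super-bucket item now satisfies the strictly stronger condition $v_{ij} > \E_R[\mathrm{OPT}]/\epsilon$ rather than the per-knapsack condition $v_{ij} > \E_R[\mathrm{OPT}_j]/\epsilon$ used in Theorem~\ref{thm:main-approximation}. Since the original super-bucket argument only requires the threshold to exceed $M_j/\epsilon$ in order to bound the aggregate loss, and our uniform threshold satisfies $\E_R[\mathrm{OPT}]/\epsilon \geq M_j/\epsilon$, that argument transfers with no loss in its global guarantee. Combining the three sources of loss — low-value rounding, standard-bucket charging, and super-bucket omissions — then reproduces the $(1 - 6\epsilon)$ approximation factor, completing the proof.
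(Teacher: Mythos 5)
Your proposal is correct and follows the same route the paper intends (the paper offers only the intuition paragraph preceding the corollary, not a formal proof): rerun the Theorem~\ref{thm:main-approximation} analysis with the uniform $M_j$ and enlarged $K$, checking that the super-bucket threshold now exceeds $\E_R[\mathrm{OPT}]/\epsilon$ and that the bucket-count blowup is only logarithmic in $m$. You also correctly handle the one non-cosmetic step -- in Lemma~\ref{lem:fill-super-value} the identity $\E_R[\sum_{(i,j)\in\mathbf{OPT}}v_{ij}]=M_j$ no longer holds for the uniform $M_j$, so the argument must be carried through with $\E_R[\mathrm{OPT}_j]$ in place of $M_j$, which works precisely because the new threshold dominates $\E_R[\mathrm{OPT}_j]/\epsilon$ for every $j$, as you observe.
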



\subsection{Theoretical Boundaries and Practical Effectiveness}
The Exponential Time Hypothesis (ETH) \cite{eth} and sparsification algorithms pursue fundamentally opposing objectives. ETH 
implies that NP-hard problems cannot be solved substantially faster than exponential time, 
while sparsification aims to dramatically reduce problem size while preserving near-optimal solutions. This tension suggests that ETH establishes fundamental boundaries on sparsification effectiveness for computationally hard problems.

To investigate these boundaries, we analyzed the structure of APX-hard GAP instances established by Chekuri and Khanna \cite{chekuri2005mkp} through reductions from 3-dimensional matching \cite{3dmatching}. Our examination reveals that both hard instance families exhibit sparsification degree exactly 2 -- the entire item set already constitutes a sparse query set. This demonstrates that instances where GAP achieves maximum computational difficulty cannot benefit from sparsification, as they possess no redundancy to eliminate.

Unfortunately, our investigation of established GAP benchmark datasets \cite{beasley_dataset, dataset2} reveals that these instances are also sparse, with entire item sets remaining feasible and computational challenges arising from assignment optimization rather than item selection. Since both theoretically hard instances and existing benchmarks exhibit inherent sparsity, they provide limited insight into sparsification effectiveness.

We therefore designed controlled experiments with synthetic instances to systematically evaluate sparsification under varying redundancy levels. Our experimental design parameterized instances by item count $n \in \{1000, 2000, 5000, 10000\}$, knapsack count $m \in \{1,2,5\}$, and crucially, redundancy ratio $r$ (total items divided by optimal solution size). We sampled weights and values from uniform and normal distributions with correlation parameter $\rho$ to control their relationship.

We solve each instance using Gurobi~\cite{gurobi} as our Mixed Integer Linear Program (ILP) solver. Our evaluation compares three approaches: (A) solving the full instance with ILP, (B) applying sparsification and solving the sparsified instance with ILP, and (C) early-stopping the full ILP when the sparsification algorithm terminates.

\begin{figure}[!htb]
    \centering
    \fbox{%
        \begin{minipage}{\textwidth}
        \includegraphics[width=\linewidth]{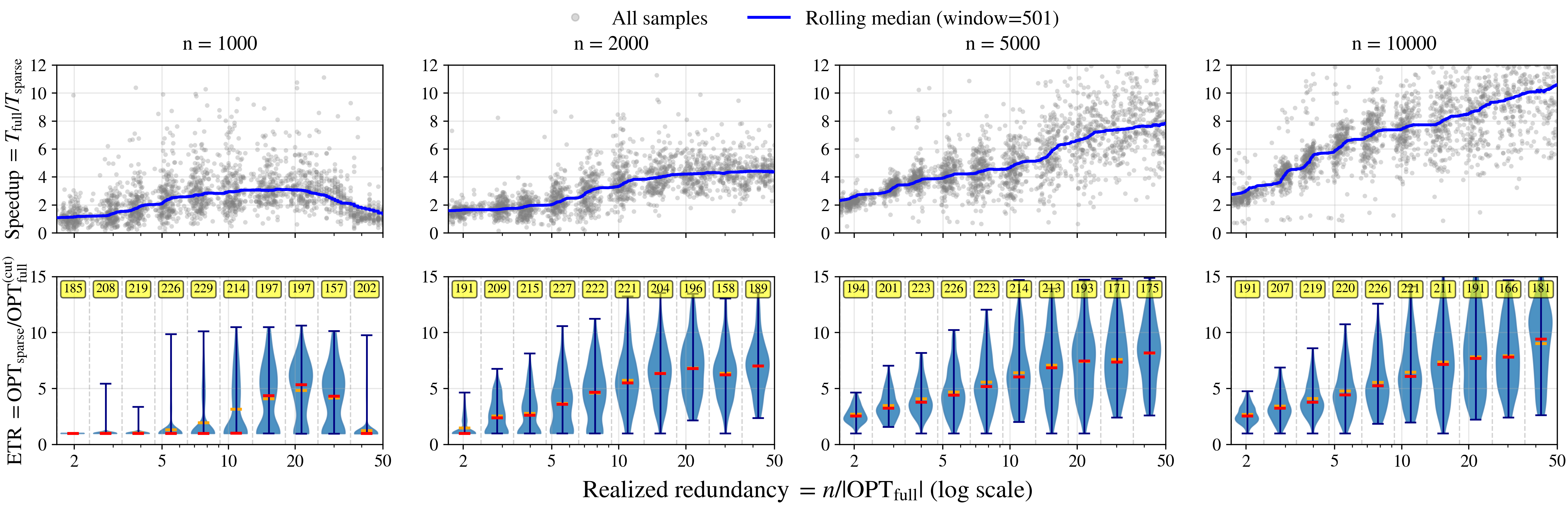}
        \caption{ 
            \label{fig:experiments_summary} 
            This figure presents experimental results for $n \in \{1000, 2000, 5000, 10000\}$. The upper panels illustrate speedup ratios (runtime of method A divided by runtime of method B) plotted against realized redundancy, defined as the ratio of total items to items in the optimal solution. Each gray point corresponds to an individual experiment, while the blue line represents the rolling median computed with a window size of $501$. The lower panels present violin plots showing the distribution of efficiency ratios (performance of method B divided by performance of method C) across varying levels of realized redundancy. Experiments are aggregated into discrete redundancy intervals (in $\log$-scale), with red dots marking the mean value and orange dots marking the median value within each bin. Sample sizes for each interval are indicated at the top of the corresponding violin plot, and the violin shapes visualize the complete distribution of efficiency ratios within each redundancy bin.
        }
    \end{minipage}
    }
\end{figure}

For instances with $m=2$ knapsacks and high redundancy ($r > 4$), sparsification achieves significant computational benefits \footnote{Interestingly, we observe counterintuitive scaling behavior in our experiments. While theory suggests that larger $m$ values should make GAP instances harder since they contain smaller $m$ instances as special cases, our synthetic data shows decreasing solution times as $m$ increases from $2$ to $5$. We attribute this phenomenon to our random instance generation methodology, where additional knapsacks appear to create easier optimization landscapes rather than harder ones. We provide detailed analysis of this behavior in Appendix~\ref{appendix:experiments}.}. Specifically, method B runs $4$ times faster than method A while maintaining $99\%$ of the solution quality. We also compare methods B and C to demonstrate sparsification's effectiveness within identical time budgets, finding that method B produces solutions that are $5$ times better than method C under the same conditions.

Figure~\ref{fig:experiments_summary} summarizes our key findings for the $m=2$ case, with complete experimental details provided in Appendix~\ref{appendix:experiments}.

\section{Analysis of GAP Sparsifier}

Before beginning the analysis, we briefly recall the relevant
notation. Non-bold symbols (e.g., $S$) denote sets of items, whereas bold symbols
(e.g., $\mathbf{S}$) denote assignment solutions, represented as sets of
item--knapsack pairs. For each realization $R$ of the active set, we previously
introduced a canonical optimal assignment $\mathbf{OPT}(R)$ via a deterministic
tie-breaking rule. This choice is used solely for analysis; the sparsification
algorithm (Algorithm~\ref{alg:query_set_construction}) never requires access to
$\mathbf{OPT}(R)$ for any individual $R$. For simplicity of notation, throughout this section we omit
the dependence on $R$ and write $\mathbf{OPT}$.

With this notation in place, our analysis centers on a reconstruction algorithm
that conceptually simulates $\mathbf{OPT}$ while constructing a feasible
assignment $\mathbf{ALG}$ using only queried active items. The reconstruction
processes buckets $(j,k)$ sequentially, maintaining a partial assignment
$\overline{\mathbf{OPT}}$ that incrementally grows toward $\mathbf{OPT}$.

\begin{algorithm}[h]
\caption{\textsc{ReconstructionProcedure}}
\label{alg:reconstruction}
\begin{algorithmic}[1]
\STATE $\overline{\mathbf{OPT}} \gets \emptyset$, $\mathbf{ALG} \gets \emptyset$
\FOR{$j = 1$ to $m$}
    \STATE $(\overline{\mathbf{OPT}}, \mathbf{ALG}) 
            \gets \textsc{FillSmallBucket}(\overline{\mathbf{OPT}}, \mathbf{ALG}, j, 0)$
    \FOR{$k = 1$ to $\left\lceil \frac{2}{\epsilon^{2}} \log \frac{1}{\epsilon^{3}} \right\rceil$}
        \STATE $(\overline{\mathbf{OPT}}, \mathbf{ALG}) 
            \gets \textsc{FillLargeBucket}(\overline{\mathbf{OPT}}, \mathbf{ALG}, j, k)$
    \ENDFOR
\ENDFOR
\STATE $(\overline{\mathbf{OPT}}, \mathbf{ALG}) \gets \textsc{FillAllSuperBuckets}(\overline{\mathbf{OPT}}, \mathbf{ALG})$
\end{algorithmic}
\end{algorithm}

The reconstruction algorithm employs three specialized subroutines. \textsc{FillLargeBucket}, for $1 \le k \le K$, replaces missed high-value items with lighter alternatives.
\textsc{FillSmallBucket}, for $k = 0$, uses density-based substitution for low-value items. \textsc{FillAllSuperBuckets}, for $k = K+1$, performs no substitutions and simply inserts each queried item using its original assignment while discarding all missed items.

Before providing definitions of these subroutines, we first state some desired properties of these subroutines and hence our \textsc{ReconstructionProcedure}. Based on these properties, we first prove the main result of this section in Section~\ref{sec:proof-of-main-thm}. Later, Section~\ref{sec:subroutines} will be devoted to formal definitions of subroutines,  \textsc{FillLargeBucket}, \textsc{FillSmallBucket}, and \textsc{FillAllSuperBuckets}, as well as proofs of their desired properties. 

To formally track the capacity usage and value gain during reconstruction, we define the total weight and value contributed to each knapsack by the subroutine calls in \textsc{ReconstructionProcedure}.
Consider an assignment \( \mathbf{S} \in \{ \mathbf{ALG}, \overline{\mathbf{OPT}} \} \).
For a single call to either \textsc{FillSmallBucket} or \textsc{FillLargeBucket}, we define:
\begin{itemize}
    \item \( \Delta w_j(\mathbf{S}) \): for each knapsack \( j \in [m] \), the increase in the total weight assigned to \( j \) in \( \mathbf{S} \), namely the sum of \( w_{ij} \) over all item--knapsack pairs \( (i,j) \) that are added to \( \mathbf{S} \) during this call;
    \item \( \Delta v(\mathbf{S}) \): the increase in the total value of \( \mathbf{S} \) across all knapsacks, namely the sum of \( v_{ij} \) over all pairs \( (i,j) \) that are added to \( \mathbf{S} \) during this call.
\end{itemize}



\paragraph{Feasibility.}
The \textsc{ReconstructionProcedure} maintains feasibility through three key properties:
(1) each call to the procedure adds only queried active items to $\mathbf{ALG}$, ensuring that $\mathbf{ALG} \subseteq Q \cap R$ at all times;
(2) no item is ever assigned more than once within either solution: once an item appears in $\mathbf{ALG}$ (respectively, $\overline{\mathbf{OPT}}$), it is never reassigned within that same solution;
and (3) for every knapsack $j$, the total weight assigned in $\mathbf{ALG}$ never exceeds that in $\overline{\mathbf{OPT}}$, ensuring that capacity constraints remain satisfied throughout.
Lemmas~\ref{lem:fill-large-less-capacity}, \ref{lem:fill-small-less-capacity}, and~\ref{lem:fill-super-less-capacity} formally establish these properties for large, small, and super buckets, respectively.

\begin{lemma}[Feasibility in Large Buckets]
\label{lem:fill-large-less-capacity}
When \textsc{FillLargeBucket} is called for knapsack $j$ and bucket~$k$,
the procedure assigns only queried active items to $\mathbf{ALG}$, and no item already
appearing in $\mathbf{ALG}$ (respectively, $\overline{\mathbf{OPT}}$)
is ever reassigned within that same solution.
Moreover, for every knapsack $j' \in [m]$,
\[
\Delta w_{j'}(\mathbf{ALG}) \le \Delta w_{j'}(\overline{\mathbf{OPT}}).
\]
\end{lemma}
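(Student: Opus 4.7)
The plan is to verify the three assertions directly from the intended design of \textsc{FillLargeBucket}, which processes bucket $(j,k)$ by iterating through the optimal pairs $(i^{\star},j)\in\mathbf{OPT}$ with $\beta(i^{\star},j)=k$ and, for each, either copying $i^{\star}$ into $\mathbf{ALG}$ when $i^{\star}\in Q\cap R$, or else substituting it with a previously unused queried active item drawn from $B_{j,k}\cap Q\cap R$ whose weight under knapsack $j$ does not exceed $w_{i^{\star}j}$. Each claim in the lemma corresponds to a specific invariant maintained by this rule, so the proof reduces to three short bookkeeping arguments that I will carry out in order.

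First, for the claim that only queried active items enter $\mathbf{ALG}$: every item added during the call is either $i^{\star}$ itself (added only on the branch that checks $i^{\star}\in Q\cap R$) or a substitute $i'$ explicitly drawn from $Q\cap R\cap B_{j,k}$, so the invariant $\mathbf{ALG}\subseteq Q\cap R$ is preserved. For non-reassignment within $\overline{\mathbf{OPT}}$, observe that each optimal pair $(i^{\star},j)\in\mathbf{OPT}$ has a uniquely determined bucket index $\beta(i^{\star},j)$ and a unique knapsack $j$, so it is processed in exactly one subroutine call and is therefore added to $\overline{\mathbf{OPT}}$ at most once. For non-reassignment within $\mathbf{ALG}$, the subroutine must explicitly verify that any candidate substitute $i'$ is not already present in $\mathbf{ALG}$ before inserting it; since $\mathbf{ALG}$ records every prior addition (across calls and across the copy/substitute branches), this single check is sufficient.

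For the weight inequality, I would argue pair-by-pair over the processed items. Each optimal pair $(i^{\star},j)$ with $\beta(i^{\star},j)=k$ contributes $w_{i^{\star}j}$ to $\Delta w_j(\overline{\mathbf{OPT}})$, and the companion update to $\mathbf{ALG}$ contributes one of three amounts at knapsack $j$: exactly $w_{i^{\star}j}$ when the same item is copied, some $w_{i'j}\le w_{i^{\star}j}$ when a lighter substitute is chosen, or $0$ when no valid substitute exists and the subroutine skips. In every case the $\mathbf{ALG}$-contribution at knapsack $j$ is at most the $\overline{\mathbf{OPT}}$-contribution; summing over all processed pairs yields $\Delta w_j(\mathbf{ALG})\le\Delta w_j(\overline{\mathbf{OPT}})$. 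For any other knapsack $j'\neq j$, neither $\mathbf{ALG}$ nor $\overline{\mathbf{OPT}}$ is touched during this call, so both deltas are $0$ and the inequality is trivial.

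The main obstacle is ensuring that the inequality $w_{i'j}\le w_{i^{\star}j}$ actually holds in the substitution branch. This must follow from the sparsifier's construction rather than from any probabilistic event: Algorithm~\ref{alg:query_set_construction} fills each $B_{j,k}$ greedily in ascending order of $w_{ij}$ over $\alpha-1=1/\epsilon-1$ rounds, so the queried items in $B_{j,k}$ are among the lightest available. A lightest-first selection inside \textsc{FillLargeBucket}, coupled with the rule of declaring failure (and contributing $0$ to $\mathbf{ALG}$) whenever no unused candidate of weight at most $w_{i^{\star}j}$ remains, yields the desired bound pathwise. Notably, this feasibility lemma is deterministic in the realization $R$: Lemma~\ref{lem:bucket_weight_guarantee} and the redundancy afforded by the $\alpha$ rounds are \emph{not} needed here, and are instead reserved for the value-recovery analysis in the subsequent lemmas.
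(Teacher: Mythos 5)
Your proof is written for a different subroutine than the one the paper actually defines, and this creates a genuine gap. You model \textsc{FillLargeBucket} as iterating over optimal pairs $(i^{\star},j)$ and either copying $i^{\star}$, inserting a single lighter substitute into knapsack $j$, or skipping (contributing $0$). The paper's Algorithm~\ref{alg:fill-large} has an additional and essential branch: when every lighter queried item in $\bar{B}_{j,k}$ is itself used by $\mathbf{OPT}$ (in possibly different knapsacks $j'_t$), the procedure forms a bundle of recovered assignments $(i'_t,j'_t)$ across the $\alpha-1$ rounds, adds the entire bundle plus $(i,j)$ to $\overline{\mathbf{OPT}}$, and adds to $\mathbf{ALG}$ either the whole bundle (value-based rejection) or the bundle minus $i'_{t^*}$ together with the reassignment $(i'_{t^*},j)$ (value-based substitution). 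There is also a final cleanup loop that inserts every remaining item of $S_{j,k}^{\mathrm{queried}}$ into both solutions at its \emph{original} knapsack $j'$. Consequently, your claim that ``for any other knapsack $j'\neq j$, neither $\mathbf{ALG}$ nor $\overline{\mathbf{OPT}}$ is touched during this call, so both deltas are $0$'' is false for the actual procedure. The paper's proof must (and does) check the inequality $\Delta w_{j'}(\mathbf{ALG})\le\Delta w_{j'}(\overline{\mathbf{OPT}})$ knapsack-by-knapsack in each of the four cases; in the bundle cases the inequality holds because $\mathbf{ALG}$'s assignments form a subset of $\overline{\mathbf{OPT}}$'s except possibly for swapping $(i,j)$ for the lighter $(i'_{t^*},j)$, which is exactly where \textsc{AlternativeExists} is invoked.

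Two smaller points. First, your observation that the weight comparison $w_{i'j}\le w_{ij}$ is deterministic (a consequence of lightest-first selection, not of the concentration event) is correct and matches the paper's justification of \textsc{AlternativeExists}; the probabilistic events are indeed only needed for the value lemmas. Second, your argument for non-reassignment is in the right spirit, but the paper grounds it in the pairwise disjointness of the sets $\bar{B}_{j,k}$ across $(j,k)$ together with the explicit $i'\notin\mathbf{ALG}$ checks, rather than in uniqueness of the bucket index of an optimal pair alone; you would need the disjointness to rule out the same \emph{substitute} item being picked in two different bucket calls. To repair the proof you must restate it against the actual four cases of Algorithm~\ref{alg:fill-large} (direct substitution, value-based rejection, value-based substitution, and exact substitution in the cleanup loop) and verify the weight domination for every knapsack $j'$, not just $j$.
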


\begin{lemma}[Feasibility in Small Buckets]
\label{lem:fill-small-less-capacity}
When \textsc{FillSmallBucket} is called for knapsack $j$ and bucket~$k$,
the procedure assigns only queried active items to $\mathbf{ALG}$, and no item already
appearing in $\mathbf{ALG}$ (respectively, $\overline{\mathbf{OPT}}$)
is ever reassigned within that same solution.
Moreover, for every knapsack $j' \in [m]$,
\[
\Delta w_{j'}(\mathbf{ALG}) \le \Delta w_{j'}(\overline{\mathbf{OPT}}).
\]
\end{lemma}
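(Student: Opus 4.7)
The plan is to mirror the proof strategy used for Lemma~\ref{lem:fill-large-less-capacity}, adapting it to the density-based (rather than weight-based) selection rule used by \textsc{FillSmallBucket}. The two structural properties — that only queried active items are added to $\mathbf{ALG}$ and that no item already appearing in $\mathbf{ALG}$ or $\overline{\mathbf{OPT}}$ is ever reassigned within that solution — should follow immediately from the subroutine's definition: it draws substitutes exclusively from $Q \cap R \cap B_{j,0}$, and it maintains a ``used'' indicator for each of the two solutions so that an item already assigned in one is excluded from future insertions into that same solution.

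For the capacity inequality, I would proceed by case analysis on $j'$. A single call to \textsc{FillSmallBucket} with parameters $(j,0)$ only touches pairs of the form $(i,j)$, hence $\Delta w_{j'}(\mathbf{ALG}) = 0 = \Delta w_{j'}(\overline{\mathbf{OPT}})$ for every $j' \neq j$, and the inequality is trivial. The only substantive case is $j' = j$. Here the subroutine greedily selects queried active items from $B_{j,0}$ in decreasing order of density $v_{ij}/w_{ij}$, simultaneously inserting each chosen item into $\mathbf{ALG}$ (at knapsack $j$) and committing an analogous item to $\overline{\mathbf{OPT}}$ (again at $j$), with the additional stopping rule that the cumulative weight added to $\mathbf{ALG}$ on this call must never exceed the cumulative weight added to $\overline{\mathbf{OPT}}$ on the same call. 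Because this stopping rule is exactly the inequality we wish to certify, the claim follows by construction at the end of the call.

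The main obstacle I anticipate is the bookkeeping between the two running totals, specifically the boundary behavior when the next greedy candidate would overflow the current budget. One must ensure that, once an item is rejected for violating $\Delta w_j(\mathbf{ALG}) \le \Delta w_j(\overline{\mathbf{OPT}})$, no later candidate in the density ordering is silently slipped in; otherwise the inequality could be violated. The standard fix — echoing the knapsack warm-up in Theorem~\ref{thm:knapsack_sparsifier} — is to truncate the greedy at the first overflow, absorbing the loss of at most one low-value item (of value at most $\epsilon^2 M_j$) into the error budget used by the approximation argument. Since that trade-off only affects the value analysis rather than the feasibility statement, it does not complicate the present lemma; the feasibility inequality is then certified directly by the truncation rule.
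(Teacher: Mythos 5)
Your proposal has a genuine gap in the case analysis for the capacity inequality. You claim that a call to \textsc{FillSmallBucket}$(j,0)$ ``only touches pairs of the form $(i,j)$,'' so that $\Delta w_{j'}(\mathbf{ALG}) = 0 = \Delta w_{j'}(\overline{\mathbf{OPT}})$ for all $j' \neq j$. This is false, and it misses a central feature of the GAP reconstruction. The set $S_{j,0}^{\mathrm{queried}}$ consists of items that were \emph{queried via} bucket $(j,0)$ but may be \emph{assigned} in $\mathbf{OPT}$ to an arbitrary knapsack $j' \neq j$. The subroutine inserts each such item into $\overline{\mathbf{OPT}}$ with its original assignment $(i,j')$, and into $\mathbf{ALG}$ with either that same assignment $(i,j')$ (if $i \notin S$) or the reassignment $(i,j)$ (if $i \in S$). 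Hence both solutions do gain weight on knapsacks other than $j$. The correct argument for $j' \neq j$ is not ``both deltas are zero'' but rather that the items $\mathbf{ALG}$ places on $j'$ form a subset of those $\overline{\mathbf{OPT}}$ places on $j'$ (items in $S$ are diverted away from $j'$ to $j$ in $\mathbf{ALG}$ only), which gives $\Delta w_{j'}(\mathbf{ALG}) \le \Delta w_{j'}(\overline{\mathbf{OPT}})$.

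Your treatment of the case $j' = j$ also mischaracterizes the mechanism. There is no running comparison of the two deltas with an online stopping rule; instead, the algorithm fixes $S$ as the largest density-ordered prefix of $\bar{B}_{j,0}$ whose total weight $w_j(S)$ does not exceed $\sum_{i \in S_{j,0}^{\mathrm{missed}}} w_{ij}$, and all missed items are charged to $\overline{\mathbf{OPT}}$ at knapsack $j$. The chain $\Delta w_j(\mathbf{ALG}) = w_j(S) \le w_j(S_{j,0}^{\mathrm{missed}}) \le \Delta w_j(\overline{\mathbf{OPT}})$ is what certifies feasibility, not a truncation that sacrifices one overflow item (that truncation-loss discussion belongs to the single-knapsack warm-up, not to this subroutine). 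Finally, you omit the easy branch $S_{j,0}^{\mathrm{missed}} = \emptyset$, where both solutions receive identical assignments; this is minor but should be stated. The structural claims about queried active items and no reassignment are fine and match the paper's argument via disjointness of the $\bar{B}_{j,k}$ across buckets.
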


\begin{lemma}[Feasibility in Super Buckets]
\label{lem:fill-super-less-capacity}
When \textsc{FillAllSuperBuckets} is called, 
the procedure assigns only queried active items to $\mathbf{ALG}$, and no item already appearing
in $\mathbf{ALG}$ (respectively, $\overline{\mathbf{OPT}}$) is ever reassigned within that same solution. Moreover, for every knapsack $j' \in [m]$,
\[
\Delta w_{j'}(\mathbf{ALG}) \le \Delta w_{j'}(\overline{\mathbf{OPT}}).
\]
\end{lemma}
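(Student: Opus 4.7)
The plan is to prove all three feasibility properties by inspection of \textsc{FillAllSuperBuckets}, since the subroutine explicitly performs no substitutions and only handles items whose canonical assignment lies in the super bucket $B_{j,K+1}$. Given that the subroutine's stated behavior is to insert each queried item at its $\mathbf{OPT}$ assignment and discard missed ones, the three claims should reduce to structural observations about which pairs $(i,j)$ get added to $\overline{\mathbf{OPT}}$ versus $\mathbf{ALG}$.

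First I would handle the easy direction: by construction, \textsc{FillAllSuperBuckets} adds $(i,j)$ to $\mathbf{ALG}$ only if $i \in Q \cap R$, so the invariant $\mathbf{ALG} \subseteq Q \cap R$ is preserved. For the no-reassignment claim on $\overline{\mathbf{OPT}}$, I would argue that the pairs appended during this call are exactly $\{(i,j) \in \mathbf{OPT} : \beta(i,j)=K+1\}$, which are drawn from the canonical optimal assignment and thus contain each item at most once; furthermore, every earlier call to \textsc{FillSmallBucket} or \textsc{FillLargeBucket} appends only pairs with $\beta \le K$ to $\overline{\mathbf{OPT}}$, so the super-bucket additions are disjoint in the item coordinate from everything added earlier.

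For the weight bound $\Delta w_{j'}(\mathbf{ALG}) \le \Delta w_{j'}(\overline{\mathbf{OPT}})$, the key observation is a term-by-term comparison: in this call, $(i,j)$ contributes $w_{ij}$ to $\Delta w_j(\overline{\mathbf{OPT}})$ for every super-bucket pair in $\mathbf{OPT}$, while the same $w_{ij}$ is added to $\Delta w_j(\mathbf{ALG})$ only on the (strict) subset of such pairs that survive the queried-active filter. Summing over all super-bucket pairs yields the inequality for each $j'$.

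The subtle step I expect to be the main obstacle is ruling out reassignment into $\mathbf{ALG}$: an item $i$ may have been adopted by some previous large-bucket call as a lightweight substitute for a different knapsack $j' \neq j$, yet simultaneously be the $\mathbf{OPT}$-assignee at knapsack $j$ in the super bucket. The cleanest resolution is to have the subroutine explicitly skip any $i$ already present in $\mathbf{ALG}$ when processing super-bucket insertions. Crucially, this skip only removes terms from the $\mathbf{ALG}$ side of the weight inequality (the corresponding term is still added on the $\overline{\mathbf{OPT}}$ side), so the bound $\Delta w_{j'}(\mathbf{ALG}) \le \Delta w_{j'}(\overline{\mathbf{OPT}})$ is, if anything, strengthened, and the no-reassignment property is maintained by construction. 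Once the formal definition of \textsc{FillAllSuperBuckets} in Section~\ref{sec:subroutines} includes this check, the lemma follows.
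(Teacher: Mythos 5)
Your weight comparison is sound and matches the paper's: queried super items are inserted into both $\mathbf{ALG}$ and $\overline{\mathbf{OPT}}$ with the identical pair $(i,j')$ from $\mathbf{OPT}$, missed super items go only to $\overline{\mathbf{OPT}}$, so $\Delta w_{j'}(\mathbf{ALG}) \le \Delta w_{j'}(\overline{\mathbf{OPT}})$ termwise. However, the no-reassignment part of your argument has a genuine gap, rooted in a mischaracterization of the sets the subroutine processes. The set $S^{\mathrm{queried}}_{j,K+1}$ is \emph{not} ``the $\mathbf{OPT}$ pairs with $\beta(i,j)=K+1$ that survive the queried-active filter''; it is the set of items queried \emph{via} bucket $(j,K+1)$, whose canonical $\mathbf{OPT}$ assignment $(i,j')$ may lie in a completely different bucket. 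Consequently your claim that earlier calls append only pairs with $\beta \le K$ to $\overline{\mathbf{OPT}}$ while the super-bucket call appends exactly the $\beta=K+1$ pairs is false in both directions: a call to \textsc{FillLargeBucket} for bucket $(j',3)$ can add a pair $(i,j'')$ with $\beta(i,j'')=K+1$ (if $i$ was queried through $(j',3)$ but $\mathbf{OPT}$ assigns it to a super bucket elsewhere), and \textsc{FillAllSuperBuckets} routinely adds pairs with $\beta \ne K+1$. The disjointness you need is in the \emph{item} coordinate and comes from a different source: the sets $\bar{B}_{j,k}$ are pairwise disjoint by construction, the sets $S^{\mathrm{queried}}_{j,k}$ and $S^{\mathrm{missed}}_{j,k}$ partition the items of $\mathbf{OPT}$, and items are removed from these sets as they are processed. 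That is the argument the paper uses.

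The same mischaracterization drives your ``subtle step'' and its proposed fix. The scenario you worry about cannot occur: any item already placed in $\mathbf{ALG}$ by an earlier call was drawn from some $\bar{B}_{j',k'}$ with $k' \le K$, hence it cannot lie in $S^{\mathrm{queried}}_{j,K+1} \subseteq \bar{B}_{j,K+1}$ (disjoint buckets) and cannot lie in $S^{\mathrm{missed}}_{j,K+1}$ (whose members satisfy $i \notin Q$). So no skip-if-already-in-$\mathbf{ALG}$ check is needed, and inserting one amounts to proving the lemma for a modified subroutine rather than the one in Algorithm~\ref{alg:fill-super}; worse, if the skipped case could actually arise, the skip would silently drop value from $\mathbf{ALG}$ while keeping it in $\overline{\mathbf{OPT}}$, undermining Lemma~\ref{lem:fill-super-value}. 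You should replace the $\beta$-based partition and the algorithmic patch with the item-level disjointness argument above.
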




\paragraph{Approximation:} 
The \textsc{ReconstructionProcedure} adds new item--knapsack assignments to both $\overline{\mathbf{OPT}}$ and $\mathbf{ALG}$ in such a way that the total value 
of the new assignments added to $\mathbf{ALG}$ closely tracks that of the new assignments added to $\overline{\mathbf{OPT}}$. This ensures that, throughout the reconstruction process, the value of $\mathbf{ALG}$ remains a good approximation to the value of $\overline{\mathbf{OPT}}$. The following three lemmas prove this property for large-value, small-value, and super-value buckets, respectively.

\begin{lemma}[Approximation Guarantee in Large Buckets]
\label{lem:fill-large-value}
When \textsc{FillLargeBucket} is called for knapsack $j$ and bucket $k$, the following inequality holds:
\[
\mathbb{E}_R[\Delta v(\mathbf{ALG})] 
\ge
(1-2\epsilon)\,\Delta v(\overline{\mathbf{OPT}}).
\]
\end{lemma}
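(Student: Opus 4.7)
The plan is to run a bucket-level substitution argument that exploits three structural features of Algorithm~\ref{alg:query_set_construction}. First, items in bucket $B_{j,k}$ with $k \ge 1$ have values within a $(1+\epsilon^2)$-factor window of one another, so swapping one bucket item for another costs at most a $(1-\epsilon^2)$ multiplicative factor in value. Second, the queried portion of each bucket was filled in ascending-weight order, so every queried item has weight no greater than any non-queried item in the same bucket of the same knapsack. Third, the bucket was filled up to $\tau(\epsilon^2)/p \cdot C_j$ capacity across $\alpha = 1/\epsilon$ rounds, which provides enough queried mass both to invoke Lemma~\ref{lem:bucket_weight_guarantee} and to absorb the cross-knapsack contention that arises when the same queried item is pulled by several buckets.

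I would organize the proof in four steps. Step 1: partition the item--knapsack pairs added to $\overline{\mathbf{OPT}}$ during this call into \emph{hits} (queried and active, hence copied verbatim into $\mathbf{ALG}$ with zero value loss) and \emph{misses} (the rest). Step 2: condition on the concentration event of Lemma~\ref{lem:bucket_weight_guarantee}, which holds with probability at least $1-\epsilon^2$, to force the total active queried weight inside the bucket to exceed $C_j$. On this event the ascending-weight rule guarantees that for every miss $i$ attached to pair $(i,j)$ there is an active queried item $i'$ in the same bucket with $w_{i'j} \le w_{ij}$ and $v_{i'j} \ge (1-\epsilon^2)\,v_{ij}$. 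Step 3: argue that enough such substitutes remain uncommitted by previous subroutine calls, by appealing to the $\alpha = 1/\epsilon$ redundancy rounds --- at most a $1/\alpha = \epsilon$ fraction of the bucket's substitution budget can have been preempted by prior \textsc{FillSmallBucket} or \textsc{FillLargeBucket} calls. Step 4: take expectation over $R$ and combine the per-substitution value factor of $(1-\epsilon^2)$, the concentration failure probability of $\epsilon^2$, and the at most $\epsilon$-fraction of misses absorbed by the charging argument, then use $1/(1+\epsilon^2) \ge 1-\epsilon^2$ and an elementary estimate of the form $(1-\epsilon)(1-\epsilon^2)(1-\epsilon^2) \ge 1-2\epsilon$ for $\epsilon \in (0,1/6)$ to reach the stated bound.

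I expect the main obstacle to lie in Step 3, the cross-knapsack contention. A single physical item can belong to radically different buckets across knapsacks --- a ``large'' item for knapsack $j$ may be ``small'' or even ``super'' for knapsack $j'$ --- so when \textsc{FillLargeBucket} is invoked on $(j,k)$ some of its queried active items may already be committed to $\mathbf{ALG}$ under a different pair $(i',j')$ processed in a previous iteration of the outer loop of Algorithm~\ref{alg:reconstruction}. I would handle this through a fractional charging scheme in which each miss in $(j,k)$ spreads its value deficit uniformly over the not-yet-committed queried active items of the same bucket, and I would bound the total charge received by any single such item by $(1+O(\epsilon))$ times its own value $v_{i'j}$. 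The $\alpha$-round padding is precisely the device that keeps this per-item charge small enough for the bound to close; once the charging is verified, summing the contributions over all items added in the call yields the lemma.
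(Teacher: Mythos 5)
Your Steps 1, 2, and 4 track the paper's proof for the easy branches: the direct-substitution case (a queried active item in the same bucket, unused by $\mathbf{OPT}$, replaces the miss at a $(1-\epsilon^2)$ value factor) and the conditioning on the concentration events. But Step 3, which is exactly the crux of the lemma, rests on a mistaken premise and omits the paper's key mechanism. First, the premise: because the queried item sets $\bar{B}_{j,k}$ are pairwise disjoint across all $(j,k)$ (Section~\ref{sec:notation}), an item queried through bucket $(j,k)$ cannot have been committed to $\mathbf{ALG}$ by a \emph{prior} subroutine call for a different bucket, so there is no ``$1/\alpha$ fraction of the substitution budget preempted by previous calls'' to bound. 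The genuine difficulty is different: when no unused substitute exists, every remaining candidate in $\bar{B}_{j,k}$ is itself an item that $\mathbf{OPT}$ assigns somewhere (possibly to another knapsack), so reassigning it to cover the miss destroys value elsewhere, and simply ``spreading the deficit'' analytically over those items does not produce a feasible assignment for $\mathbf{ALG}$ --- the lemma compares the value increments of two concrete assignments built by the subroutine, not an accounting fiction.

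The paper's resolution, which your sketch does not contain, is to form an explicit bundle of $\alpha$ item--knapsack pairs: the missed pair $(i,j)$ plus one $\mathbf{OPT}$-used candidate $i'_t$ drawn from each of the $\alpha-1$ redundancy rounds $\bar{B}_{j,k,t}$ (their existence in every round is what \textsc{SizeMatch} guarantees under the conditioning, and the rounds' disjointness is what makes the $\alpha-1$ concentration events independent). The entire bundle is added to $\overline{\mathbf{OPT}}$ with original assignments, and $\mathbf{ALG}$ then sacrifices only the single least-valuable member of the bundle --- either dropping the miss $i$ outright (if $i$ is the minimum) or reassigning the cheapest candidate $i'_{t^*}$ to knapsack $j$ in place of $i$ (recovering $v_{i'_{t^*}j} \ge (1-\epsilon^2) v_{ij}$ by same-bucket membership, at the cost of $v^{\OPT}_{i'_{t^*}}$). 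Dropping the minimum of $\alpha$ values loses at most a $1/\alpha = \epsilon$ fraction of the bundle's total, which is where the $\alpha$-round padding actually enters; the final bound is $(1-\epsilon^2)^{\alpha}\bigl(1-\tfrac{1}{\alpha}-\epsilon^2\bigr) \ge 1-2\epsilon$. Without this bundle-and-drop-the-minimum construction (or an equally concrete feasible substitute), your charging argument has no assignment to charge against, and the claimed $(1+O(\epsilon))$ per-item charge bound is unsupported. A smaller issue: your expectation step charges only a single $\epsilon^2$ failure probability, whereas all $\alpha-1$ excess-weight events must hold jointly, giving $(1-\epsilon^2)^{\alpha-1} \approx 1-\epsilon$ rather than $1-\epsilon^2$.
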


\begin{lemma}[Approximation Guarantee in Small Buckets]
\label{lem:fill-small-value}
When \textsc{FillSmallBucket} is called for knapsack $j$ and bucket $k$, the inequality
\[
\mathbb{E}_R[\Delta v(\mathbf{ALG})]
\ge
(1-2\epsilon)\,\Delta v(\overline{\mathbf{OPT}}) - \epsilon^2 M_j
\]
holds, where $M_j = \mathbb{E}_R[\mathrm{OPT}_j]$ denotes the expected contribution of knapsack~$j$ to the optimal assignment.
\end{lemma}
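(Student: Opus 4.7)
The plan is to adapt the low-value-item argument from the knapsack warm-up (see the $S_0$/$T_0$ analysis in Theorem~\ref{thm:knapsack_sparsifier}) to the single-knapsack subproblem induced by this call to \textsc{FillSmallBucket} on knapsack $j$. First, I fix notation: let $S^0_j \subseteq R$ denote the items added to $\overline{\mathbf{OPT}}$ during this call, namely those items assigned to knapsack $j$ by $\mathbf{OPT}$ with $v_{ij} \le \epsilon^2 M_j$ (so that $(i,j) \in B_{j,0}$). By feasibility of $\mathbf{OPT}$, $w_j(S^0_j) \le C_j$, and by definition $\Delta v(\overline{\mathbf{OPT}}) = \sum_{i \in S^0_j} v_{ij}$. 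The task is to find a feasible assignment $T$ to knapsack $j$ drawn from the queried active items, with $w_j(T) \le w_j(S^0_j)$, whose value approximates $v(S^0_j)$.

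Second, I define a concentration event $\mathcal{E}$: the items $i$ with $(i,j) \in B_{j,0}$ that were added to $Q$ by Algorithm~\ref{alg:query_set_construction} and remain unused after prior subroutine calls have total active $j$-weight at least $C_j$, unless the entire bucket $B_{j,0}$ was queried. Since the algorithm populates this bucket with weight roughly $\tau(\epsilon^2)/p \cdot C_j$ per round across $\alpha = 1/\epsilon$ rounds, Lemma~\ref{lem:bucket_weight_guarantee} applied with parameter $\epsilon^2$ ensures $\Pr[\mathcal{E}] \ge 1 - \epsilon^2$.

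Third, I carry out the density-substitution argument conditional on $\mathcal{E}$. Because the algorithm adds items to bucket $(j,0)$'s share of $Q$ in decreasing order of density $v_{ij}/w_{ij}$, any item of $S^0_j$ that was not queried has density no larger than the minimum density of the queried items in $B_{j,0}$. Hence the fractional knapsack solution on the residual queried active items with weight budget $w_j(S^0_j)$ has value at least $v(S^0_j)$: swapping each unit of weight from an unqueried $S^0_j$-item for a higher-density queried active item cannot decrease total value. Rounding this fractional solution to an integer one loses at most one boundary item of value $\le \epsilon^2 M_j$. Thus, under $\mathcal{E}$, the chosen $T$ satisfies $\Delta v(\mathbf{ALG}) = v(T) \ge v(S^0_j) - \epsilon^2 M_j$.

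Combining, since $\Delta v(\mathbf{ALG}) \ge 0$ unconditionally and $\Pr[\mathcal{E}] \ge 1 - \epsilon^2$, taking expectations yields
\[
\mathbb{E}_R[\Delta v(\mathbf{ALG})] \ge (1 - \epsilon^2)\bigl(\Delta v(\overline{\mathbf{OPT}}) - \epsilon^2 M_j\bigr) \ge (1 - 2\epsilon)\,\Delta v(\overline{\mathbf{OPT}}) - \epsilon^2 M_j,
\]
after discarding positive lower-order terms and using $1 - \epsilon^2 \ge 1 - 2\epsilon$. The main obstacle is the bookkeeping in step two: the same item $i$ may appear in $B_{j,0}$ and $B_{j',0}$ with different weights $w_{ij}$ and $w_{ij'}$, so prior \textsc{FillSmallBucket} calls on knapsacks $j' < j$ can deplete the $(j,0)$ pool. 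The redundancy factor $\alpha = 1/\epsilon$ is designed precisely to absorb this, but formalizing the step requires a careful counting argument --- tracking the weight consumed by earlier subroutine calls and showing that at most $\alpha - 1$ rounds' worth of weight can be removed --- to conclude that at least one full round remains available for \textsc{FillSmallBucket}$(j,0)$, preserving the applicability of Lemma~\ref{lem:bucket_weight_guarantee}.
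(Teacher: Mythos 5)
Your proposal has a genuine gap: it misidentifies what $\Delta v(\overline{\mathbf{OPT}})$ is for this call, and as a result it misses the main source of loss that the lemma's $(1-2\epsilon)$ factor is there to absorb. The items added to $\overline{\mathbf{OPT}}$ by \textsc{FillSmallBucket}$(j,0)$ are not ``the items assigned to knapsack $j$ by $\mathbf{OPT}$ with $v_{ij}\le\epsilon^2 M_j$.'' They are (a) $S_{j,0}^{\mathrm{queried}}$, the $\mathbf{OPT}$-items that happen to be \emph{queried via} bucket $(j,0)$ but may be assigned to arbitrary knapsacks $j'$ in $\mathbf{OPT}$, inserted with their original assignments and contributing $v_i^{\OPT}=v_{ij'}$, and (b) $S_{j,0}^{\mathrm{missed}}$, the unqueried $\mathbf{OPT}$-items assigned to $j$ in bucket $(j,0)$. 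When $\mathbf{ALG}$ reassigns a queried item $i$ from its original knapsack $j'$ to knapsack $j$ to cover a missed item, it gains $v_{ij}$ but forfeits $v_{ij'}$, which $\overline{\mathbf{OPT}}$ keeps. Your single-knapsack fractional-greedy swap shows only that the substitutes' value \emph{in knapsack $j$} covers the missed value; it says nothing about the value those substitutes were contributing elsewhere. This is exactly the cross-knapsack substitutability obstacle the paper flags, and it is why your conditional claim $\Delta v(\mathbf{ALG})\ge\Delta v(\overline{\mathbf{OPT}})-\epsilon^2 M_j$ is false in general --- indeed it would prove a bound strictly stronger than the lemma, with no $(1-1/\alpha)$ loss at all.

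The paper's proof handles this by choosing the substitute set $S$ as a prefix of $\bar B_{j,0}$ sorted by the \emph{forfeited} density $v_i^{\OPT}/w_{ij}$ (not by $v_{ij}/w_{ij}$), subject to $v_{ij}>v_i^{\OPT}$ for each $i\in S$, and then uses the $\alpha-1$ rounds of redundancy to show $\sum_{i\in\bar B_{j,0}\setminus S}v_i^{\OPT}\ge(\alpha-2)\sum_{i\in S}v_i^{\OPT}$, so the forfeited value is at most a $1/\alpha$ fraction of $\Delta v(\overline{\mathbf{OPT}})$; combined with \textsc{AlternativeExists} this yields the $(1-1/\alpha)=(1-\epsilon)$ factor. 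Your proposal contains no analogue of this step. Separately, your closing worry about prior subroutine calls ``depleting the $(j,0)$ pool'' is moot: the sets $\bar B_{j,k}$ are pairwise disjoint across $(j,k)$ by construction, so no counting argument over earlier calls is needed --- the real interaction between knapsacks is the value accounting above, not weight depletion.
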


\begin{lemma}[Approximation Guarantee in Super Buckets]
\label{lem:fill-super-value}
Assume $0 < \epsilon \le 1/2$.
When \textsc{FillAllSuperBuckets} is called, the following inequality holds:
\[
\mathbb{E}_R\left[\Delta v(\overline{\mathbf{OPT}})\right]
-
\mathbb{E}_R\left[\Delta v(\mathbf{ALG})\right]
\le
3\epsilon \cdot \mathbb{E}_R\left[v(\mathbf{OPT})\right].
\]
\end{lemma}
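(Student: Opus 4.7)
My plan is to identify the loss $\E_R[\Delta v(\overline{\mathbf{OPT}})-\Delta v(\mathbf{ALG})]$ with the expected value of $\mathbf{OPT}$'s super bucket contributions that miss the query set, and then bound this quantity via the super bucket threshold combined with Markov's inequality. Since \textsc{FillAllSuperBuckets} performs no substitutions -- each pair $(i,j)\in \mathbf{OPT}\cap B_{j,K+1}$ is added to $\overline{\mathbf{OPT}}$, but added to $\mathbf{ALG}$ only when $i\in Q$ -- the loss in a realization $R$ satisfies
\[
\Delta v(\overline{\mathbf{OPT}})-\Delta v(\mathbf{ALG}) \;\le\; \sum_{j=1}^{m} Y_j(R),\qquad Y_j(R) := \!\!\!\!\sum_{(i,j)\in \mathbf{OPT}(R)\cap B_{j,K+1}}\!\!\!\! v_{ij},
\]
so it suffices to prove $\sum_{j}\E_R[Y_j] \le 3\epsilon\cdot \E_R[v(\mathbf{OPT})]$.

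The key structural observation is that, by our choice of $K = \lceil (2/\epsilon^2)\log(1/\epsilon^3)\rceil$, one verifies $\epsilon^2(1+\epsilon^2)^K \ge 1/\epsilon$, so every pair in $B_{j,K+1}$ satisfies $v_{ij}> M_j/\epsilon$. Consequently $Y_j(R)>0$ implies $\mathrm{OPT}_j(R)>M_j/\epsilon$, and Markov's inequality applied to $\mathrm{OPT}_j$ (which has mean $M_j$) yields $\Pr[\mathrm{OPT}_j>M_j/\epsilon]\le \epsilon$. I would convert this probabilistic bound into an expectation bound via the chain
\[
\E_R[Y_j] \;\le\; \E_R\bigl[\mathrm{OPT}_j\cdot \mathbf{1}\{\mathrm{OPT}_j>M_j/\epsilon\}\bigr] \;\le\; \frac{M_j}{\epsilon}\cdot\Pr[\mathrm{OPT}_j > M_j/\epsilon] \;+\; \E_R\bigl[(\mathrm{OPT}_j-M_j/\epsilon)^+\bigr],
\]
bounding the first summand by $M_j$ through Markov and the positive-part term via a tail integral (or a second Markov argument exploiting that the true super bucket cutoff $\epsilon^2(1+\epsilon^2)^K M_j$ substantially exceeds $M_j/\epsilon$). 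Summing over $j$ with $\sum_j M_j = \E_R[v(\mathbf{OPT})]$, and using the assumption $\epsilon\le 1/2$, I would then tighten the constants to recover the $3\epsilon$ factor.

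The main obstacle lies precisely in this final tightening. A naive chaining of the per-knapsack inequality $\E_R[\mathrm{OPT}_j\cdot \mathbf{1}\{\mathrm{OPT}_j>M_j/\epsilon\}]\le M_j$ summed across $j$ gives only $\E_R[v(\mathbf{OPT})]$, off from the target by exactly the factor $3\epsilon$ we must recover. Bridging this gap will require leveraging either (i) the sharper effective threshold imposed by $K$, which is polynomially larger than $M_j/\epsilon$ and thus permits a higher-order tail bound on $\mathrm{OPT}_j$; or (ii) the structural constraint that every super bucket item forces $\mathrm{OPT}_j$ to exceed $M_j/\epsilon$ by at least $M_j/\epsilon$ per item, so the expected number of super bucket items in $\mathbf{OPT}_j$ is at most $\epsilon$, permitting a tighter accounting of their aggregate value. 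I expect the complete proof to combine both observations, carefully chained, to drive the per-knapsack bound down to order $\epsilon M_j$ and deliver the claimed $3\epsilon\cdot\E_R[v(\mathbf{OPT})]$ inequality.
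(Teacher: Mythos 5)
Your proposal has a genuine gap, and it is not merely a matter of ``final tightening'': the very first step, replacing the \emph{missed} super-bucket value by the \emph{total} super-bucket value $Y_j$ of $\mathbf{OPT}$, discards exactly the information that makes the lemma true. The relaxed claim $\sum_j \E_R[Y_j]\le 3\epsilon\,\E_R[v(\mathbf{OPT})]$ is false. Take one item per knapsack with $v_{i_jj}=2M_j/\epsilon$ and activation probability $p=\epsilon/2$, so that $M_j=\E_R[\mathrm{OPT}_j]=p\cdot 2M_j/\epsilon$ is consistent; this item lies in the super bucket and $\E_R[Y_j]=M_j$, hence $\sum_j\E_R[Y_j]=\E_R[v(\mathbf{OPT})]$. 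Neither of your proposed rescues can close this: since super-bucket values are unbounded above, $\Pr[\mathrm{OPT}_j>M_j/\epsilon]\le\epsilon$ and $\E[N_j]\le\epsilon$ are compatible with $\E_R[Y_j]=M_j$ (all the mass sits in the tail), and raising the threshold from $M_j/\epsilon$ to any larger cutoff just rescales the same counterexample. The only reason the actual loss is small in such instances is that the sparsifier \emph{queries} these items -- a lone super-bucket candidate is never missed -- which is precisely the fact your relaxation throws away.

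The paper's proof is structured around this point. It first restricts to the deterministic set $J$ of knapsacks whose super bucket still contains un-queried feasible pairs (for $j\notin J$ nothing can be missed), so the loss is at most $\sum_{j\in J}M_j$. It then shows $\sum_{j\in J}M_j\le\frac{2\epsilon}{1-\epsilon^2}\E_R[v(\mathbf{OPT})]\le 3\epsilon\,\E_R[v(\mathbf{OPT})]$ by exploiting the query-set structure: for $j\in J$ the super bucket exhausted its weight quota, so by Lemma~\ref{lem:bucket_weight_guarantee} some queried super item $i$ with $v_{ij}>M_j/\epsilon$ is active with probability at least $1-\epsilon^2$, and optimality forces $v_i^{\OPT}+v(\mathbf{OPT}_j)>M_j/\epsilon$ (otherwise swapping $i$ into knapsack $j$ improves $\mathbf{OPT}$). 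Summing this over $j\in J$ and using disjointness of the queried super buckets and of the $\mathbf{OPT}_j$ yields $(1-\epsilon^2)\sum_{j\in J}M_j/\epsilon\le 2\,\E_R[v(\mathbf{OPT})]$. Any correct proof must use some version of this ``many un-queried high-value alternatives force $\E_R[v(\mathbf{OPT})]$ to be large'' argument; a bound depending only on the distribution of $\mathrm{OPT}_j$ cannot suffice.
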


\paragraph{Correct Benchmark:} 
Finally, to demonstrate that the final output $\mathbf{ALG}$ is a good 
approximation of $\mathbf{OPT}$, we ensure that 
\textsc{ReconstructionProcedure} terminates with 
$\overline{\mathbf{OPT}}=\mathbf{OPT}$. 
The following lemma establishes this property:

\begin{lemma}
\label{lem:baropt-equals-opt}
Upon completion of \textsc{ReconstructionProcedure} 
(Algorithm~\ref{alg:reconstruction}), we have 
$\overline{\mathbf{OPT}} = \mathbf{OPT}$.
\end{lemma}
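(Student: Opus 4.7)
The plan is to establish $\overline{\mathbf{OPT}} = \mathbf{OPT}$ by double inclusion, leveraging the fact that the buckets $\{B_{j,k}\}$ partition the item--knapsack space and that each subroutine, when processing a bucket, transfers the corresponding slice of $\mathbf{OPT}$ into $\overline{\mathbf{OPT}}$ without adding anything else.

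First I would observe that for each fixed knapsack $j$, the value thresholds defining $B_{j,0}, B_{j,1}, \ldots, B_{j,K}, B_{j,K+1}$ form a consecutive partition of $[0,\infty)$, so every pair $(i,j) \in E \times \{j\}$ lies in exactly one bucket; in particular, every pair $(i,j) \in \mathbf{OPT}$ belongs to the unique bucket $B_{j,\beta(i,j)}$. The outer loop of \textsc{ReconstructionProcedure} iterates over all $j \in [m]$ and all $k \in \{0, 1, \ldots, K\}$, invoking \textsc{FillSmallBucket}$(j,0)$ or \textsc{FillLargeBucket}$(j,k)$ as appropriate, and the terminal call \textsc{FillAllSuperBuckets} handles the super buckets $B_{j,K+1}$ for every $j$. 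Hence, for each pair $(i,j) \in \mathbf{OPT}$ there is a unique subroutine invocation whose bucket contains it, and the argument reduces to showing that each subroutine adds precisely $\mathbf{OPT} \cap B_{j,k}$ to $\overline{\mathbf{OPT}}$. The reverse inclusion $\overline{\mathbf{OPT}} \subseteq \mathbf{OPT}$ then follows from the fact that these subroutines insert only pairs already appearing in $\mathbf{OPT}$ into the analytical object $\overline{\mathbf{OPT}}$.

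The main obstacle is that the plan depends on the precise specifications of \textsc{FillSmallBucket}, \textsc{FillLargeBucket}, and \textsc{FillAllSuperBuckets} in Section~\ref{sec:subroutines}. I expect the lemma to reduce to a bookkeeping check once those routines are written down: each iterates through $\mathbf{OPT} \cap B_{j,k}$ and unconditionally appends those pairs to $\overline{\mathbf{OPT}}$, while separately (and potentially differently) updating $\mathbf{ALG}$ via substitution, charging, or --- in the super-bucket case --- plain copy-or-discard. The subtle point to verify is that $\overline{\mathbf{OPT}}$ is decoupled from the query set and the realization: even when the optimal item $i^*$ is absent from $Q \cap R$ and no substitute is produced in $\mathbf{ALG}$, the pair $(i^*, j)$ is still inserted into $\overline{\mathbf{OPT}}$. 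This is particularly important for \textsc{FillAllSuperBuckets}, which explicitly ``discards'' missed items --- the discard must apply only to $\mathbf{ALG}$, not to $\overline{\mathbf{OPT}}$. Once this decoupling is confirmed for each subroutine, the lemma follows immediately.
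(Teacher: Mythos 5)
Your high-level strategy (double inclusion, partition $\mathbf{OPT}$, verify each subroutine contributes its slice to $\overline{\mathbf{OPT}}$ unconditionally) matches the paper's, and you correctly flag the one subtle point that matters -- that $\overline{\mathbf{OPT}}$ must receive missed items even when $\mathbf{ALG}$ discards them, especially in \textsc{FillAllSuperBuckets}. However, the specific decomposition you propose is wrong, and the reduction you state -- ``each subroutine adds precisely $\mathbf{OPT} \cap B_{j,k}$ to $\overline{\mathbf{OPT}}$'' -- is false and would fail if you tried to verify it against the subroutines. The call for bucket $(j,k)$ processes $S_{j,k}^{\mathrm{queried}} \cup S_{j,k}^{\mathrm{missed}}$, and $S_{j,k}^{\mathrm{queried}}$ consists of the $\mathbf{OPT}$ items that were \emph{queried via} bucket $(j,k)$ of the sparsifier, not the items whose $\mathbf{OPT}$ pair lies in $B_{j,k}$. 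Since an item $i$ may be queried through bucket $(j,k)$ while its optimal assignment is $(i,j')$ with $j' \neq j$ (the paper notes this explicitly in Section~\ref{sec:notation}), the subroutine for $(j,k)$ can insert pairs from entirely different buckets $B_{j',k'}$, and conversely a pair $(i,j) \in \mathbf{OPT} \cap B_{j,k}$ that happens to be queried via some other bucket $(j'',k'')$ is inserted during the processing of $(j'',k'')$, not $(j,k)$.

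The correct partition of $\mathbf{OPT}$ is therefore not by ``which bucket contains the optimal pair'' but by ``how the item is handled'': a queried optimal item belongs to the unique $S_{j,k}^{\mathrm{queried}}$ determined by which (disjoint) query bucket $\bar{B}_{j,k}$ captured it, while an unqueried optimal item belongs to the unique $S_{j,k}^{\mathrm{missed}}$ determined by its assignment knapsack and value bucket. The paper's proof establishes that $\{S_{j,k}^{\mathrm{queried}}\}_{j,k} \cup \{S_{j,k}^{\mathrm{missed}}\}_{j,k}$ is a partition of the items of $\mathbf{OPT}$ and that each subroutine inserts each processed item exactly once with its original $\mathbf{OPT}$ assignment; your argument needs to be restructured around this partition rather than the bucket partition of the pairs themselves.
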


Now, we are ready to prove the main result of this section using these properties. Formal definitions of subroutines \textsc{FillSmallBucket}, \textsc{FillLargeBucket}, and \textsc{FillAllSuperBuckets}, as well as proofs of Lemmas~\ref{lem:fill-large-less-capacity}-\ref{lem:baropt-equals-opt} will be provided in Section~\ref{sec:subroutines}.

\subsection{Proof of Theorem~\ref{thm:main-approximation}}
\label{sec:proof-of-main-thm}
We now prove the main result of this section using the lemmas stated above.

\begin{proof}[Proof of Theorem~\ref{thm:main-approximation}]

Fix any realization \( R \subseteq E \).


We first bound the size of the query set by analyzing the total weight selected for each knapsack. Algorithm~\ref{alg:query_set_construction} maintains 
$K = \left\lceil \frac{2}{\epsilon^{2}} \log\left(\frac{1}{\epsilon^{3}}\right) \right\rceil = O\left(\frac{1}{\epsilon^{2}}\log\frac{1}{\epsilon}\right)$
buckets per knapsack and iterates for $\alpha = \lceil 1/\epsilon \rceil$ rounds. In each specific round $t$ and bucket $(j,k)$, the algorithm selects items with a total weight of at most $(\frac{\tau(\epsilon^2)}{p} + 1) C_j$, where the $+1$ accounts for the discrete weight of the final item. Aggregating over all rounds and buckets, the total weight implicitly associated with knapsack $j$ is bounded by:
\[
w_j(Q) \le \alpha \cdot K \cdot O\left(\frac{\tau(\epsilon^2)}{p}\right) \cdot C_j = 
{
O\left(\frac{\log^{2}(1/\epsilon)}{\epsilon^{3} p}\right) C_j
}
\]
This weight bound establishes that the indicator vector $\mathbf{1}_Q$ lies within the natural linear programming relaxation of the problem, scaled by a factor $d_{LP} = 
{
O\left(\frac{\log^{2}(1/\epsilon)}{\epsilon^{3} p}\right)
}$. To translate this into the required polyhedral sparsification degree (which is defined with respect to $\mathcal{P}_{\mathcal{F}}$, the convex hull of \emph{integer} feasible solutions), we leverage the fact that the integrality gap of the standard linear relaxation for GAP is at most 2 (assuming feasible singletons). This implies polytope of relaxed LP $\mathcal{P}_{LP}$ is contained by $2 \cdot \mathcal{P}_{\mathcal{F}}$. Consequently, the sparsification degree is at most $2 d_{LP}$, which preserves the asymptotic bound:
\[
d = 
O\left(\frac{\log^{2}(1/\epsilon)}{\epsilon^{3} p}\right).
\]

\paragraph{Feasibility.}
For each procedure call to \textsc{FillLargeBucket}, \textsc{FillSmallBucket}, or \textsc{FillAllSuperBuckets}, let $\Delta w^{j,k}_{j'}(\cdot)$ denote the weight-change function for knapsack $j'$. Applying Lemmas~\ref{lem:fill-large-less-capacity}--\ref{lem:fill-super-less-capacity}  and summing over all $(j,k)$, we obtain
\[
w_{j'}(\mathbf{ALG})
    = \sum_{j,k} \Delta w^{j,k}_{j'}(\mathbf{ALG})
    \;\le\;
      \sum_{j,k} \Delta w^{j,k}_{j'}(\overline{\mathbf{OPT}})
    = w_{j'}(\mathbf{OPT}),
\]
where the final equality follows from Lemma~\ref{lem:baropt-equals-opt}.

Moreover, Lemma~\ref{lem:fill-large-less-capacity} and 
Lemma~\ref{lem:fill-small-less-capacity} ensure that 
$\mathbf{ALG}$ includes each item only once and that every item inserted into $\mathbf{ALG}$ comes from the queried active set $Q \cap R$, guaranteeing that $\mathbf{ALG}$ is a valid solution after sparsification and realization.
Since $\mathbf{OPT}$ is feasible, the weight domination
$w_{j'}(\mathbf{ALG}) \le w_{j'}(\mathbf{OPT})$ implies that
$\mathbf{ALG}$ is also feasible.
\paragraph{Approximation Guarantee.}
For each knapsack $j$ and bucket index $k \le K$, let $\Delta v^{j,k}$ denote 
the value increase function when calling \textsc{FillLargeBucket} or 
\textsc{FillSmallBucket}. For the super bucket $k=K+1$, we define 
$\Delta v^{j,K+1}$ analogously as the value contribution of 
\textsc{FillAllSuperBuckets} for knapsack $j$.
By Lemmas~\ref{lem:fill-large-value}--\ref{lem:baropt-equals-opt}, we have
\begin{align}
\E_R[v(\mathbf{ALG})]
&= \sum_{j}\sum_{k=0}^{K} \E_R[\Delta v^{j,k}(\mathbf{ALG})] + \sum_{j} \E_R[\Delta v^{j,K+1}(\mathbf{ALG})] \label{eq:linearity} \\
&\ge \Bigl(
      (1-2\epsilon)\sum_{j}\sum_{k=0}^{K} \E_R[\Delta v^{j,k}(\overline{\mathbf{OPT}})]
      \;-\; \epsilon^2 \sum_j M_j
    \Bigr) \notag\\
& ~~~~\;+\;
    \Bigl(
      \sum_{j}\E_R[\Delta v^{j,K+1}(\overline{\mathbf{OPT}})]
      \;-\; 3\epsilon\,\E_R[v(\mathbf{OPT})]
    \Bigr)
    \label{eq:value-lemma-bounds}\\
&\ge (1-2\epsilon)\sum_{j}\sum_{k=0}^{K+1} \E_R[\Delta v^{j,k}(\overline{\mathbf{OPT}})]
      \;-\; \epsilon^2 \sum_j M_j
      \;-\; 3\epsilon\,\E_R[v(\mathbf{OPT})] \label{eq:rearrange}\\
&= (1-2\epsilon)\E_R[v(\mathbf{OPT})]
    \;-\; \epsilon^2 \E_R[v(\mathbf{OPT})]
    \;-\; 3\epsilon\,\E_R[v(\mathbf{OPT})]
    \label{eq:lemma-benchmark} \\
&= (1 - 2\epsilon - \epsilon^2 - 3\epsilon)\E_R[v(\mathbf{OPT})]
    \nonumber \\
&\ge (1-6\epsilon)\E_R[v(\mathbf{OPT})], \label{eq:final-bound-super}
\end{align}
where equation~\eqref{eq:linearity} follows from linearity of expectation; equation~\eqref{eq:value-lemma-bounds} applies Lemmas~\ref{lem:fill-large-value} and~\ref{lem:fill-small-value} to derive the first parenthesized term, and applies Lemma~\ref{lem:fill-super-value} to derive the second parenthesized term; equation~\eqref{eq:rearrange} rearranges the terms; and equation~\eqref{eq:lemma-benchmark} uses $\E_R[v(\mathbf{OPT})] = \sum_j M_j$ together with Lemma~\ref{lem:baropt-equals-opt}. The final inequality holds for all $0 < \epsilon \le 1/6$.


\end{proof}

\section{Reconstruction Procedures}
\label{sec:subroutines}

This section presents the two reconstruction subroutines comprising \textsc{ReconstructionProcedure}, which incrementally construct both the optimal solution $\overline{\mathbf{OPT}}$ and a feasible algorithmic solution $\mathbf{ALG}$ based on queried active elements. We begin by establishing the notation required for our theoretical guarantees.

\subsection{Notation and Preliminaries}
\label{sec:notation}
Our analysis systematically partitions the optimal solution and tracks queried elements across value regimes.

\paragraph{Basic Definitions.}
\begin{itemize}
    \item $v_i^{\OPT}$: The value of item $i$ under the optimal assignment:
    \[
    v_i^{\OPT} = 
    \begin{cases}
        v_{ij} & \text{if there exists } j \text{ such that } (i, j) \in \mathbf{OPT}, \\
        0      & \text{otherwise}.
    \end{cases}
    \]
\end{itemize}

\paragraph{Buckets and Query Sets.}
\begin{itemize}

    \item $B_{j,k}$: 
    The collection of \emph{item--knapsack pairs} $(i,j)$ such that, when item $i$
    is assigned to knapsack $j$, its value $v_{ij}$ falls into the $k$-th value
    scale.

    \item $\bar{B}_{j,k,t}$:
    The set of active \emph{items} $i$ that are queried by 
    Algorithm~\ref{alg:query_set_construction} in round~$t$ via bucket $(j,k)$.
    Formally, $i \in \bar{B}_{j,k,t}$ iff there exists a pair $(i,j)\in B_{j,k}$
    selected in round~$t$ (i.e., the algorithm selects $(i,j)$ with
    $\beta(i,j)=k$ and updates $Q \leftarrow Q \cup \{i\}$).
    By construction, the item sets $\bar{B}_{j,k,t}$ over all $(j,k,t)$ are
    pairwise disjoint.

    \item $\bar{B}_{j,k} = \bigcup_{t=1}^{\alpha-1} \bar{B}_{j,k,t}$:
    The set of all active \emph{items} queried via bucket $(j,k)$ across all rounds.
    The disjointness over $t$ implies that the item sets $\bar{B}_{j,k}$ are also
    disjoint across different $(j,k)$.

\end{itemize}

\paragraph{Partition of Optimal Solution.}
We partition the items used by the optimal assignment $\mathbf{OPT}$ according to
whether they are captured by the query set.  Throughout, an item $i$ is said to
belong to $\OPT$ if there exists a knapsack $j' \in [m]$ such that
$(i,j') \in \mathbf{OPT}$.

\begin{itemize}
    \item 
    $S_{j,k}^{\mathrm{queried}}
    :=
    \{\, i \mid \exists j' \text{ with } (i,j') \in \mathbf{OPT},\; i \in \OPT \subseteq R,\; i \in \bar{B}_{j,k} \subseteq Q \cap R \,\}$:
    the set of items that appear in $\OPT$ and are queried via bucket $(j,k)$.  
    Note that an item $i$ may be assigned to some knapsack $j'$ in $\mathbf{OPT}$, yet still be queried through bucket $(j,k)$; the querying knapsack $j$ need not equal the assignment knapsack $j'$. Note that $S_{j,k}^{\mathrm{queried}} \subseteq \bar{B}_{j,k} \subseteq Q \cap R$.

    \item 
    $S_{j,k}^{\mathrm{missed}}
    :=
    \{\, i \mid (i,j) \in \mathbf{OPT},\; i \in \OPT \subseteq R,\; (i,j) \in B_{j,k},\; i \notin Q \,\}$:
    the set of items assigned to knapsack $j$ in $\mathbf{OPT}$ whose pair $(i,j)$ lies in bucket $B_{j,k}$, but $i$ is not queried in any round.

\end{itemize}

The union of two collections $\{S_{j,k}^{\mathrm{queried}}\}_{j,k}$ and $\{S_{j,k}^{\mathrm{missed}}\}_{j,k}$ forms a partition of the optimal solution $\OPT$:
\[
\OPT = \bigcup_{j,k} \left( S_{j,k}^{\mathrm{queried}} \cup S_{j,k}^{\mathrm{missed}} \right)
\]
where the union is disjoint.
To justify completeness, we rely on the fact that for every knapsack $j$, the bucket family $\{B_{j,k}\}_{k=0}^{K+1}$ covers the entire nonnegative value range. Bucket $0$ begins at value $0$, buckets $1$ through $K$ cover the geometric intervals up to $M_j/\epsilon$, and the super bucket $K\!+\!1$ collects all remaining items with $v_{ij} > M_j/\epsilon$.
Hence every item--knapsack pair chosen by $\mathbf{OPT}$ necessarily lies in some 
bucket $B_{j,k}$.  
Thus every item in $\mathbf{OPT}$ is covered either by 
$S_{j,k}^{\mathrm{queried}}$ or by $S_{j,k}^{\mathrm{missed}}$, establishing the 
claimed partition.

\paragraph{Excess Weight Event.}
\begin{itemize}
   \item $\mathcal{E}_{j,k,t} := \{ w_j(\bar{B}_{j,k,t}) \geq C_j \}$: Excess weight event for bucket $(j,k,t)$. Recall that $w_j(\bar{B}_{j,k,t}) = \sum_{i \in \bar{B}_{j,k,t}} w_{ij}$. Notice that, by construction, whenever $S_{j,k}^{\mathrm{missed}} \ne \emptyset$, bucket $(j,k)$ must exhaust its total query capacity $\tau(\epsilon^2)\, C_j$ in every round $t$. Therefore, by Lemma~\ref{lem:bucket_weight_guarantee}, we have $\Pr[\mathcal{E}_{j,k,t}] \geq 1 - \epsilon^2$ whenever $S_{j,k}^{\mathrm{missed}} \ne \emptyset$. 
\end{itemize}

\subsection{Subroutine Design}
Both subroutines handle different value regimes using an essentially same approach. For each bucket $(j,k)$:

\begin{itemize}

    \item \textbf{No missed items} ($S_{j,k}^{\mathrm{missed}} = \emptyset$):  
    Assign all queried items $S_{j,k}^{\mathrm{queried}}$ to both $\overline{\mathbf{OPT}}$ and $\mathbf{ALG}$ using the same item--knapsack assignments as in $\mathbf{OPT}$. No substitute items are needed in this case.

    \item \textbf{Missed items exist} ($S_{j,k}^{\mathrm{missed}} \neq \emptyset$):  
    Assign both $S_{j,k}^{\mathrm{queried}}$ and $S_{j,k}^{\mathrm{missed}}$ to $\overline{\mathbf{OPT}}$, using exactly the item--knapsack assignments they have in $\mathbf{OPT}$.  
    For $\mathbf{ALG}$, some items in $S_{j,k}^{\mathrm{missed}}$ are substituted by suitable queried alternatives from $\bar{B}_{j,k}$ according to the replacement rule of the procedure. All items in $S_{j,k}^{\mathrm{queried}}$ that are not used as replacements are then assigned to $\mathbf{ALG}$ using the same item--knapsack assignments as in $\mathbf{OPT}$.
\end{itemize}

The substitution strategies differ between regimes:

\paragraph{High-Value Regime ($1 \le k \le K$).} 
\textsc{FillLargeBucket} searches for lighter queried items to substitute each missed item $i \in S_{j,k}^{\mathrm{missed}}$:

\begin{enumerate}
    \item \textbf{Direct substitution:}  
    If there exists a lighter queried item that is not used by $\mathbf{OPT}$, then the reconstruction procedure assigns it to knapsack $j$ in place of $i$ in $\mathbf{ALG}$, while assigning $i$ to $\overline{\mathbf{OPT}}$ with its original assignment.
    
    \item \textbf{Redistribution:}  
    If no such unused substitute exists, the algorithm selects a set of lighter queried items that are used in $\mathbf{OPT}$, each of which can be reassigned to knapsack $j$ to substitute for $i$.  
    The algorithm forms a bundle consisting of $i$ together with these selected items; with high probability this bundle has size $1/\epsilon$.  
    The entire bundle is assigned to $\overline{\mathbf{OPT}}$ with their original assignment.  
    For $\mathbf{ALG}$, the algorithm compares two options:  
    (i) keeping all queried items in the bundle with their original $\mathbf{OPT}$ assignments, or  
    (ii) reassigning exactly one item to knapsack $j$ as the substitute for $i$.
    It chooses the option that yields the higher total value.
    This guarantees that $\mathbf{ALG}$ sacrifices at most the value of the least valuable item in the bundle, and therefore incurs at most a $1 - \frac{1}{\epsilon}$ fractional loss.
\end{enumerate}
Finally, the algorithm assigns all remaining queried items that appear in $\mathbf{OPT}$ to both $\overline{\mathbf{OPT}}$ and $\mathbf{ALG}$ using their original item--knapsack assignments.

\begin{algorithm}[!htb]
\caption{\textsc{FillLargeBucket}$(\overline{\mathbf{OPT}}, \mathbf{ALG}, j, k)$}
\label{alg:fill-large}
\begin{algorithmic}[1]
\WHILE{$S_{j,k}^{\mathrm{missed}} \neq \emptyset$} \label{line:large-while}
    \STATE Pick arbitrary $i \in S_{j,k}^{\mathrm{missed}}$
    \IF{there exists $i' \in \bar{B}_{j,k}$ such that $i' \notin \mathbf{OPT}$ and $i' \notin \mathbf{ALG}$} \label{line:large-if-available}
        \STATE $\overline{\mathbf{OPT}} \gets \overline{\mathbf{OPT}} \cup \{(i, j)\}$ and $\mathbf{ALG} \gets \mathbf{ALG} \cup \{(i', j)\}$
        \STATE $S_{j,k}^{\mathrm{missed}} \gets S_{j,k}^{\mathrm{missed}} \setminus \{i\}$ \label{line:large-case0-done} 
    \ELSE
        \FOR{$t = 1$ to $\alpha - 1$
        \label{line:large-t-loop}}
            \STATE \textbf{If} there exists $i'_t \in \bar{B}_{j,k,t}$ such that $(i'_t, j'_t) \in \mathbf{OPT}$ and $i'_t \notin \mathbf{ALG}$, \textbf{then} select such $i'_t$ and include index $t$ into set $T$.
        \ENDFOR
        
        \IF{$T \neq \emptyset$} \label{line:large-if-nonempty-T}
            \STATE $t^* \gets \mathop{\arg\min}\limits_{t \in T} \left( v^{\OPT}_{i'_t} \right)$ \label{line:large-argmin}
            \IF{$v^{\OPT}_{i'_{t^*}} \geq v^{\OPT}_i$} \label{line:large-case-a}
                \STATE $\mathbf{ALG} \gets \mathbf{ALG} \cup \left( \bigcup_{t \in T} (i'_t, j'_t) \right)$ \label{line:large-case-a-done}
            \ELSE \label{line:large-case-b}
                \STATE $\mathbf{ALG} \gets \mathbf{ALG} \cup \left( \bigcup_{t \in T \setminus \{t^*\}} (i'_t, j'_t) \right) \cup \{(i'_{t^*}, j)\}$ \label{line:large-case-b-done}
            \ENDIF
        \ENDIF
        \STATE $\overline{\mathbf{OPT}} \gets \overline{\mathbf{OPT}} \cup \left( \bigcup_{t \in T} (i'_t, j'_t) \right) \cup \{(i, j)\}$ \label{line:large-update-opt}
        \STATE $S_{j,k}^{\mathrm{queried}} \gets S_{j,k}^{\mathrm{queried}} \setminus \left\{ i'_t \mid t \in T \right\}$ \label{line:large-update-queried}
        \STATE $S_{j,k}^{\mathrm{missed}} \gets S_{j,k}^{\mathrm{missed}} \setminus \{i\}$ \label{line:large-update-missed}
    \ENDIF
\ENDWHILE
\FORALL{$i \in S_{j,k}^{\mathrm{queried}}$} \label{line:large-final-loop}
    \STATE Find $(i, j') \in \mathbf{OPT}$
    \STATE $\overline{\mathbf{OPT}} \gets \overline{\mathbf{OPT}} \cup \{(i, j')\}$ and $\mathbf{ALG} \gets \mathbf{ALG} \cup \{(i, j')\}$
    \STATE $S_{j,k}^{\mathrm{queried}} \gets S_{j,k}^{\mathrm{queried}} \setminus \{i\}$ \label{line:large-final-loop-done}
\ENDFOR
\RETURN $(\overline{\mathbf{OPT}}, \mathbf{ALG})$ \label{line:large-return}
\end{algorithmic}
\end{algorithm}

\paragraph{Low-Value Regime ($k = 0$).} 
\textsc{FillSmallBucket} uses density-based substitution, exploiting superior value-to-weight ratios:
\begin{itemize}
\item \textbf{No missed items:} All items in \( S_{j,0}^{\mathrm{queried}} \) are assigned to both solutions with their original assignments.

\item \textbf{Missed items exist:} The algorithm sorts the queried items in \( \bar{B}_{j,0} \) by their value contributions under \( \mathbf{OPT} \) divided by their weights in knapsack \( j \), i.e., \(\frac{v_i^{\OPT}}{w_{ij}}\), and constructs a prefix set \( S \) satisfying the following:
\begin{enumerate}
    \item The total weight of \( S \) does not exceed that of the missed items;
    \item Each item in \( S \) has a strictly larger ratio \( \frac{v_{ij}}{w_{ij}} \) than every item in \( S_{j,0}^{\mathrm{missed}} \);
    \item Each item in \( S \) provides more value when assigned to knapsack \( j \) than it does in \(\OPT\).
\end{enumerate}
Items in \(S_{j,0}^{\mathrm{queried}} \) and \(S_{j,0}^{\mathrm{missed}} \) are then assigned to \( \overline{\mathbf{OPT}} \) with their original assignments. For \( \mathbf{ALG} \), queried items not in \( S \) retain their original assignments, while items in \( S \) are reassigned to knapsack \( j \) to serve as substitutes for the missed items. This guarantees that \( \mathbf{ALG} \) sacrifices essentially only the value of the least valuable items among those assigned to \( \overline{\mathbf{OPT}} \). 
\end{itemize}

\begin{algorithm}[!htb]
\caption{\textsc{FillSmallBucket}$(\overline{\mathbf{OPT}}, \mathbf{ALG}, j, k)$}
\label{alg:fill-small}
\begin{algorithmic}[1]
\IF{$S_{j,k}^{\mathrm{missed}} = \emptyset$} \label{line:small-if-easy}
    \FORALL{$i \in S_{j,k}^{\mathrm{queried}}$} \label{line:small-loop-easy}
        \STATE Find $(i, j') \in \mathbf{OPT}$
        \STATE $\overline{\mathbf{OPT}} \gets \overline{\mathbf{OPT}} \cup \{(i, j')\}$ and $\mathbf{ALG} \gets \mathbf{ALG} \cup \{(i, j')\}$
        \STATE $S_{j,k}^{\mathrm{queried}} \gets S_{j,k}^{\mathrm{queried}} \setminus \{i\}$
    \ENDFOR
\ELSE \label{line:small-if-not-easy}
    \STATE Sort all $i \in \bar{B}_{j,k}$ by non-decreasing $\frac{v_i^{\OPT}}{w_{ij}}$ \label{line:small-sort}
    \STATE Let $S \subseteq \bar{B}_{j,k}$ be the largest prefix such that: \label{line:small-define-s} 
    \STATE \quad (i) $v_{ij} > v_i^{\OPT}$ for all $i \in S$
    \STATE \quad (ii) $w_j(S) := \sum_{i \in S} w_{ij} \leq \sum_{i \in S_{j,k}^{\mathrm{missed}}} w_{ij}$
    \FORALL{$i \in S_{j,k}^{\mathrm{queried}}$} \label{line:small-loop-main}
        \STATE Find $(i, j') \in \mathbf{OPT}$
        \STATE $\overline{\mathbf{OPT}} \gets \overline{\mathbf{OPT}} \cup \{(i, j')\}$
        \STATE $S_{j,k}^{\mathrm{queried}} \gets S_{j,k}^{\mathrm{queried}} \setminus \{i\}$
        \IF{$i \notin S$} \label{line:small-if-not-in-s}
            \STATE $\mathbf{ALG} \gets \mathbf{ALG} \cup \{(i, j')\}$
        \ELSE \label{line:small-else-in-s}
            \STATE $\mathbf{ALG} \gets \mathbf{ALG} \cup \{(i, j)\}$
            \STATE $S \gets S \setminus \{i\}$
        \ENDIF
    \ENDFOR
    \FORALL{$i \in S$} \label{line:small-loop-s}
        \STATE $\mathbf{ALG} \gets \mathbf{ALG} \cup \{(i, j)\}$
        \STATE $S \gets S \setminus \{i\}$
    \ENDFOR
    \FORALL{$i \in S_{j,k}^{\mathrm{missed}}$} \label{line:small-loop-s2}
        \STATE $\overline{\mathbf{OPT}} \gets \overline{\mathbf{OPT}} \cup \{(i, j)\}$
        \STATE $S_{j,k}^{\mathrm{missed}} \gets S_{j,k}^{\mathrm{missed}} \setminus \{i\}$
    \ENDFOR
\ENDIF
\RETURN $(\overline{\mathbf{OPT}}, \mathbf{ALG})$ \label{line:small-return}
\end{algorithmic}
\end{algorithm}

\paragraph{Super-Bucket Regime ($k = K\!+\!1$).}
In the super bucket, \textsc{FillAllSuperBuckets} follows the simplest rule among all subroutines and performs no substitutions.

\begin{enumerate}
    \item \textbf{Queried super items.}
    For every $i \in S^{\mathrm{queried}}_{j,K+1}$, the algorithm assigns it to both $\overline{\mathbf{OPT}}$ and $\mathbf{ALG}$ using its original item--knapsack assignment.

    \item \textbf{Missed super items.}
    For each missed item $i \in S^{\mathrm{missed}}_{j,K+1}$, the algorithm adds 
    $(i,j)$ only to $\overline{\mathbf{OPT}}$.  
    The algorithm leaves $\mathbf{ALG}$ unchanged and ignores all such items.
\end{enumerate}

Although this rule may appear wasteful, since every missed super item has relatively large value in its assigned bucket, the total loss incurred in the super-bucket regime will be shown to be globally very limited.

\begin{algorithm}[!htb]
\caption{\textsc{FillAllSuperBuckets}$(\overline{\mathbf{OPT}}, \mathbf{ALG})$}
\label{alg:fill-super}
\begin{algorithmic}[1]
\FOR{$j = 1$ to $m$}
    \FORALL{$i \in S^{\mathrm{queried}}_{j,K+1}$}
        \STATE Find $(i, j') \in \mathbf{OPT}$
        \STATE $\overline{\mathbf{OPT}} \gets \overline{\mathbf{OPT}} \cup \{(i, j')\}$
        \STATE $\mathbf{ALG} \gets \mathbf{ALG} \cup \{(i, j')\}$
        \STATE $S^{\mathrm{queried}}_{j,K+1} \gets S^{\mathrm{queried}}_{j,K+1} \setminus \{i\}$
    \ENDFOR

    \FORALL{$i \in S^{\mathrm{missed}}_{j,K+1}$}
        \STATE $\overline{\mathbf{OPT}} \gets \overline{\mathbf{OPT}} \cup \{(i, j)\}$
        \STATE $S^{\mathrm{missed}}_{j,K+1} \gets S^{\mathrm{missed}}_{j,K+1} \setminus \{i\}$
    \ENDFOR

\ENDFOR

\RETURN $(\overline{\mathbf{OPT}}, \mathbf{ALG})$
\end{algorithmic}
\end{algorithm}







With both subroutines formally defined, we now prove the key properties of the reconstruction procedure, conditioned on the event that each queried bucket contains sufficient active weight (as ensured with high probability by Lemma~\ref{lem:bucket_weight_guarantee} when $S_{j,k}^{\mathrm{missed}} \neq \emptyset$). 

\subsection{Substitution Guarantee for Missed Items}
\begin{lemma}[Properties of Missed Items]
\label{lem:missed-properties}
Condition on the event $S_{j,k}^{\mathrm{missed}} \neq \emptyset$, if $\mathcal{E}_{j,k,t}$ holds then the following properties hold for all $t$:
\begin{enumerate}
    \item \textsc{BucketFilled}: The total weight of each queried bucket satisfies
    \[
    w_j(\bar{B}_{j,k,t}) \geq C_j.
    \]
    
    \item \textsc{AlternativeExists}:
    \begin{itemize}
        \item If $k \neq 0$, then for all $i' \in \bar{B}_{j,k}$ and all $i \in S_{j,k}^{\mathrm{missed}}$,
        \[
        w_{i'j} \leq w_{ij}.
        \]
        \item If $k = 0$, then for all $i' \in \bar{B}_{j,k}$ and all $i \in S_{j,k}^{\mathrm{missed}}$,
        \[
        \frac{v_{i'j}}{w_{i'j}} \geq \frac{v_{ij}}{w_{ij}}.
        \]
    \end{itemize}
    
    \item \textsc{SizeMatch} (for $k \neq 0$): The queried bucket contains at least as many items as missed:
    \[
    \left| \bar{B}_{j,k,t} \right| \geq \left| S_{j,k}^{\mathrm{missed}} \right|.
    \]
\end{enumerate}
\end{lemma}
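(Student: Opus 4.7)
The plan is to dispatch the three claims in order, since each subsequent item builds on its predecessors. Property 1 (\textsc{BucketFilled}) is merely a restatement of the definition of the conditioning event $\mathcal{E}_{j,k,t} := \{w_j(\bar{B}_{j,k,t}) \geq C_j\}$, so nothing needs to be proved.

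For Property 2 (\textsc{AlternativeExists}), the key subgoal is a saturation invariant: whenever $S_{j,k}^{\mathrm{missed}} \neq \emptyset$, the budget $b(j,k)$ is fully depleted by the end of every round $t \in \{1, \ldots, \alpha-1\}$. To prove it, pick any $i \in S_{j,k}^{\mathrm{missed}}$ and observe that the pair $(i,j)$ satisfies every validity predicate used by the selection loop: $i \notin Q$ (since $i$ is missed throughout the entire execution), $\beta(i,j) = k$ (by definition of $S_{j,k}^{\mathrm{missed}}$), and $w_{ij} \leq C_j$ (from feasibility of $\mathbf{OPT}$). If $b(j,k) > 0$ survived to the end of the corresponding while loop, then $(i,j)$ would remain a valid candidate, contradicting the loop's termination condition. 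Given this invariant, for any queried $i' \in \bar{B}_{j,k}$, the pair $(i,j)$ was a valid rival at the exact moment $i'$ was chosen (since $b(j,k) > 0$ was necessary to select $i'$). The greedy selection rule then yields $w_{i'j} \leq w_{ij}$ when $k \geq 1$ (minimum-weight rule) and $v_{i'j}/w_{i'j} \geq v_{ij}/w_{ij}$ when $k = 0$ (maximum-density rule).

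For Property 3 (\textsc{SizeMatch}), define $w_{\min} := \min_{i \in S_{j,k}^{\mathrm{missed}}} w_{ij}$ and $w_{\max} := \max_{i' \in \bar{B}_{j,k,t}} w_{i'j}$. Property 2 gives $w_{\max} \leq w_{\min}$; Property 1 gives $|\bar{B}_{j,k,t}| \cdot w_{\max} \geq w_j(\bar{B}_{j,k,t}) \geq C_j$; and feasibility of $\mathbf{OPT}$ at knapsack $j$ gives $|S_{j,k}^{\mathrm{missed}}| \cdot w_{\min} \leq \sum_{i \in S_{j,k}^{\mathrm{missed}}} w_{ij} \leq C_j$. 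Chaining these inequalities yields $|\bar{B}_{j,k,t}| \geq C_j/w_{\max} \geq C_j/w_{\min} \geq |S_{j,k}^{\mathrm{missed}}|$.

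The main obstacle is the saturation invariant underlying Property 2, which requires careful attention to the fact that the algorithm's while loop is global across all knapsack--bucket pairs rather than per-bucket, and that once an item enters $Q$ it remains there across rounds, so the validity of the witness pair $(i,j)$ is preserved throughout the execution. Once saturation is established, Property 2 drops out directly from the greedy selection rule, and Property 3 reduces to a pigeonhole-style counting argument using only feasibility and the excess-weight event.
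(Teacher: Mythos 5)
Your proposal is correct and follows essentially the same route as the paper's (much terser) proof: Property 1 is definitional, Property 2 follows because the unqueried missed pair $(i,j)$ remains a valid candidate at every selection step so the greedy rule forces each queried item to dominate it in weight (for $k\ge 1$) or density (for $k=0$), and Property 3 is the same pigeonhole argument combining \textsc{BucketFilled} with the weight domination and the feasibility bound $\sum_{i\in S_{j,k}^{\mathrm{missed}}} w_{ij}\le C_j$. Your explicit saturation invariant and the attention to the global (cross-bucket) while loop are correct elaborations of details the paper leaves implicit, not a different argument.
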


\begin{proof}
\begin{enumerate}
    \item Holds by the definition of event $\mathcal{E}_{j,k,t}$, which includes \( w_j(\bar{B}_{j,k,t}) \geq C_j \).
    
    \item The algorithm selects items from bucket $B_{j,k}$ in increasing weight order (for $k \neq 0$) or decreasing value-density order (for $k = 0$). Since missed items were not selected, all chosen items dominate them in the respective ordering.
    
    \item For $k \neq 0$, since queried items are lighter than missed items and $\bar{B}_{j,k,t}$ achieves total weight $\geq C_j$, the cardinality of $\bar{B}_{j,k,t}$ must exceed that of $S_{j,k}^{\mathrm{missed}}$.
\end{enumerate}
\end{proof}

Lemma~\ref{lem:missed-properties} ensures successful substitution for missed items in the query set. The property \textsc{SizeMatch} ensures sufficient substitutes for high-value buckets while the property \textsc{BucketFilled} provides adequate total weight for low-value bucket substitutions. Finally, the property \textsc{AlternativeExists} guarantees each missed item has a superior substitute (lighter weight or higher density).

\subsection{Feasibility Analysis} 

We prove that both subroutines maintain feasibility by ensuring that $\mathbf{ALG}$ exclusively assigns queried items and and that its capacity consumption is dominated by $\overline{\mathbf{OPT}}$ in every knapsack. Furthermore, we confirm that both solutions strictly adhere to the matching constraint, assigning each item at most once.

\begin{proof}[Proof of Lemma~\ref{lem:fill-large-less-capacity}]
Consider a specific invocation of \textsc{FillLargeBucket} for bucket $(j,k)$ and recall that $S_{j,k}^{\mathrm{queried}} \subseteq \bar{B}_{j,k} \subseteq Q \cap R$.
By construction, both $\overline{\mathbf{OPT}}$ and $\mathbf{ALG}$ assigns items exclusively from $\bar{B}_{j,k}$, and no item is assigned multiple times during this call.  Since the aggregate bucket sets $\{\bar{B}_{j,k}\}_{j,k}$ are pairwise disjoint across all $(j,k)$ (Section~\ref{sec:notation}), we guarantee that no item is assigned more than once across the entire reconstruction process. In addition, every item added to $\mathbf{ALG}$ comes from the queried active set $Q \cap R$.

We now proceed to bound the capacity consumption. We analyze the weight increments $\Delta w_{j'}(\cdot)$ -- representing the total weight added to knapsack $j'$ during this specific call -- by examining each case in Algorithm~\ref{alg:fill-large} separately.

\paragraph{Direct Substitution: (Lines~\ref{line:large-if-available} to~\ref{line:large-case0-done}).}
In this case, the algorithm adds $(i, j)$ to $\overline{\mathbf{OPT}}$ and $(i', j)$ to $\mathbf{ALG}$. Since $S_{j,k}^{\mathrm{missed}} \neq \emptyset$, property \textsc{AlternativeExists} guarantees $w_{i'j} \le w_{ij}$. No other knapsacks are affected, so
\[
\Delta w_{j'}(\mathbf{ALG}) \le \Delta w_{j'}(\bar{\mathbf{OPT}}) \quad \text{for all } j'.
\]

\paragraph{Value-Based Rejection: (Lines~\ref{line:large-case-a} to~\ref{line:large-case-a-done}).} 
Let \( T \subseteq [\alpha - 1] \) be the set of indices selected during the loop on Line~\ref{line:large-t-loop}, and for each \( t \in T \), let \( (i'_t, j'_t) \in \mathbf{OPT} \) denote the assignment recovered from bucket \( \bar{B}_{j,k,t} \). 

\noindent $\overline{\mathbf{OPT}}$ includes:
\begin{itemize}
    \item All recovered assignments \( (i'_t, j'_t) \) for \( t \in T \), and
    \item The missed item \( (i, j) \).
\end{itemize}

\noindent $\ALG$ includes:
\begin{itemize}
    \item All recovered assignments \( (i'_t, j'_t) \) for \( t \in T \).
\end{itemize}

\noindent Since all assignments in $\mathbf{ALG}$ are also included in $\overline{\mathbf{OPT}}$, and $\overline{\mathbf{OPT}}$ additionally includes \( (i, j) \), we conclude:
\[
\Delta w_{j'}(\mathbf{ALG}) \leq \Delta w_{j'}(\overline{\mathbf{OPT}}) \quad \text{for all } j'.
\]

\paragraph{Value-Based Substitution: (Lines~\ref{line:large-case-b} to~\ref{line:large-case-b-done}).}
Let \( t^* \in T \) be the index minimizing \( v_{i'_t}^{\OPT} \), as chosen on Line~\ref{line:large-argmin}. These notations follow the same convention as in the algorithm.

\noindent $\overline{\mathbf{OPT}}$ includes:
\begin{itemize}
    \item All recovered assignments \( (i'_t, j'_t) \) for \( t \in T \), and
    \item The missed item \( (i, j) \).
\end{itemize}

\noindent $\mathbf{ALG}$ includes:
\begin{itemize}
    \item All recovered assignments \( (i'_t, j'_t) \) for \( t \in T \setminus \{t^*\} \), and
    \item The substitute assignment \( (i'_{t^*}, j) \).
\end{itemize}

\noindent Observe that every assignment in $\mathbf{ALG}$ is also present in $\overline{\mathbf{OPT}}$, except that $\mathbf{ALG}$ replaces \( (i, j) \in \overline{\mathbf{OPT}} \) with \( (i'_{t^*}, j) \). By \AlternativeExists, 
we have:
\[
w_{i'_{t^*}, j} \leq w_{i, j}.
\]
All other assignments are shared between $\mathbf{ALG}$ and $\overline{\mathbf{OPT}}$, so for every knapsack \( j' \), we conclude:
\[
\Delta w_{j'}(\mathbf{ALG}) \leq \Delta w_{j'}(\overline{\mathbf{OPT}}) \quad \text{for all } j'.
\]

\paragraph{Exact Substitution: (Lines~\ref{line:large-final-loop} to~\ref{line:large-final-loop-done}).}
In the final cleanup phase, both $\mathbf{ALG}$ and $\overline{\mathbf{OPT}}$ add exactly the same set of assignments $\{(i, j')\}$ for each \( i \in S_{j,k}^{\mathrm{queried}} \). Therefore,
\[
\Delta w_{j'}(\mathbf{ALG}) = \Delta w_{j'}(\bar{\mathbf{OPT}}) \quad \text{for all } j'.
\]

\smallskip
In all cases, for every knapsack \( j' \), we have
\[
\Delta w_{j'}(\mathbf{ALG}) \leq \Delta w_{j'}(\overline{\mathbf{OPT}}),
\]
completing the proof.
\end{proof}

\begin{proof}[Proof of Lemma~\ref{lem:fill-small-less-capacity}]
Consider a specific invocation of \textsc{FillSmallBucket} for bucket $(j,k)$ and recall that $S_{j,k}^{\mathrm{queried}} \subseteq \bar{B}_{j,k} \subseteq Q \cap R$.
By construction, both $\overline{\mathbf{OPT}}$ and $\mathbf{ALG}$ assigns items exclusively from $\bar{B}_{j,k}$, and no item is assigned multiple times at this step. Since the aggregate bucket sets $\{\bar{B}_{j,k}\}_{j,k}$ are pairwise disjoint  (Section~\ref{sec:notation}), we guarantee that no item is assigned more than once across the entire reconstruction process. Furthermore, the inclusion $\bar{B}_{j,k} \subseteq Q \cap R$ ensures that $\mathbf{ALG}$ relies solely on available queried active elements.

We now proceed to bound the capacity consumption. We analyze the weight increments $\Delta w_{j'}(\cdot)$ -- representing the total weight added to knapsack $j'$ during this specific call -- by examining the two distinct branches in Algorithm~\ref{alg:fill-small} separately.

\paragraph{Exact Substitution: (Line~\ref{line:small-if-easy}) \( S_{j,k}^{\mathrm{missed}} = \emptyset \).}
In this case, all items in \( S_{j,k}^{\mathrm{queried}} \) are assigned to both $\overline{\mathbf{OPT}}$ and $\mathbf{ALG}$ using the same knapsack assignments from $\mathbf{OPT}$. Hence, for every knapsack \( j' \),
\[
\Delta w_{j'}(\mathbf{ALG}) = \Delta w_{j'}(\overline{\mathbf{OPT}}).
\]

\paragraph{Density-Based Substitution: (Line~\ref{line:small-if-not-easy}) \( S_{j,k}^{\mathrm{missed}} \neq \emptyset \).}
The algorithm constructs a substitution set $S \subseteq \bar{B}_{j,k}$ satisfying $\sum_{i \in S} w_{ij} \leq \sum_{i \in S_{j,k}^{\mathrm{missed}}} w_{ij}$.

\noindent $\overline{\mathbf{OPT}}$ includes:
\begin{itemize}
\item All queried items $S_{j,k}^{\mathrm{queried}}$ with their original knapsack assignments from $\mathbf{OPT}$.
\item All missed items $S_{j,k}^{\mathrm{missed}}$ assigned to knapsack $j$.
\end{itemize}
$\ALG$ includes:
\begin{itemize}
\item Queried items not in $S$ with their original assignments.
\item All items in $S$ reassigned to knapsack $j$.
\end{itemize}

For each knapsack $j'$, if $j' \neq j$, then $\mathbf{ALG}$ assigns only a subset of the items that $\overline{\mathbf{OPT}}$ assigns to $j'$. On the other hand, if $j' = j$, then $\ALG$ makes the assignment in a way that 
$$\Delta w_j(\mathbf{ALG}) = \sum_{i \in S} w_{ij} \leq \sum_{i \in S_{j,k}^{\mathrm{missed}}} w_{ij} \leq \Delta w_j(\overline{\mathbf{OPT}}).$$
Therefore, $\Delta w_{j'}(\mathbf{ALG}) \leq \Delta w_{j'}(\overline{\mathbf{OPT}})$ for all $j'$.
\end{proof}

\begin{proof}[Proof of Lemma~\ref{lem:fill-super-less-capacity}]
Consider the execution of \textsc{FillAllSuperBuckets}. By construction, the sets $S^{\mathrm{queried}}_{j,K+1}$ and $S^{\mathrm{missed}}_{j,K+1}$ consist exclusively of active items in the super bucket $\bar B_{j,K+1}$, and these sets are disjoint from all other bucket sets. Since the aggregate bucket partition $\{\bar B_{j,k}\}$ is pairwise disjoint across all $(j,k)$, and each item is removed from $S^{\mathrm{queried}}_{j,K+1}$ or $S^{\mathrm{missed}}_{j,K+1}$ immediately after being processed, no item is ever assigned more than once to either $\mathbf{ALG}$ or $\overline{\mathbf{OPT}}$ throughout the reconstruction. In addition, every item assigned to $\mathbf{ALG}$ comes from $S^{\mathrm{queried}}_{j,K+1} \subseteq \bar{B}_{j,K+1} \subseteq Q \cap R$ (Section~\ref{sec:notation}).

We now compare the weight increments to each knapsack. For every $j$ and for every $i \in S^{\mathrm{queried}}_{j,K+1}$, the procedure adds $(i,j')$ to both $\overline{\mathbf{OPT}}$ and $\mathbf{ALG}$, where $(i,j')$ is the assignment of item $i$ in the optimal solution. Thus, every weight increase incurred by $\mathbf{ALG}$ due to queried super items is exactly matched by a corresponding increase in $\overline{\mathbf{OPT}}$.

For missed super items, i.e., $i \in S^{\mathrm{missed}}_{j,K+1}$, the procedure adds $(i,j)$ to $\overline{\mathbf{OPT}}$ but does not add anything to $\mathbf{ALG}$.

Combining these observations, we conclude that for every knapsack $j' \in [m]$,
\[
\Delta w_{j'}(\mathbf{ALG}) \le \Delta w_{j'}(\overline{\mathbf{OPT}}).
\]
\end{proof}

\subsection{Value Analysis}
\begin{proof}[Proof of Lemma~\ref{lem:fill-large-value}]
For a fixed call to \textsc{FillLargeBucket}, we consider each case in Algorithm~\ref{alg:fill-large} separately. 
Recall that \( \Delta v(\cdot) \) denotes the total value added to all knapsacks during this call.

\paragraph{Direct Substituition: (Lines~\ref{line:large-if-available} to~\ref{line:large-case0-done}).}
In this case, $\overline{\mathbf{OPT}}$ adds $(i, j)$ and $\mathbf{ALG}$ adds $(i', j)$. Since $i' \in \bar{B}_{j,k}$, by bucket definition, we have
\[
\Delta v(\mathbf{ALG}) = v_{i'j} \geq \frac{1}{1 + \epsilon^2} \cdot v_{ij} \ge (1 - \epsilon^2) \cdot \Delta v(\overline{\mathbf{OPT}}).
\]

The following two cases are the most critical and complex scenarios. An intuitive explanation is provided in Figure~\ref{fig:large-substitution-cases-example}.
\begin{figure}[!htb]
    \centering
    \fbox{%
        \begin{minipage}{\textwidth}
    \begin{subfigure}{\textwidth}
        \centering
        \hspace*{0.15cm}
        \scalebox{0.8}{
            \begin{tikzpicture}
                    \draw[dashed, rounded corners] (11-15,0.9+0.2) rectangle (14.5-15,2.3+0.2);
                    \fill[fill=green] (12.3-15,1.5+0.2) rectangle (13.2-15,2.05+0.2);
                    \draw[draw=blue,thick] (12.3-15,1.5+0.2) rectangle (13.2-15,2.2+0.2);
                    \draw [decorate, decoration={brace, amplitude=6pt}]
                        (12.3-15,1.5+0.2) -- (12.3-15,2.05+0.2)
                        node[midway, left=6pt] {\(v_i^{\OPT}\)};
                    \draw [decorate, decoration={brace, amplitude=6pt, mirror}]
                        (13.2-15,1.5+0.2) -- (13.2-15,2.2+0.2)
                        node[midway, right=6pt] {\(v_{ij}\)};
                    \draw [decorate, decoration={brace, amplitude=6pt, mirror}]
                        (12.3-15,1.5+0.2) -- (13.2-15,1.5+0.2)
                        node[midway, below=5pt] {\(w_{ij}\)};
                        
                \node at (-2, 0.55) {\textbf{\(\overline{\mathbf{OPT}}\)} adds \fcolorbox{black}{green}{\textbf{green}} :};
        
                \draw (0,0) -- (8,0);
                
                \foreach \x in {0,2,4,6,8} {
                \draw (\x,0.1) -- (\x,-0.1);
                }
                \foreach \x in {2,4,6,8} {
                \pgfmathtruncatemacro{\label}{\x/2}
                \node[below] at (\x-1,-0.05) {\(\bar{B}_{j,k,\label}\)};
                }
                
                \fill[green] (1-0.25, 0) rectangle (1+0.25, 1.8);
                \draw[draw=blue, thick] (1-0.25, 0) rectangle (1+0.25, 1.4);
                \node[] at (1, 0.55) {\(i'_{1}\)};
                \fill[green] (3-0.3, 0) rectangle (3+0.3, 1.45);
                \draw[draw=blue, thick] (3-0.3, 0) rectangle (3+0.3, 1.6);
                \node[] at (3, 0.55) {\(i'_{2}\)};
                \fill[green] (5-0.35, 0) rectangle (5+0.35, 1.55);
                \draw[draw=blue, thick] (5-0.35, 0) rectangle (5+0.35, 1.4);
                \node[] at (5, 0.55) {\(i'_{3}\)};
                \fill[green] (7-0.4, 0) rectangle (7+0.4, 1.85);
                \draw[draw=blue, thick] (7-0.4, 0) rectangle (7+0.4, 1.5);
                \node[] at (7, 0.55) {\(i'_{4}\)};
        
                \fill[green] (9-0.5, 0) rectangle (9+0.5, 1.35);
                \draw[draw=blue, thick] (9-0.5, 0) rectangle (9+0.5, 1.35);
                \node[blue] at (9, 0.55) {\(i\)};
                \node[below] at (9,-0.05) {\( i \in S^{\mathrm{missed}}_{j,k} \)};
                \node[red, text width=3cm, align=center] at (9, 2.1) {Smallest \( v_i^{\OPT} \) \\ in the group};
                \draw[->, red, thick] (9,1.7) to (9, 1.35);
        
                \draw[draw=red] (0-0.2,1.35) -- (9.75+0.6, 1.35);
            \end{tikzpicture}
        }

        \hspace*{0.15cm}
        \scalebox{0.8}{
            \begin{tikzpicture}
                \draw[dashed, rounded corners] (11-15,0.9+0.2) rectangle (14.5-15,2.3+0.2);
                \fill[fill=yellow] (12.3-15,1.5+0.2) rectangle (13.2-15,2.05+0.2);
                \draw[draw=blue,thick] (12.3-15,1.5+0.2) rectangle (13.2-15,2.2+0.2);
                \draw [decorate, decoration={brace, amplitude=6pt}]
                    (12.3-15,1.5+0.2) -- (12.3-15,2.05+0.2)
                    node[midway, left=6pt] {\(v_i^{\ALG}\)};
                \draw [decorate, decoration={brace, amplitude=6pt, mirror}]
                    (13.2-15,1.5+0.2) -- (13.2-15,2.2+0.2)
                    node[midway, right=6pt] {\(v_{ij}\)};
                \draw [decorate, decoration={brace, amplitude=6pt, mirror}]
                    (12.3-15,1.5+0.2) -- (13.2-15,1.5+0.2)
                    node[midway, below=5pt] {\(w_{ij}\)};
                    
            \node at (-2, 0.55) {\textbf{\(\mathbf{ALG}\)} adds \fcolorbox{black}{yellow}{\textbf{yellow}} :};
    
            \draw (0,0) -- (8,0);
            
            \foreach \x in {0,2,4,6,8} {
            \draw (\x,0.1) -- (\x,-0.1);
            }
            \foreach \x in {2,4,6,8} {
            \pgfmathtruncatemacro{\label}{\x/2}
            \node[below] at (\x-1,-0.05) {\(\bar{B}_{j,k,\label}\)};
            }
            
            \fill[yellow] (1-0.25, 0) rectangle (1+0.25, 1.8);
            \draw[draw=blue, thick] (1-0.25, 0) rectangle (1+0.25, 1.4);
            \node[] at (1, 0.55) {\(i'_{1}\)};
            \fill[yellow] (3-0.3, 0) rectangle (3+0.3, 1.45);
            \draw[draw=blue, thick] (3-0.3, 0) rectangle (3+0.3, 1.6);
            \node[] at (3, 0.55) {\(i'_{2}\)};
            \fill[yellow] (5-0.35, 0) rectangle (5+0.35, 1.55);
            \draw[draw=blue, thick] (5-0.35, 0) rectangle (5+0.35, 1.4);
            \node[] at (5, 0.55) {\(i'_{3}\)};
            \fill[yellow] (7-0.4, 0) rectangle (7+0.4, 1.85);
            \draw[draw=blue, thick] (7-0.4, 0) rectangle (7+0.4, 1.5);
            \node[] at (7, 0.55) {\(i'_{4}\)};
    
            \draw[draw=blue, thick] (9-0.5, 0) rectangle (9+0.5, 1.35);
            \node[blue] at (9, 0.55) {\(i\)};
            \node[below] at (9,-0.05) {\( i \in S^{\mathrm{missed}}_{j,k} \)};
    
            \draw[draw=red] (0-0.2,1.35) -- (9.75+0.6, 1.35);
            
            \node[text width=3cm, align=center] at (9, 2.1) {$\mathbf{ALG}$ ignores \textcolor{red}{i}};
            \draw[->, thick] (9,1.85) to (9, 1.35);
        \end{tikzpicture}
        }
        \caption*{\textbf{Value-Based Rejection}: Green rectangles represent \(\mathbf{OPT}\)'s total value gain (hence \(\overline{\mathbf{OPT}}\)'s), blue rectangles show each item's value when placed in knapsack \(j\), and yellow rectangles represent \(\mathbf{ALG}\)'s total value gain. The missed item \(i\) has the smallest value in \(\overline{\mathbf{OPT}}\), so \(\mathbf{ALG}\) ignores \(i\) while retaining other allocations. Notably, \(v_i^{\OPT}\) constitutes at most \(\frac{1}{\alpha}\) of \(\overline{\mathbf{OPT}}\)'s value gain.
        }
    \end{subfigure}

    \vspace{1em}

    \begin{subfigure}{\textwidth}
        \centering
        \scalebox{0.8}{
            \begin{tikzpicture}
                \draw[dashed, rounded corners] (11-15,0.9+0.2) rectangle (14.5-15,2.3+0.2);
                \fill[fill=green] (12.3-15,1.5+0.2) rectangle (13.2-15,2.05+0.2);
                \draw[draw=blue,thick] (12.3-15,1.5+0.2) rectangle (13.2-15,2.2+0.2);
                \draw [decorate, decoration={brace, amplitude=6pt}]
                    (12.3-15,1.5+0.2) -- (12.3-15,2.05+0.2)
                    node[midway, left=6pt] {\(v_i^{\OPT}\)};
                \draw [decorate, decoration={brace, amplitude=6pt, mirror}]
                    (13.2-15,1.5+0.2) -- (13.2-15,2.2+0.2)
                    node[midway, right=6pt] {\(v_{ij}\)};
                \draw [decorate, decoration={brace, amplitude=6pt, mirror}]
                    (12.3-15,1.5+0.2) -- (13.2-15,1.5+0.2)
                    node[midway, below=5pt] {\(w_{ij}\)};
                    
            \node at (-2, 0.55) {\textbf{\(\overline{\mathbf{OPT}}\)} adds \fcolorbox{black}{green}{\textbf{green}} :};
    
            \draw (0,0) -- (8,0);
            
            \foreach \x in {0,2,4,6,8} {
            \draw (\x,0.1) -- (\x,-0.1);
            }
            \foreach \x in {2,4,6,8} {
            \pgfmathtruncatemacro{\label}{\x/2}
            \node[below] at (\x-1,-0.05) {\(\bar{B}_{j,k,\label}\)};
            }
            
            \fill[green] (1-0.25, 0) rectangle (1+0.25, 1.8);
            \draw[draw=blue, thick] (1-0.25, 0) rectangle (1+0.25, 1.4);
            \node[] at (1, 0.55) {\(i'_{1}\)};
            \fill[green] (3-0.3, 0) rectangle (3+0.3, 1.45);
            \draw[draw=blue, thick] (3-0.3, 0) rectangle (3+0.3, 1.6);
            \node[] at (3, 0.55) {\(i'_{2}\)};
            \fill[green] (5-0.35, 0) rectangle (5+0.35, 1.25);
            \draw[draw=blue, thick] (5-0.35, 0) rectangle (5+0.35, 1.4);
            \node[red] at (5, 0.55) {\(i'_{3}\)};
            \node[red, text width=3cm, align=center] at (5, 2.1) {Smallest \( v_i^{\OPT} \) \\ in the group};
            \draw[->, red, thick] (5,1.7) to (5, 1.25);
            \fill[green] (7-0.4, 0) rectangle (7+0.4, 1.85);
            \draw[draw=blue, thick] (7-0.4, 0) rectangle (7+0.4, 1.5);
            \node[] at (7, 0.55) {\(i'_{4}\)};
    
            \fill[green] (9-0.5, 0) rectangle (9+0.5, 1.35);
            \draw[draw=blue, thick] (9-0.5, 0) rectangle (9+0.5, 1.35);
            \node[blue] at (9, 0.55) {\(i\)};
            \node[below] at (9,-0.05) {\( i \in S^{\mathrm{missed}}_{j,k} \)};
    
            \draw[draw=red] (0-0.2,1.25) -- (9.75+0.6, 1.25);
        \end{tikzpicture}
        }
        \scalebox{0.8}{
            \begin{tikzpicture}
                \draw[dashed, rounded corners] (11-15,0.9+0.2) rectangle (14.5-15,2.3+0.2);
                \fill[fill=yellow] (12.3-15,1.5+0.2) rectangle (13.2-15,2.05+0.2);
                \draw[draw=blue,thick] (12.3-15,1.5+0.2) rectangle (13.2-15,2.2+0.2);
                \draw [decorate, decoration={brace, amplitude=6pt}]
                    (12.3-15,1.5+0.2) -- (12.3-15,2.05+0.2)
                    node[midway, left=6pt] {\(v_i^{\ALG}\)};
                \draw [decorate, decoration={brace, amplitude=6pt, mirror}]
                    (13.2-15,1.5+0.2) -- (13.2-15,2.2+0.2)
                    node[midway, right=6pt] {\(v_{ij}\)};
                \draw [decorate, decoration={brace, amplitude=6pt, mirror}]
                    (12.3-15,1.5+0.2) -- (13.2-15,1.5+0.2)
                    node[midway, below=5pt] {\(w_{ij}\)};
                    
            \node at (-2, 0.55) {\textbf{\(\mathbf{ALG}\)} adds \fcolorbox{black}{yellow}{\textbf{yellow}} :};
    
            \draw (0,0) -- (8,0);
            
            \foreach \x in {0,2,4,6,8} {
            \draw (\x,0.1) -- (\x,-0.1);
            }
            \foreach \x in {2,4,6,8} {
            \pgfmathtruncatemacro{\label}{\x/2}
            \node[below] at (\x-1,-0.05) {\(\bar{B}_{j,k,\label}\)};
            }
            
            \fill[yellow] (1-0.25, 0) rectangle (1+0.25, 1.8);
            \draw[draw=blue, thick] (1-0.25, 0) rectangle (1+0.25, 1.4);
            \node[] at (1, 0.55) {\(i'_{1}\)};
            \fill[yellow] (3-0.3, 0) rectangle (3+0.3, 1.45);
            \draw[draw=blue, thick] (3-0.3, 0) rectangle (3+0.3, 1.6);
            \node[] at (3, 0.55) {\(i'_{2}\)};
            \fill[yellow] (5-0.35, 0) rectangle (5+0.35, 1.4);
            \draw[draw=blue, thick] (5-0.35, 0) rectangle (5+0.35, 1.4);
            \node[red] at (5, 0.55) {\(i'_{3}\)};
            \fill[yellow] (7-0.4, 0) rectangle (7+0.4, 1.85);
            \draw[draw=blue, thick] (7-0.4, 0) rectangle (7+0.4, 1.5);
            \node[] at (7, 0.55) {\(i'_{4}\)};
    
            \draw[draw=blue, thick] (9-0.5, 0) rectangle (9+0.5, 1.35);
            \node[blue] at (9, 0.55) {\(i\)};
            \node[below] at (9,-0.05) {\( i \in S^{\mathrm{missed}}_{j,k} \)};
    
            \draw[draw=red] (0-0.2,1.25) -- (9.75+0.6, 1.25);
        
            \draw[->, bend left=45, thick] (5,1.4) to (9, 1.4);
            
            \node[above=7pt, align=center] at ($(5,1.9)!0.5!(9,1.9)$) 
              {\textcolor{black}{$\mathbf{ALG}$ uses \textcolor{red}{\( i'_3 \)} to substitute \textcolor{blue}{\( i \)}}};
        \end{tikzpicture}
        }
        \caption*{\textbf{Value-Based Substitution}: In this case, the potential substitute \(i'_3\) has the smallest value in \(\overline{\mathbf{OPT}}\), so \(\mathbf{ALG}\) reassigns \(i'_3\) to knapsack \(j\) to substitute missed item \(i\) while retaining other allocations. Note that \(v_{i'_3}^{\OPT}\) constitutes at most \(\frac{1}{\alpha}\) of \(\overline{\mathbf{OPT}}\)'s value gain, and \(v_{i'_3j}\) nearly covers the entire \(v_{ij}\) value.}
    \end{subfigure}

    \caption{
    \textbf{Visualization of substitution in \textsc{FillLargeBucket}(\(\overline{\mathbf{OPT}}, \mathbf{ALG}, j, k\)).}
    Assume \(\alpha = 5\) and \( T = [\alpha - 1] = 4 \). For each \( t \in T \), let \( (i'_t, j'_t) \in \mathbf{OPT} \) be the selected potential substitution from bucket \( \bar{B}_{j,k,t} \). The upper subfigure visualizes \emph{Value-Based Rejection} case and the lower subfigure demonstrates \emph{Value-Based Substitution}.
    }
    \label{fig:large-substitution-cases-example}
    \end{minipage}
    }
\end{figure}

\paragraph{Value-Based Rejection (Lines~\ref{line:large-case-a} to~\ref{line:large-case-a-done}).}


We analyze this case under the condition that the Excess Weight Events $\mathcal{E}_{j,k,t}$ hold for all $t \in \{1, \ldots, \alpha-1\}$. 
Since this case arises only when $S_{j,k}^{\mathrm{missed}} \neq \emptyset$, 
we may condition on this event and apply Lemma~\ref{lem:bucket_weight_guarantee}, 
which ensures that each $\mathcal{E}_{j,k,t}$ occurs with probability at least $1 - \epsilon^{2}$. 
Moreover, because the buckets $\bar{B}_{j,k,t}$ for $t = 1, \ldots, \alpha-1$ consist of disjoint item sets, 
the events $\mathcal{E}_{j,k,t}$ are independent, and thus the joint probability that all such events hold is at least $(1 - \epsilon^{2})^{\alpha}$. 
Conditioned on these events, Lemma~\ref{lem:missed-properties} (\textsc{SizeMatch}) guarantees that each queried bucket $\bar{B}_{j,k,t}$ contains a suitable substitute item not already selected by $\mathbf{ALG}$.

In this case, the loop on Line~\ref{line:large-t-loop} selects a substitute from every bucket, so we define \( T = [\alpha - 1] \). For each \( t \in T \), let \( (i'_t, j'_t) \in \mathbf{OPT} \) denote the assignment recovered from bucket \( \bar{B}_{j,k,t} \).

\noindent $\overline{\mathbf{OPT}}$ includes:
\begin{itemize}
    \item All recovered assignments \( (i'_t, j'_t) \) for \( t \in T \), and
    \item The missed item \( (i, j) \).
\end{itemize}

\noindent $\mathbf{ALG}$ includes:
\begin{itemize}
    \item All recovered assignments \( (i'_t, j'_t) \) for \( t \in T \).
\end{itemize}

\noindent The value gained by $\overline{\mathbf{OPT}}$ is:
\[
\Delta v(\overline{\mathbf{OPT}}) = \sum_{t \in T} v_{i'_t}^{\OPT} + v_i^{\OPT},
\]
and the value gained by $\mathbf{ALG}$ is:
\[
\Delta v(\mathbf{ALG}) = \sum_{t \in T} v_{i'_t}^{\OPT}.
\]
By the algorithm’s condition, the missed item \( i \) has the smallest value among the \( \alpha \) items assigned by $\overline{\mathbf{OPT}}$, which implies:
\[
\Delta v(\mathbf{ALG}) \geq \left(1 - \frac{1}{\alpha} \right) \cdot \Delta v(\overline{\mathbf{OPT}}).
\]

Taking expectation over the Excess Weight Events, we conclude:
\[
\E_R[\Delta v(\mathbf{ALG})] \geq (1 - \epsilon^2)^\alpha \cdot \left(1 - \frac{1}{\alpha} \right) \cdot \Delta v(\overline{\mathbf{OPT}}).
\]

\paragraph{Value-Based Substitution: (Lines~\ref{line:large-case-b} to~\ref{line:large-case-b-done}).}
We continue under the same conditioning that all Excess Weight Events \( \mathcal{E}_{j,k,t} \) hold for \( t = 1 \) to \( \alpha - 1 \), which occurs with probability at least \( (1 - \epsilon^2)^\alpha \). As in the previous case, let \( T = [\alpha - 1] \), and for each \( t \in T \), let \( (i'_t, j'_t) \in \mathbf{OPT} \) be the recovered assignment from bucket \( \bar{B}_{j,k,t} \), following the algorithm's notation. Let \( t^* \in T \) denote the index minimizing \( v_{i'_t}^{\OPT} \), as chosen on Line~\ref{line:large-argmin}.

\noindent $\overline{\mathbf{OPT}}$ includes:
\begin{itemize}
    \item All recovered assignments \( (i'_t, j'_t) \) for \( t \in T \), and
    \item The missed item \( (i, j) \).
\end{itemize}

\noindent $\mathbf{ALG}$ includes:
\begin{itemize}
    \item All assignments \( (i'_t, j'_t) \) for \( t \in T \setminus \{t^*\} \), and
    \item The substitute assignment \( (i'_{t^*}, j) \).
\end{itemize}

\noindent The value gained by $\overline{\mathbf{OPT}}$ is:
\[
\Delta v(\overline{\mathbf{OPT}}) = \sum_{t \in T} v_{i'_t}^{\OPT} + v_i^{\OPT},
\]
and the value gained by $\mathbf{ALG}$ is:
\[
\Delta v(\mathbf{ALG}) = \sum_{t \in T \setminus \{t^*\}} v_{i'_t}^{\OPT} + v_{i'_{t^*}, j}.
\]

Since \( i'_{t^*} \in \bar{B}_{j,k} \), we have \( v_{i'_{t^*}, j} \geq (1 - \epsilon^2) \cdot v_i^{\OPT} \), hence
\[
\Delta v(\mathbf{ALG}) \geq (1 - \epsilon^2) \cdot \left( \sum_{t \in T \setminus \{t^*\}} v_{i'_t}^{\OPT} + v_i^{\OPT} \right).
\]

Moreover, by the algorithm’s condition, the item \( i'_{t^*} \) has the smallest value among the \( \alpha \) items assigned by \( \overline{\mathbf{OPT}} \). Hence,
\[
\sum_{t \in T \setminus \{t^*\}} v_{i'_t}^{\OPT} + v_i^{\OPT} \geq \left(1 - \frac{1}{\alpha} \right) \cdot \Delta v(\overline{\mathbf{OPT}}),
\]
which leads to:
\[
\Delta v(\mathbf{ALG}) \geq (1 - \epsilon^2) \cdot \left(1 - \frac{1}{\alpha} \right) \cdot \Delta v(\overline{\mathbf{OPT}}).
\]

Taking expectation over the Excess Weight Events, we conclude:
\[
\E_R[\Delta v(\mathbf{ALG})] \geq (1 - \epsilon^2)^\alpha \cdot \left(1 - \frac{1}{\alpha} - \epsilon^2 \right) \cdot \Delta v(\overline{\mathbf{OPT}}).
\]

\paragraph{Exact Subsititution: (Lines~\ref{line:large-final-loop} to~\ref{line:large-final-loop-done}).}
In the final cleanup phase, both $\mathbf{ALG}$ and $\overline{\mathbf{OPT}}$ add exactly the same set of assignments $\{(i, j')\}$ for each \( i \in S_{j,k}^{\mathrm{queried}} \). Therefore,
\[
\Delta v(\mathbf{ALG}) = \Delta v(\overline{\mathbf{OPT}}).
\]

\paragraph{Conclusion.}
Combining all cases, we have
\[
\E_R[\Delta v(\mathbf{ALG})] \geq (1 - \epsilon^2)^\alpha \cdot \left(1 - \frac{1}{\alpha} - \epsilon^2 \right) \cdot \Delta v(\overline{\mathbf{OPT}}).
\]
Substituting \( \alpha = 1/\epsilon \) and using \( (1 - \epsilon^2)^\alpha \geq 1 - \epsilon \), we get
\[
\E_R[\Delta v(\mathbf{ALG})] \geq (1 - 2\epsilon) \cdot \Delta v(\overline{\mathbf{OPT}}),
\]
completing the proof.
\end{proof}

Before proving the approximation guarantees of \textsc{FillSmallBucket}, we first formalize a structural property of the prefix \( S \).

\begin{claim} [Ordering of Value-to-Weight Ratios]
\label{clm:small-ratio-ordering}
Let \( S \subseteq \bar{B}_{j,0} \) be the prefix selected in Line~\ref{line:small-define-s} of Algorithm~\ref{alg:fill-small}.  
Then, for all \( i \in S \) and all \( i' \in \bar{B}_{j,0} \setminus S \), we have:
\[
\frac{v_i^{\OPT}}{w_{ij}} \leq \frac{v_{ij}}{w_{ij}} \quad \text{and} \quad
\frac{v_i^{\OPT}}{w_{ij}} \leq \frac{v_{i'}^{\OPT}}{w_{i'j}}.
\]
\end{claim}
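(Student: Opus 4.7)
The plan is to observe that both inequalities follow directly from how $S$ is constructed in Lines~\ref{line:small-sort}--\ref{line:small-define-s}, so no substantive argument is required beyond unpacking the definition. There is no real obstacle here; this is a book-keeping claim that isolates two structural properties of the prefix $S$ for later use in the value analysis of \textsc{FillSmallBucket}.

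First I would dispatch the inequality $\tfrac{v_i^{\OPT}}{w_{ij}} \le \tfrac{v_{ij}}{w_{ij}}$ for $i\in S$. By condition~(i) in Line~\ref{line:small-define-s}, every $i\in S$ satisfies $v_{ij} > v_i^{\OPT}$. Since weights are strictly positive (items are assumed individually feasible, so $w_{ij} \le C_j$ and $w_{ij}>0$), dividing both sides by $w_{ij}$ preserves the inequality and yields $\tfrac{v_{ij}}{w_{ij}} > \tfrac{v_i^{\OPT}}{w_{ij}}$, which in particular gives the claimed $\le$.

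Next I would handle the inequality $\tfrac{v_i^{\OPT}}{w_{ij}} \le \tfrac{v_{i'}^{\OPT}}{w_{i'j}}$ for $i\in S$ and $i'\in \bar B_{j,0}\setminus S$. By Line~\ref{line:small-sort}, the items of $\bar B_{j,0}$ are sorted in non-decreasing order of $\tfrac{v_i^{\OPT}}{w_{ij}}$, and by Line~\ref{line:small-define-s} the set $S$ is a prefix of this sorted list. Consequently, every element of $S$ appears no later than any element of $\bar B_{j,0}\setminus S$ in the sorted order, which is exactly the desired ratio inequality. Combining the two observations completes the proof of the claim.
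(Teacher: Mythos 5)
Your proof is correct and follows essentially the same route as the paper's: the first inequality is just condition~(i) of Line~\ref{line:small-define-s} divided by the positive weight $w_{ij}$, and the second follows from the sort in Line~\ref{line:small-sort} together with $S$ being a prefix. Nothing is missing.
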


\begin{proof}
By construction, the first inequality is exactly the condition under which
items are selected into \( S \).
Moreover, the items in \( \bar{B}_{j,0} \) are processed in non-decreasing
order of their value-to-weight ratio \( \frac{v_i^{\OPT}}{w_{ij}} \), and the
prefix \( S \) is chosen according to this ordering.
Consequently, every item in \( S \) has a value-to-weight ratio under
\( \mathbf{OPT} \) that is no larger than that of any item not in \( S \).
\end{proof}

We now proceed to the proof of the approximation guarantees of \textsc{FillSmallBucket}.

\begin{proof}[Proof of Lemma~\ref{lem:fill-small-value}]
We analyze the approximation guarantee by considering the two branches of Algorithm~\ref{alg:fill-small}, where $\Delta v(\cdot)$ denotes the total value added during this call.

\paragraph{Exact Substitution: (Line~\ref{line:small-if-easy}).}
If \( S_{j,k}^{\mathrm{missed}} = \emptyset \), then both \( \mathbf{ALG} \) and \( \overline{\mathbf{OPT}} \) add exactly the same set of assignments \( S_{j,k}^{\mathrm{queried}} \). Hence,
\[
\Delta v(\mathbf{ALG}) = \Delta v(\overline{\mathbf{OPT}}).
\]

The case below is particularly critical and involve detailed argument. Figure~\ref{fig:small-substitution-cases-example} offers an intuitive explanation of this scenario.
\input{figure_small}
\paragraph{Density-Based Substitution: (Line~\ref{line:small-define-s}) \( S_{j,k}^{\mathrm{missed}} \neq \emptyset \).}
We condition on the Excess Weight Events $\mathcal{E}_{j,k,t}$ holding for all $t$. Under these events, Lemma~\ref{lem:missed-properties} ensures sufficient substitutes exist in each queried bucket $\bar{B}_{j,0,t}$. Let $S \subseteq \bar{B}_{j,0}$ be the prefix selected in Line~\ref{line:small-define-s}, and $i^*$ be the first excluded item.

\begin{itemize}
\item[(1)] If \( w_j(S) + w_{i^*j} \geq w_j(S_{j,k}^{\mathrm{missed}}) \): \\

\paragraph{Value decomposition.}
By construction of the algorithm, we have:
\[
\Delta v(\mathbf{ALG}) + v_{i^*j} = \sum_{i \in S \cup \{i^*\}} v_{ij} + \sum_{i \in \bar{B}_{j,0} \setminus S} v_i^{\OPT},
\]
\[
\Delta v(\overline{\mathbf{OPT}}) = \sum_{i \in S} v_i^{\OPT} + \sum_{i \in \bar{B}_{j,0} \setminus S} v_i^{\OPT} + \sum_{i \in S_{j,k}^{\mathrm{missed}}} v_{ij}.
\]

\paragraph{Density-based bound.}
We now apply Lemma~\ref{lem:missed-properties} (\textsc{AlternativeExists}) under the Excess Weight Events \( \mathcal{E}_{j,0,t} \). 
It ensures that for any \( i \in S_{j,k}^{\mathrm{missed}} \) and any \( i \in S \cup \{i^*\} \),
\[
\frac{v_{i'j}}{w_{i'j}} \ge \frac{v_{ij}}{w_{ij}}.
\]

Let \( d_{\max} := \max_{i \in S_{j,k}^{\mathrm{missed}}} \frac{v_{ij}}{w_{ij}} \). Then:
\[
\sum_{i \in S \cup \{i^*\}} v_{ij} 
= \sum_{i \in S \cup \{i^*\}} \frac{v_{ij}}{w_{ij}} \cdot w_{ij}
\ge d_{\max} \cdot \left( w_j(S) + w_{i^*j} \right)
\ge d_{\max} \cdot w_j(S_{j,k}^{\mathrm{missed}}),
\]
\[
\sum_{i \in S_{j,k}^{\mathrm{missed}}} v_{ij}
= \sum_{i \in S_{j,k}^{\mathrm{missed}}} \frac{v_{ij}}{w_{ij}} \cdot w_{ij}
\le d_{\max} \cdot w_j(S_{j,k}^{\mathrm{missed}}).
\]
Thus,
\[
\sum_{i \in S \cup \{i^*\}} v_{ij} \ge \sum_{i \in S_{j,k}^{\mathrm{missed}}} v_{ij},
\]
and therefore,
\[
\Delta v(\mathbf{ALG}) + v_{i^*j} \ge \sum_{i \in \bar{B}_{j,0} \setminus S} v_i^{\OPT} + \sum_{i \in S_{j,k}^{\mathrm{missed}}} v_{ij}.
\]

\paragraph{Ordering-based bound.}
By construction, the items in \( \bar{B}_{j,0} \) are sorted in increasing order of \( \frac{v_i^{\OPT}}{w_{ij}} \). Since \( i^* \) is the first item not selected in \( S \), we know:
\[
\max_{i \in S} \frac{v_i^{\OPT}}{w_{ij}} \le \frac{v_{i^*}^{\OPT}}{w_{i^*j}} = \min_{i \in \bar{B}_{j,0} \setminus S} \frac{v_i^{\OPT}}{w_{ij}}.
\]
Define \( d_{i^*} := \frac{v_{i^*}^{\OPT}}{w_{i^*j}} \). 

Under the Excess Weight Events, we know:
\[
w_j(\bar{B}_{j,0}) \ge (\alpha - 1) \cdot C_j,
\]
and since \( S_{j,k}^{\mathrm{missed}} \subseteq \OPT \), we have:
\[
w_j(S_{j,k}^{\mathrm{missed}}) \le C_j,
\quad
w_j(S) \le w_j(S_{j,k}^{\mathrm{missed}}) \le C_j,
\]
so:
\[
w_j(\bar{B}_{j,0} \setminus S) \ge (\alpha - 2) \cdot C_j.
\]

Now,
\[
\sum_{i \in \bar{B}_{j,0} \setminus S} v_i^{\OPT} 
= \sum_{i \in \bar{B}_{j,0} \setminus S} \frac{v_i^{\OPT}}{w_{ij}} \cdot w_{ij}
\ge d_{i^*} \cdot w_j(\bar{B}_{j,0} \setminus S)
\ge d_{i^*} \cdot (\alpha - 2) \cdot C_j,
\]
\[
\sum_{i \in S} v_i^{\OPT}
= \sum_{i \in S} \frac{v_i^{\OPT}}{w_{ij}} \cdot w_{ij}
\le d_{i^*} \cdot w_j(S)
\le d_{i^*} \cdot C_j.
\]
Thus,
\[
\sum_{i \in \bar{B}_{j,0} \setminus S} v_i^{\OPT} \ge (\alpha - 2) \cdot \sum_{i \in S} v_i^{\OPT}.
\]
Since each \( i \in S \) satisfies \( v_{ij} > v_i^{\OPT} \), it follows that:
\[
\sum_{i \in S} v_{ij} \ge \sum_{i \in S} v_i^{\OPT}.
\]

Combining, we get:
\[
\Delta v(\mathbf{ALG}) + v_{i^*j}
> \sum_{i \in S} v_{ij} + \sum_{i \in \bar{B}_{j,0} \setminus S} v_i^{\OPT}
\ge \sum_{i \in S} v_i^{\OPT} + (\alpha - 2) \cdot \sum_{i \in S} v_i^{\OPT}
= (\alpha - 1) \cdot \sum_{i \in S} v_i^{\OPT}.
\]

\paragraph{Value recombination.}
From above, we have:
\[
\Delta v(\mathbf{ALG}) + v_{i^*j} \ge \sum_{i \in \bar{B}_{j,0} \setminus S} v_i^{\OPT} + \sum_{i \in S_{j,k}^{\mathrm{missed}}} v_{ij},
\]
and
\[
\Delta v(\mathbf{ALG}) + v_{i^*j} \ge (\alpha - 1) \cdot \sum_{i \in S} v_i^{\OPT}.
\]
Multiplying the first inequality by \( \frac{\alpha - 1}{\alpha} \) and the second by \( \frac{1}{\alpha} \), and adding:
\[
\Delta v(\mathbf{ALG}) + v_{i^*j}
\ge \frac{\alpha - 1}{\alpha} \cdot \left( \sum_{i \in \bar{B}_{j,0} \setminus S} v_i^{\OPT} + \sum_{i \in S_{j,k}^{\mathrm{missed}}} v_{ij} \right)
+ \frac{1}{\alpha} \cdot (\alpha - 1) \cdot \sum_{i \in S} v_i^{\OPT}.
\]
This implies:
\[
\Delta v(\mathbf{ALG}) + v_{i^*j} \ge \frac{\alpha - 1}{\alpha} \cdot \Delta v(\overline{\mathbf{OPT}}).
\]

\paragraph{Final adjustment.}
By definition of the bucket, we know \( v_{i^*j} \le \epsilon^2 \cdot M_j \), so we conclude:
\[
\Delta v(\mathbf{ALG}) \ge \left(1 - \tfrac{1}{\alpha} \right) \cdot \Delta v(\overline{\mathbf{OPT}}) - \epsilon^2 \cdot M_j.
\]
This completes the proof for subcase (1).

\item[(2)] If \( w_j(S) + w_{i^*j} < w_j(S_{j,k}^{\mathrm{missed}}) \): \\

\paragraph{Value decomposition.} 
By construction of the algorithm, we have:
\[
\Delta v(\mathbf{ALG}) = \sum_{i \in S} v_{ij} + \sum_{i \in \bar{B}_{j,0} \setminus S} v_i^{\OPT},
\]
\[
\Delta v(\overline{\mathbf{OPT}}) = \sum_{i \in S} v_i^{\OPT} + \sum_{i \in \bar{B}_{j,0} \setminus S} v_i^{\OPT} + \sum_{i \in S_{j,k}^{\mathrm{missed}}} v_{ij}.
\]

Since each \( i \in S \) satisfies \( v_{ij} > v_i^{\OPT} \), we obtain:
\[
\Delta v(\mathbf{ALG}) \ge \sum_{i \in S} v_i^{\OPT} + \sum_{i \in \bar{B}_{j,0} \setminus S} v_i^{\OPT}.
\]

\paragraph{Density-based bound.} 
Let \( d_{\max} := \max_{i \in S_{j,k}^{\mathrm{missed}}} \frac{v_{ij}}{w_{ij}} \). 
By construction, for any \( i \in S_{j,k}^{\mathrm{missed}} \) and any \( i' \in \bar{B}_{j,0} \setminus S \), we have:
\[
\frac{v_{i'}^{\OPT}}{w_{i'j}} 
\ge 
\frac{v_{i^*}^{\OPT}}{w_{i^*j}} 
\ge 
\frac{v_{i^*j}}{w_{i^*j}} 
\ge 
\frac{v_{ij}}{w_{ij}}.
\]
The second inequality holds because \( i^* \notin S \), and by our assumption that \( w_j(S) + w_{i^*j} < w_j(S_{j,k}^{\mathrm{missed}}) \), its exclusion is not due to weight constraints. Rather, it must be that \( v_{i^*}^{\OPT} \ge v_{i^*j} \).

Also, under the Excess Weight Event, Lemma~\ref{lem:missed-properties} (\textsc{AlternativeExists}) ensures that for any \( i \in S_{j,k}^{\mathrm{missed}} \) and any \( i' \in S \), 
\[
\frac{v_{i'j}}{w_{i'j}} \ge \frac{v_{ij}}{w_{ij}}.
\]
Therefore, the value of \( \mathbf{ALG} \) satisfies:
\[
\begin{aligned}
\Delta v(\mathbf{ALG})
&= \sum_{i \in S} v_{ij} + \sum_{i \in \bar{B}_{j,0} \setminus S} v_i^{\OPT} \\
&= \sum_{i \in S} \frac{v_{ij}}{w_{ij}} \cdot w_{ij} + \sum_{i \in \bar{B}_{j,0} \setminus S} \frac{v_i^{\OPT}}{w_{ij}} \cdot w_{ij} \\
&\ge \sum_{i \in S} d_{\max} \cdot w_{ij} + \sum_{i \in \bar{B}_{j,0} \setminus S} d_{\max} \cdot w_{ij} \\
&= d_{\max} \cdot w_j(\bar{B}_{j,0}) \\
&\ge d_{\max} \cdot (\alpha - 1) C_j,
\end{aligned}
\]
where the final inequality holds under the assumption that all Excess Weight Events \( \mathcal{E}_{j,0,t} \) hold.

\paragraph{Bounding missed value.}
\[
\begin{aligned}
\sum_{i \in S_{j,k}^{\mathrm{missed}}} v_{ij}
&= \sum_{i \in S_{j,k}^{\mathrm{missed}}} \frac{v_{ij}}{w_{ij}} \cdot w_{ij} \\
&\le \sum_{i \in S_{j,k}^{\mathrm{missed}}} d_{\max} \cdot w_{ij} \\
&= d_{\max} \cdot w_j(S_{j,k}^{\mathrm{missed}}) \\
&\le d_{\max} \cdot C_j,
\end{aligned}
\]
since \( S_{j,k}^{\mathrm{missed}} \subseteq \OPT \) and the total weight packed by \( \OPT \) into knapsack \( j \) must be at most \( C_j \).

Combining the two bounds, we get:
\[
\Delta v(\mathbf{ALG}) \ge (\alpha - 1) \cdot \sum_{i \in S_{j,k}^{\mathrm{missed}}} v_{ij}.
\]

\paragraph{Value recombination.}
From above:
\[
\Delta v(\mathbf{ALG}) > \sum_{i \in S} v_i^{\OPT} + \sum_{i \in \bar{B}_{j,0} \setminus S} v_i^{\OPT},
\]
and
\[
\Delta v(\mathbf{ALG}) \ge (\alpha - 1) \cdot \sum_{i \in S_{j,k}^{\mathrm{missed}}} v_{ij}.
\]
We multiply the first inequality by \( \frac{\alpha - 1}{\alpha} \), and the second by \( \frac{1}{\alpha} \), and then add both sides:
\[
\frac{\alpha - 1}{\alpha} \cdot \Delta v(\mathbf{ALG}) + \frac{1}{\alpha} \cdot \Delta v(\mathbf{ALG})
> \frac{\alpha - 1}{\alpha} \cdot \left( \sum_{i \in S} v_i^{\OPT} + \sum_{i \in \bar{B}_{j,0} \setminus S} v_i^{\OPT} \right)
+ \frac{1}{\alpha} \cdot (\alpha - 1) \cdot \sum_{i \in S_{j,k}^{\mathrm{missed}}} v_{ij}.
\]
This implies:
\[
\Delta v(\mathbf{ALG}) \ge \frac{\alpha - 1}{\alpha} \cdot \left( \sum_{i \in S} v_i^{\OPT} + \sum_{i \in \bar{B}_{j,0} \setminus S} v_i^{\OPT} + \sum_{i \in S_{j,k}^{\mathrm{missed}}} v_{ij} \right)
= \frac{\alpha - 1}{\alpha} \cdot \Delta v(\overline{\mathbf{OPT}}).
\]

This completes the proof for subcase (2).

\end{itemize}

In both subcases, we have:
\[
\Delta v(\mathbf{ALG}) \geq \left(1 - \tfrac{1}{\alpha} \right) \cdot \Delta v(\overline{\mathbf{OPT}}) - \epsilon^2 \cdot M_j,
\]
provided that all Excess Weight Events \( \mathcal{E}_{j,k,t} \) for \( t = 1, \dots, \alpha \) hold simultaneously. 
Since each such event occurs independently with probability at least \( 1 - \epsilon^2 \), the joint probability is at least \( (1 - \epsilon^2)^\alpha \). Therefore, in expectation:
\[
\E_R[\Delta v(\mathbf{ALG})] \geq (1 - \epsilon^2)^\alpha \cdot \left( \left(1 - \tfrac{1}{\alpha} \right) \cdot \Delta v(\overline{\mathbf{OPT}}) - \epsilon^2 \cdot M_j \right).
\]

\paragraph{Conclusion.}
Combining both cases, and substituting \( \alpha = 1/\epsilon \), noting that \( (1 - \epsilon^2)^\alpha \geq 1 - \epsilon \), we conclude:
\[
\E_R[\Delta v(\mathbf{ALG})] \geq (1 - 2\epsilon) \cdot \Delta v(\overline{\mathbf{OPT}}) - \epsilon^2 \cdot M_j,
\]
as claimed.
\end{proof}

We now proceed to the approximation guarantee of \textsc{FillAllSuperBuckets}.

\begin{proof}[Proof of Lemma~\ref{lem:fill-super-value}]
Let $\Delta_v(\bar{\mathbf{OPT}})$ and $\Delta_v(\mathbf{ALG})$ denote the changes in value of the current optimal solution $\bar{\mathbf{OPT}}$ and the algorithm's solution $\mathbf{ALG}$, respectively, during the execution of \textsc{FillAllSuperBuckets}.
By construction,
\[
\mathbb{E}_R[\Delta v(\overline{\mathbf{OPT}})]
-
\mathbb{E}_R[\Delta v(\mathbf{ALG})]
=
\mathbb{E}_R\!\left[\sum_{j=1}^{m}\sum_{i\in S^{\mathrm{missed}}_{j,K+1}} v_{ij}\right].
\]

We define the fixed set
\[
J := \{\, j\in[m] : \exists (i,j)\in B_{j,K+1}\ \text{with } i\notin Q \ \text{and } w_{ij} \le C_j \,\},
\]
which is fixed because Algorithm~\ref{alg:query_set_construction} is deterministic and non-adaptive. Then

\begin{align*}
\mathbb{E}_R[\Delta v(\overline{\mathbf{OPT}})] - 
\mathbb{E}_R[\Delta v(\mathbf{ALG})]
&= \textstyle
\mathbb{E}_R\!\left[\sum_{j\in J}\sum_{i\in S^{\mathrm{missed}}_{j,K+1}} v_{ij}\right] \\
&\le \textstyle
\mathbb{E}_R\!\left[\sum_{j\in J}\sum_{(i,j)\in\mathbf{OPT}} v_{ij}\right] \\
&= \textstyle
\sum_{j\in J} M_j.
\end{align*}

\noindent
The first equality uses $S^{\mathrm{missed}}_{j,K+1}=\emptyset$ for $j\notin J$ (Section~\ref{sec:notation}), the inequality follows from $S^{\mathrm{missed}}_{j,K+1}\subseteq\{\,i:(i,j)\in\mathbf{OPT}\,\}$, and the final equality uses linearity of expectation and $M_j=\mathbb{E}_R[v(\mathbf{OPT}_j)]$, where $v(\mathbf{OPT}_j)=\sum_{(i,j)\in\mathbf{OPT}} v_{ij}$.
Thus,
\begin{equation}
\label{eq:opt_alg_diff}
\mathbb{E}_R[\Delta v(\overline{\mathbf{OPT}})]
-
\mathbb{E}_R[\Delta v(\mathbf{ALG})]
\le
\sum_{j\in J} M_j.    
\end{equation}

We now upper bound $\sum_{j\in J}M_j$ in terms of $\mathbb{E}_R[v(\mathbf{OPT})]$.  
Fix any realization $R$ and any $j\in J$. For each round $t$, let $\hat{B}_{j,K+1,t}$ denote the set of all items (active or inactive) that Algorithm~\ref{alg:query_set_construction} queries through bucket $(j,K+1)$ in that round. By construction, the families $\{\hat{B}_{j,K+1,t}\}_{j,t}$ are pairwise disjoint across different choices of $j$ and $t$. Because $j\in J$, bucket $(j,K+1)$ always exhausts its query capacity, so by Lemma~\ref{lem:bucket_weight_guarantee},
\[
\mathcal{E}_{j,K+1,t}:=\{ w_j(\bar{B}_{j,K+1,t})\ge C_j\}
\quad\text{holds with probability }1-\epsilon^2.
\]
Fix $t=1$.  Conditional on $\mathcal{E}_{j,K+1,1}$, there exists an active item $i\in\bar{B}_{j,K+1,1}\subseteq\hat{B}_{j,K+1,1}$ such that $v_{ij}>M_j/\epsilon$.  
If $(i,j)\in\mathbf{OPT}$ then $v^{\mathrm{OPT}}_i=v_{ij}>M_j/\epsilon$.  
If $(i,j)\notin\mathbf{OPT}$ and $v^{\mathrm{OPT}}_i+v(\mathbf{OPT}_j)\le M_j/\epsilon$, then replacing $\mathbf{OPT}_j$ by assigning $i$ improves the objective, contradicting optimality.  
Hence
\[
v_i^{\mathrm{OPT}} + v(\mathbf{OPT}_j) > M_j/\epsilon
\quad\text{conditional on }\mathcal{E}_{j,K+1,1}.
\]
Summing over $i\in\hat{B}_{j,K+1,1}$ gives
\[
\sum_{i\in\hat{B}_{j,K+1,1}} v_i^{\mathrm{OPT}} + v(\mathbf{OPT}_j)
> M_j/\epsilon
\quad\text{conditional on }\mathcal{E}_{j,K+1,1}.
\]
If $\mathcal{E}_{j,K+1,1}$ fails, worst case none of the items are active, so
\[
\sum_{i\in\hat{B}_{j,K+1,1}} v_i^{\mathrm{OPT}} + v(\mathbf{OPT}_j)\ge 0.
\]

Using these two cases,
\begin{align*}
    \mathbb{E}_R\!&\left[\sum_{i\in\hat{B}_{j,K+1,1}} v_i^{\mathrm{OPT}}
+ v(\mathbf{OPT}_j)\right]\\
    &=\mathbb{E}_R\left[\sum_{i\in\hat{B}_{j,K+1,1}} v_i^{\mathrm{OPT}}
+ v(\mathbf{OPT}_j) \mid \mathcal{E}_{j,K+1,1}\right] \cdot \Pr[\mathcal{E}_{j,K+1,1}]\\
    &\quad+\mathbb{E}_R\left[\sum_{i\in\hat{B}_{j,K+1,1}} v_i^{\mathrm{OPT}} + v(\mathbf{OPT}_j) \mid \neg\mathcal{E}_{j,K+1,1}]\right] \cdot \Pr[\neg\mathcal{E}_{j,K+1,1}]\\
    &\geq (1-\epsilon^2)\cdot \frac{M_j}{\epsilon}.
\end{align*}
    

Summing over all $j\in J$ and using linearity of expectation, we obtain
\begin{align*}
    \sum_{j\in J}
        &\mathbb{E}_R\!\left[
        \sum_{i\in\hat{B}_{j,K+1,1}} v_i^{\OPT}
        +
        v(\mathbf{OPT_j})
        \right]\\
    &= \sum_{j\in J}
\mathbb{E}_R\!\left[\sum_{i\in\hat{B}_{j,K+1,1}} v_i^{\OPT}\right]
+
\sum_{j\in J}
\mathbb{E}_R\!\left[v(\mathbf{OPT_j})\right]
\le 
2 \cdot \E_R[v(\mathbf{OPT})].
\end{align*}
Here, the inequality follows from the following fact. The sets $\hat{B}_{j,K+1,1}$ are pairwise disjoint across $j$, the
items contributing to 
$\sum_{i\in\hat{B}_{j,K+1,1}} v_i^{\OPT}$ 
never overlap between different knapsacks.  
Likewise, the sets $\mathbf{OPT_j}$ are also disjoint across $j$, so the terms
$v(\mathbf{OPT_j})$ do not overlap either.
Therefore, each of the two sums is individually upper bounded by $\mathbb{E}_R[v(\mathbf{OPT})]$.

Hence, we obtain
\[
2\,\mathbb{E}_R[v(\mathbf{OPT})]
\ge
(1-\epsilon^2)\sum_{j\in J}\frac{M_j}{\epsilon},
\]
implying that
\[
\sum_{j\in J} M_j
\le
\frac{2\epsilon}{1-\epsilon^2}\,\mathbb{E}_R[v(\mathbf{OPT})].
\]
Finally, for $\epsilon\le \tfrac12$, we have
$\frac{2\epsilon}{1-\epsilon^2} \le 3\epsilon$, yielding
\[
\sum_{j\in J} M_j
\le
3\epsilon \cdot \mathbb{E}_R[v(\mathbf{OPT})].
\]

Substituting this into Equation~\eqref{eq:opt_alg_diff} proves that
\[
\mathbb{E}_R[\Delta v(\overline{\mathbf{OPT}})]
-
\mathbb{E}_R[\Delta v(\mathbf{ALG})]
\le
3\epsilon \cdot \mathbb{E}_R[v(\mathbf{OPT})].
\]
\end{proof}

\subsection{Optimal Value Preservation}

\begin{proof}[Proof of Lemma~\ref{lem:baropt-equals-opt}]
The reconstruction algorithm iterates over all bucket indices $(j,k)$ in Algorithm~\ref{alg:reconstruction}.  
For each pair $(j,k\le K)$, it invokes either \textsc{FillSmallBucket} (Algorithm~\ref{alg:fill-small}) or \textsc{FillLargeBucket} (Algorithm~\ref{alg:fill-large}).  
After completing all these buckets, it invokes \textsc{FillAllSuperBuckets} (Algorithm~\ref{alg:fill-super}) to process the super bucket $(j,K+1)$ for every knapsack~$j$.

Each call of \textsc{FillSmallBucket} or \textsc{FillLargeBucket} processes all items in the sets $S_{j,k}^{\mathrm{queried}}$ and $S_{j,k}^{\mathrm{missed}}$, and adds to $\overline{\mathbf{OPT}}$ exactly those pairs $(i,j')$ with $(i,j')\in\mathbf{OPT}$.  
In other words, every such call contributes only true optimal assignments to $\overline{\mathbf{OPT}}$ and ensures that no item is ever included more than once.

Similarly, \textsc{FillAllSuperBuckets} processes all remaining optimal assignments in the super buckets.  
For each $j$, every item in $S^{\mathrm{queried}}_{j,K+1}$ or $S^{\mathrm{missed}}_{j,K+1}$ corresponds to some $(i,j')\in\mathbf{OPT}$, and the procedure inserts exactly this assignment once and only once into $\overline{\mathbf{OPT}}$.

Moreover, the sets $S_{j,k}^{\mathrm{queried}}$ and $S_{j,k}^{\mathrm{missed}}$ partition the items in $\OPT$ (Section~\ref{sec:notation}). Therefore, by the end of the reconstruction process, $\overline{\mathbf{OPT}}$ contains exactly the same item-knapsack assignments as $\mathbf{OPT}$, i.e., $\overline{\mathbf{OPT}} = \mathbf{OPT}$.
\end{proof}

\section{Conclusion}

We introduced a polyhedral framework for sparsification that extends beyond uniform structures such as matching and matroids to capacity-constrained problems including knapsack, multiple knapsack, and the generalized assignment problem. Our results demonstrate that despite the inherent hardness of these problems, one can construct $(1-\epsilon)$-approximate sparsifiers with degree polynomial in $1/p$ and $1/\epsilon$, independent of the problem size. This establishes a clean separation between optimization complexity and sparsification complexity: while exact or near-exact optimization remains intractable, identifying a small query set that preserves optimality up to $(1-\epsilon)$ is efficiently possible.

More broadly, our work highlights sparsification as a lens for rethinking stochastic combinatorial optimization. The polyhedral notion of degree captures structural redundancy without relying on cardinality, suggesting applications far beyond knapsack-type problems. A central open question remains: can we design size-independent sparsifiers for general integer linear programs, with degree depending only on $1/p$, $1/\epsilon$, and intrinsic structural parameters? Progress on this front would push the boundary of query-efficient optimization and clarify the fundamental role of sparsification in stochastic combinatorial optimization.

\subsection{Further Related Works}
\label{sec:related_works}
\paragraph{Knapsack and GAP Problem Approximations}
The Knapsack problem, proven NP-complete by Karp \cite{karp1972reducibility}, has driven extensive research into fully polynomial-time approximation schemes (FPTAS). For background on knapsack approximations, see the monographs
\emph{Knapsack Problems} by Kellerer et al. \cite{kellerer2004knapsack}
and \emph{The Design of Approximation Algorithms} by Williamson and Shmoys \cite{williamson2011design}. The first published FPTAS for Knapsack was given by Ibarra and Kim \cite{ibarra1975fptas}, with running time $\tilde{O}(n + (1/\epsilon)^4)$, where $\tilde{O}$ suppresses polylogarithmic factors in $n$ and $1/\epsilon$. Recent breakthroughs include Chen et al. \cite{chen2024nearly} and Mao \cite{mao2024approximation}, who achieved FPTAS algorithms running in $O(n + (1/\epsilon)^2)$ time and established hardness results for subquadratic FPTAS under the (min, +)-convolution hypothesis. The General Assignment Problem (GAP) and Multiple Knapsack Problem (MKP) present additional complexity: Chekuri and Khanna \cite{chekuri2005mkp} showed that MKP with even two knapsacks does not admit an FPTAS and that GAP is APX-hard, while designing a PTAS for MKP. For GAP, the leading approximation guarantee remains the $1 - 1/e + \varepsilon$ ratio obtained by Feige and Vondr\'{a}k \cite{feige2006allocation}, where $\varepsilon>0$ is an absolute constant. 
On the hardness side, Chakrabarty and Goel \cite{chakrabarty2010gapbound} demonstrated that any polynomial-time algorithm achieving a factor strictly better than $10/11$ would imply $\mathrm{P}=\mathrm{NP}$.

\paragraph{Stochastic Matching}
Stochastic matching, a fundamental special case of stochastic packing problems, was pioneered by Blum et al. \cite{blum2015ignorance} for the unweighted $2$-set packing problem, achieving a $(1-\epsilon)/2$ approximation with degree $O(1/p^{1/\epsilon})$. This initiated a sequence of improvements by a series of follow-up works~\cite{assadi2019towards, assadi2019stochastic, behnezhad2020stochastic, behnezhad2019stochastic, dughmi2023sparsification}. For weighted stochastic matching, the current best result achieves a $0.68$ approximation \cite{derakhshan2025weighted}, improving upon the previous $0.536$ bound \cite{dughmi2023sparsification}. Most recently, Azarmehr
et al. \cite{azarmehr2025} achieved a breakthrough $(1-\epsilon)$-approximation sparsifier for unweighted stochastic matching with poly$(1/p)$ degree on general graphs using Local Computation Algorithms. 

\paragraph{General Stochastic Packing Problems}
The theoretical foundation for general stochastic packing problems was established by Maehara and Yamaguchi \cite{yamaguchi2018stochastic}, who introduced a unified framework for stochastic packing integer programs with query-efficient algorithms. Their approach provided non-adaptive sparsifiers for various additive SPPs including the GAP and sparse integer linear programs, though with degree dependent on the number of constraints. They extended this work to stochastic monotone submodular maximization \cite{maehara2019stochastic}, developing query strategies for problems with uniform exchange properties. Dughmi et al. \cite{dughmi2023sparsification} addressed the constraint-dependence limitation by employing contention resolution schemes (CRS) that yield sparsifiers with degree independent of the number of constraints, achieving degree polynomial only in $1/p$ and $1/\epsilon$ for matroids and weighted matching.

\paragraph{Set Selection Under Explorable Uncertainty}
A related line of research examines optimization problems under explorable uncertainty, where parameter values are initially hidden within known uncertainty intervals and can be revealed through costly queries. Megow and Schl\"oter
\cite{megow2023set} study the set selection problem, which seeks to identify the subset with minimum total value among a given family of sets. In contrast, Schl\"oter \cite{schloter2025complexity} investigates the knapsack problem under explorable uncertainty, where the goal is to compute a minimal query set sufficient for determining an optimal knapsack solution. This work establishes strong theoretical barriers, proving the problem is $\Sigma_p^2$-complete and showing that no non-trivial approximation exists unless $\Sigma_p^2 = \Delta_p^2$. To circumvent these limitations, the author develops algorithms for resource-augmented variants that compute approximate solutions while competing against optimal query sets. Both works measure algorithmic performance through query complexity -- the number of uncertain parameters that must be revealed to solve the underlying optimization problem.

\section{Acknowledgments}
This paper is based upon work supported by the Air Force Office of Scientific Research under award number FA9550-24-1-0261. Any opinions, findings, and conclusions or recommendations expressed in this document are those of the authors and do not necessarily reflect the views of the United States Air Force.

\newpage
\bibliographystyle{plainurl}
\bibliography{references}

\newpage
\appendix

\section{Missing Proofs}

\subsection{Proof of Lemma~\ref{lem:bucket_weight_guarantee}}
\label{app:weight-concentration}

\begin{proof}
Let 
\(
X := \sum_{i \in S} w_i \cdot \mathbf{1}_{\{i \in R\}}
\)
be the total active weight in $S$, where each item becomes active independently with probability~$p$.  
Define the normalized random variable
\[
Y := \frac{X}{C} = \sum_{i \in S} \frac{w_i}{C}\cdot \mathbf{1}_{\{i \in R\}}.
\]
Since every weight satisfies $0 \le w_i \le C$, each summand of $Y$ lies in the interval $[0,1]$, and hence $Y$ is suitable for applying the multiplicative Chernoff bound.

Let $\mu := \mathbb{E}[Y] = \mathbb{E}[X]/C$.  
The lemma assumption 
\[
\sum_{i\in S} w_i \ge \frac{\tau(\eps)}{p} \cdot C
\]
implies 
\[
\mu = \frac{p\sum_{i\in S} w_i}{C} \;\ge\; \tau(\eps) \; > 1.
\]
Note that
\[
\{ X < C \} \iff \{ Y < 1 \}.
\]

Define 
\[
\delta := 1 - \frac{1}{\mu} \in (0,1),
\qquad\text{so that}\qquad
1 = (1-\delta)\mu.
\]
Applying the multiplicative Chernoff bound for sums of independent $[0,1]$-bounded variables, we obtain
\[
\Pr[X < C]
= \Pr[Y < 1]
= \Pr\!\left[Y < (1-\delta)\mu\right]
\le \exp\!\left(-\frac{\delta^{2}}{2}\,\mu\right)
= \exp\!\left(-\frac{(\mu - 1)^2}{2\mu}\right).
\]

Solving the inequality 
\[
\frac{(\mu-1)^2}{2\mu} \ge \ln(1/\eps)
\]
for the smallest feasible value of $\mu$ gives the threshold
\[
\tau(\eps)
:= 1 + \ln(1/\eps)
   + \sqrt{\ln^2(1/\eps) + 2\ln(1/\eps)}.
\]

Since $\mu \ge \tau(\eps)$, the Chernoff bound yields
\[
\Pr[X < C]
\le \exp\!\left(-\frac{(\tau(\eps)-1)^2}{2\tau(\eps)}\right)
= \eps.
\]
Finally,
\[
\tau(\eps) = \Theta(\ln(1/\eps)),
\]
completing the proof.
\end{proof}

\section{Additional Experimental Details}
\label{appendix:experiments}

\subsection{Motivation for the Experimental Study}
\label{app:motivation}
While our theoretical analysis focuses on sparsifiers for ``stochastic'' knapsack and GAP formulations, many real-world applications are deterministic and predominantly rely on ``deterministic'' integer linear programming (ILP) solvers. This raises a natural empirical question: can our sparsification technique serve as an effective preprocessing step that reduces problem size and accelerates solution times in deterministic settings while preserving near-optimal objective values?

\paragraph{Experimental Overview.}
To address this question, we conduct a comprehensive empirical evaluation structured as follows. We begin by examining the characteristics and patterns observed across diverse data sources, then detail our synthetic instance generation methodology. Subsequently, we establish our evaluation metrics and experimental protocol before presenting and analyzing the computational results.

\subsection{Data Sources and Instance Generation}
\label{app:data}

We evaluated two widely-used public GAP benchmarks: the OR-Library benchmark by Beasley \cite{beasley_dataset} and the repository by Yagiura et al. \cite{dataset2}. 
However, sparsification proves ineffective on both datasets because the instances are already highly sparse -- all items can fit within the knapsack constraints when capacity is doubled. This inherent sparsity renders our sparsification technique redundant, as there are few (or no) items to remove without significantly impacting solution quality.

Given these limitations of existing benchmarks, we turn to synthetic instance generation, which enables systematic control over problem characteristics and allows us to create meaningfully dense instances where sparsification can demonstrate its effectiveness.

\begin{table}[h]
\centering
\begin{threeparttable}
\caption{Parameter grid for synthetic instance generation.}
\label{tab:param-grid}
\begin{tabular}{ll}
\toprule
Parameter & Values \\
\midrule
Item count $n$ & $\{1000, 2000, 5000, 10000^\}$ \\
Bin count $m$ & $\{1, 2, 5\}$ \\
Correlation $\rho$ & $\{-0.8,-0.5,-0.3,0,0.3,0.5,0.8\}$ \\
Redundancy targets & $\{1,2,3,5,8,13,22,36,60,100\}$ \\
Marginal pairs $(F_v,F_w)$ & $\{\textsc{Uniform},\textsc{TruncNormal}\}^2$ \\
Trials per setting & $8$ i.i.d.\ replicates \\
\bottomrule
\end{tabular}
\end{threeparttable}
\end{table}

The full generation parameters are summarized in Table~\ref{tab:param-grid}. 
For each configuration we generate a GAP \emph{maximization} instance with $n$ items and $m$ bins. 
Values $v_{ij}$ and weights $w_{ij}$ are drawn from fixed one–dimensional marginals and \emph{coupled via a Gaussian copula} with parameter $\rho$: 
\[
(Z_1,Z_2)\sim \mathcal{N}\!\bigl(\mathbf{0},\,\Sigma_\rho\bigr),\qquad
\Sigma_\rho=\begin{pmatrix}1 & \rho\\ \rho & 1\end{pmatrix},\qquad
U_\ell=\Phi(Z_\ell)\ (\ell\in\{1,2\}),
\]
and we set
\[
(v_{ij},w_{ij})=\bigl(F_v^{-1}(U_1),\,F_w^{-1}(U_2)\bigr).
\]
This preserves the marginals while varying the concordance (rank dependence) through $\rho\in\{-0.8,-0.5,-0.3,0,0.3,0.5,0.8\}$. 
We use 
\[
F_v \in \{\mathrm{Uniform}[0,100],\ \left.\mathcal{N}(50,15^2)\right|_{[0,100]}\},\qquad
F_w \in \{\mathrm{Uniform}[1,20],\ \left.\mathcal{N}(10,5^2)\right|_{[1,30]}\},
\]
so that the uniform marginals provides a flat baseline, while truncated Gaussians yield bounded, concentrated coefficients; a tiny offset $10^{-2}$ is added to all weights to avoid exact ties. 
Capacities are scaled by a target redundancy parameter via
\[
C_j \;=\; Q_{0.05}\!\left(\{w_{ij}\}_{i=1}^n\right)\cdot \frac{n}{m}\cdot \frac{1}{\text{redundancy target}},
\]
where $Q_{0.05}$ is the empirical $5\%$ quantile and the targets follow a roughly geometric (log-scale) progression to span low–high regimes compactly. 
These settings range from relatively easy to harder ILP regimes and cover negative through positive value–weight dependence. 
Each configuration is replicated independently for 8 trails, and all experiments use a fixed master seed with deterministic burn-in for exact reproducibility.

\subsection{Sparsifier (LP-Driven Variant)}
\label{app:sparsifier}

We instantiate the bucket sparsifier from Algorithm~\ref{alg:query_set_construction}; see also Corollary~\ref{cor:global-oracle-sparsifier}.  
The hyperparameters used in our deterministic experiments are summarized in Table~\ref{tab:sparsifier-hyper}.  

\begin{table}[h]
\centering
\caption{Hyperparameters for the LP-driven sparsifier.}
\label{tab:sparsifier-hyper}
\begin{tabular}{ll}
\toprule
Parameter & Setting \\
\midrule
$p$ (stochasticity) & $1$ (deterministic regime) \\
$\alpha$ (rounds) & $1$ (single pass) \\
$\tau$ (per-bucket allowance) & $1$ (nominal capacity budget) \\
$\epsilon$ (resolution) & $0.2$ \\
$M_j$ (global scale) & $\mathrm{LP_{OPT}}$ for all $j \in [m]$ \\
$K$ (bucket count) & $\left\lceil \tfrac{1}{\epsilon^2}\log \tfrac{1}{\epsilon^2}\right\rceil$ \\
\bottomrule
\end{tabular}
\end{table}

Table~\ref{tab:sparsifier-hyper} highlights three design choices.
First, since randomness plays no role in the deterministic setting, we fix $p=1$ and adopt the simplest scheme of one round with nominal allowance ($\alpha=\tau=1$).
Second, we replace the oracle scale $M_j$ from Corollary~\ref{cor:global-oracle-sparsifier} with a single global scale given by the LP optimum on the full instance, counted in preprocessing time.
Third, for the bucket count we adopt $K=\lceil \tfrac{1}{\epsilon^2}\log \tfrac{1}{\epsilon^2}\rceil$, which reduces overhead compared to the corollary-level bound. We deliberately set a relatively large $\epsilon=0.2$ so that $K$ is small, allowing us to test whether such coarse bucketing is still sufficient to yield high approximation in practice.

Given these settings, bucketing and item selection follow Algorithm~\ref{alg:query_set_construction} verbatim, and the restricted ILP on query set $Q$ is then solved to termination, thereby testing whether $Q$ retains a near-optimal solution. All preprocessing (LP solve $+$ bucketing) is counted as part of the sparsification pipeline, so that we can assess whether the sparsifier serves as a genuine preprocessing step that accelerates solving GAP.

\subsection{Solver Setup and Time-Fair Protocol}
\label{app:solver}

\paragraph{Environment.}
All experiments are implemented in \texttt{Julia} using \texttt{JuMP} with the \texttt{Gurobi} backend, with default solver parameters. Wall-clock time is measured externally around solver calls.

\paragraph{Three runs per instance.}
For each generated instance we perform three runs under identical solver settings:
\begin{enumerate}
\item \emph{Exact solving}: solve the full ILP to termination, recording the optimum $\OPT_{\text{full}}$, the runtime $T_{\text{full}}$, and the realized item count.
\item \emph{Sparsification followed by exact solving}: apply the bucket sparsifier from Section~\ref{app:sparsifier} to obtain a query set $Q$, then solve the restricted ILP to termination. The end-to-end runtime -- including the LP relaxation, sparsification, and restricted ILP -- is denoted $T_{\text{sparse}}$, and the attained optimum by $\OPT_{\text{sparse}}$.
\item \emph{Time-constrained exact solving}: run the full ILP under a wall-clock budget of $T_{\text{sparse}}$, yielding the best incumbent $\OPT_{\text{full}}^{(\text{cut})}$. This serves as the equal-time benchmark for performance comparison.
\end{enumerate}

\subsection{Evaluation Metrics and Notation}
\label{app:metrics}

Having specified instance generation (Section~\ref{app:data}) and the three-run protocol (Section~\ref{app:solver}), we now define the metrics used throughout our analysis, summarized in Table~\ref{tab:metrics}.

\begin{table}[h]
\centering
\caption{Evaluation metrics.}
\label{tab:metrics}
\begin{tabular}{ll}
\toprule
\textbf{Metric} & \textbf{Definition} \\
\midrule
Approximation ratio 
& $\mathrm{OPT}_{\text{sparse}} / \mathrm{OPT}_{\text{full}}$ \\[0.3em]

Speedup 
& $T_{\text{full}} / T_{\text{sparse}}$ (end-to-end time) \\[0.3em]

Equal-time value ratio (ETR) 
& $\mathrm{OPT}_{\text{sparse}} / \mathrm{OPT}_{\text{full}}^{(\text{cut})}$, where \\
& $\mathrm{OPT}_{\text{full}}^{(\text{cut})}$ is the best incumbent within $T_{\text{sparse}}$ \\[0.3em]

Redundancy ratio $r$ 
& $n / |\OPT_{\text{full}}|$, where $|\OPT_{\text{full}}|$ is the number of items \\
& selected by the exact solving ILP solution \\[0.3em] 

\bottomrule
\end{tabular}
\end{table}

The approximation ratio measures value preservation, speedup captures end-to-end runtime gains, and ETR provides a time-fair benchmark against the full ILP.  
The redundancy ratio serves as a structural parameter that links the experiments back to our theoretical analysis.

\subsection{Analysis of Experimental Results}
\label{app:analysis}

We are ready to present the empirical findings.

\begin{table}[h]
\centering
\begin{tabular}{lccc}
\toprule
Setting & Approx.\ ratio & Speedup & ETR \\
\midrule
All runs          & $0.998$ & $2.79\times$ & $2.76\times$ \\
$r>4,\, m>1,\, n>1000$     & ---     & $4.67\times$ & $5.14\times$ \\
\bottomrule
\end{tabular}

\captionsetup{justification=centering}
\caption{Aggregate and filtered averages.\label{tab:agg-averages}}
\captionsetup{justification=raggedright}
\end{table}

Table~\ref{tab:agg-averages} summarizes the macro results: the algorithm using sparsification preserves a high approximation ratio and yields end-to-end speedups, with markedly stronger gains on the informative slice that restricts to instances with \(r>4\), \(m>1\), and \(n>1000\). Guided by these results, we next examine how performance varies with the problem size $n$, the number of bins $m$, the redundancy ratio $r$, and the correlation parameter $\rho$.

\paragraph{Speedup versus redundancy.}

\begin{figure}[h]
    \centering
    \includegraphics[width=\linewidth]{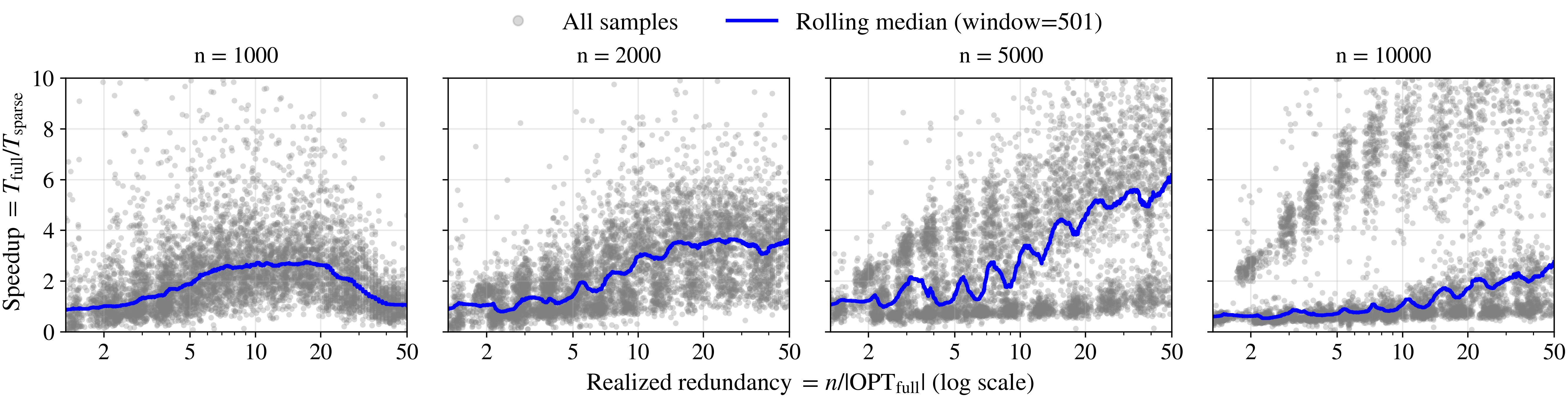}
    \caption{End-to-end speedup ($T_{\mathrm{full}}/T_{\mathrm{sparse}}$) versus realized redundancy $r=n/|\mathrm{OPT}_{\mathrm{full}}|$ for $n\in\{1000,2000,5000, 1000\}$. 
    Each column fixes $n$ (labeled at the top). Gray points are individual runs; the blue curve is a rolling median computed with window size $501$. 
    The $x$-axis is logarithmic and truncated at $r\le 50$ for readability, and the vertical scale is shared across columns.}
    \label{fig:redundancy-vs-speedup-log50}
\end{figure}

In Figure~\ref{fig:redundancy-vs-speedup-log50}, the rolling-median \emph{Speedup} increases with redundancy on $r\in(0,50]$ for all except small $n$ ($n\neq 1000$), aligning with the intuition that higher redundancy lets the sparsifier safely discard a larger fraction of items and thereby reduces the effective search space of the original problem.

For $n=1000$, the tail bends downward. We interpret this as follows: when $r$ is large with small $n$, instances tend to contain a higher fraction of items that are easy for a modern ILP solver to eliminate through presolve or simple dominance reasoning. This reduces the effective candidate set in the full ILP, often bringing it close to the size of the sparsified instance. As a result, $T_{\mathrm{full}}$ and $T_{\mathrm{sparse}}$ naturally converge, which explains why the speedup ratio approaches~$1$ in the high-$r$ tail for $n{=}1000$.

From \(n{=}1000\) to \(n{=}5000\), the overall \emph{Speedup} level increases, consistent with the view that larger instances admit greater structural reduction after sparsification. At \(n{=}10000\), the median curve falls below the \(n{=}5000\) curve; however, the scatter reveals two strata: a high-speedup stratum and a lower, denser stratum. A median computed on the upper stratum alone would exceed the \(n{=}5000\) median and align with the otherwise monotone trend, whereas the concentration of points in the lower stratum dominates the rolling median and induces the observed downward deviation. To investigate this split, we next examine ETR slices by \(m\) and \(\rho\).

\paragraph{Equal-time value across redundancy.}
Figures~\ref{fig:etr-n1000}--\ref{fig:etr-n10000} plot \(\mathrm{ETR}=\OPT_{\mathrm{sparse}}/\OPT_{\mathrm{full}}^{(\mathrm{cut})}\) versus redundancy, stratified by \(m\) and \(\rho\). Excluding the \(n{=}10000,\,m{=}5\) anomaly discussed below, three consistent patterns emerge:
\begin{enumerate}
    \item \textbf{Scaling in \(n\) and \(r\).} Holding \((m,r,\rho)\) fixed, ETR increases with \(n\); holding \((n,m,\rho)\) fixed (and excluding \(m{=}1\)), ETR increases with \(r\). This is consistent with the view that larger instances and higher redundancy permit more effective, value-preserving sparsification of the instance.

    \item \textbf{Effect of \(m\).} When \(m{=}1\), mean ETR stays within \(0.99\)–\(1.00\) across \(r\) and \(\rho\). Our interpretation is that these slices are easy: the full ILP progresses quickly, the LP-driven preprocessing dominates \(T_{\mathrm{sparse}}\), and the equal-time baseline on the full instance nearly attains \(\OPT_{\mathrm{full}}\); consequently, ETR mirrors the \emph{Approximation ratio}.

    For \(m{=}2\), ETR is highest and exhibits the expected monotonicities: holding \((m,r,\rho)\) fixed, ETR increases with \(n\); holding \((n,m,\rho)\) fixed (excluding \(m{=}1\)), ETR increases with \(r\).
    
    For \(m{=}5\), ETR is uniformly below the \(m{=}2\) levels and the monotone patterns in \(n\) and \(r\) are less pronounced. However, at \(n{=}2000\) and \(n{=}5000\) a generally increasing trend with \(r\) remains visible. For the tail-side decline at high redundancy (e.g., \(r\) near \(50\)), we posit that -- under the definition \(r=n/|\OPT_{\mathrm{full}}|\) -- increasing \(m\) while holding \((n,r)\) fixed reduces the average number of selected items per knapsack, yielding sparser per-bin solutions and an effectively smaller search space for the full ILP. This accelerates \emph{exact solving} and narrows the gap to \(T_{\mathrm{sparse}}\), so the equal-time baseline more often attains values close to \(\OPT_{\mathrm{full}}\), leading to a downward drift in ETR as \(r\) approaches \(50\).

    \item \textbf{Effect of \(\rho\).} Fix \((n,m,r)\). The dependence on value–weight correlation exhibits two regimes.

    For \(m{=}1\), performance slightly degrades as \(\rho \to -1\).
    On these slices the ETR essentially coincides with the \emph{Approximation ratio}, so the observed drop reflects a small loss in \(\OPT_{\mathrm{sparse}}\) relative to \(\OPT_{\mathrm{full}}\) rather than a timing artifact.
    A plausible mechanism is that strong negative value–weight correlation induces a very sharp \(v/w\) ordering and a nearly integral single-knapsack optimum; with our deliberately coarse bucketing (\(\epsilon=0.2\)) and nominal per-bucket allowance (\(\tau=1\)), sparsification can exclude a few top candidates, and the steep tail then magnifies the impact of any such omission into a sub-percent objective gap.    
    
    For \(m\in\{2,5\}\), the pattern reverses: ETR improves as \(\rho\) becomes more negative. 
    Negative correlation concentrates the high-density items into the sparsifier’s buckets, so sparsification preferentially retains items that are likely to appear in \(\OPT_{\mathrm{full}}\) and filters out those that are not.
    Under the same time budget \(T_{\mathrm{sparse}}\), the time-constrained baseline on the full instance tends to find strong incumbents when \(\rho \to +1\), because many near-ties make near-optimal solutions easy to locate. 
    When \(\rho \to -1\), value dispersion makes high-quality incumbents harder to obtain within the time limit, so the baseline lags behind while the restricted ILP remains focused on the most relevant portion of the instance. 
    This gap is most pronounced at larger redundancy \(r\), where \emph{Speedup} is higher and the equal-time budget for the full instance is correspondingly tighter. 
    As a result, \(\mathrm{ETR}\) increases as \(\rho\) moves toward \(-1\) for \(m\in\{2,5\}\), highlighting that sparsification followed by exact solving identifies near-optimal solutions within the same wall-clock budget by quickly focusing on the most relevant items.
\end{enumerate}

\paragraph{Anomalous slice at \(n{=}10000,\,m{=}5\).}
For \(n{=}10000\) with \(m{=}5\), ETR concentrates near~\(1\) across the \((r,\rho)\) grid (Figure~\ref{fig:etr-n10000}). 
A run–level breakdown shows a systematic shift in wall–clock times: \(T_{\mathrm{full}}\) is \emph{smaller} than at \(n{=}5000,\,m{=}5\) under matched \((r,\rho)\), and it is also smaller than at \(n{=}10000,\,m{=}2\). 
These observations indicate that, in this slice, the full ILP becomes easier as \(n\) and \(m\) increase. 
As \(T_{\mathrm{full}}\) decreases, the time required for \emph{time-constrained exact solving} to perform well likewise shrinks; under a comparable budget \(T_{\mathrm{sparse}}\), the equal-time baseline frequently reaches (or nearly reaches) optimality.
Consistent with this, at \(n{=}5000\) we typically have \(T_{\mathrm{sparse}}\!\ll T_{\mathrm{full}}\), whereas at \(n{=}10000\) these two times are much closer -- reflected in Figure~\ref{fig:redundancy-vs-speedup-log50} by a dense lower–speedup stratum (containing both \(m{=}1\) and \(m{=}5\) points) for \(n{=}10000\); by contrast, in the \(n{=}5000\) panel most \(m{=}5\) points lie in the upper stratum.

We attribute this behavior to a generator–specific effect rather than to the sparsifier. 
Our data generator (Uniform/TruncNormal marginals coupled by a Gaussian copula at fixed \(\rho\); see Section~\ref{app:data}) becomes increasingly faithful to its target distributions as both \(n\) and \(m\) grow. 
With larger samples, empirical quantiles stabilize, the capacity rule becomes less noisy, and the induced instances exhibit more regular structure (e.g., clearer dominance relations among variables and fewer near–ties). 
Modern MILP presolve and cutting planes exploit such regularity effectively, compressing the root and closing the gap quickly; consequently, \(T_{\mathrm{full}}\) decreases while the LP–based preprocessing cost remains comparatively stable, which drives ETR toward~\(1\) in this slice. 
Because this phenomenon originates in the data generator rather than in the sparsification mechanism, we report the \(n{=}10000,\,m{=}5\) results for completeness (Figure~\ref{fig:etr-n10000}) but do not base our main claims on them.

\paragraph{Summary and takeaways.}

Taken together, the evidence supports our main thesis: in the deterministic regime, applying \emph{sparsification before exact solving} reduces effective instance size and yields wall--clock speedups while preserving nearly all objective value when structural redundancy is present. The gains are most pronounced in nontrivial, diagnostically informative slices---moderate--to--large redundancy \(r\), multiple bins \(m>1\), and larger \(n\) (up to \(5000\))---where the full ILP is genuinely time--consuming and sparsification concentrates the search on the most relevant part of the instance. By contrast, in slices with limited opportunity for sparsification---e.g., \(m{=}1\), very small \(n\), or very high \(r\) relative to \(n\)---the full ILP already progresses rapidly and the LP--based preprocessing constitutes a larger share of \(T_{\mathrm{sparse}}\); in these cases additional speedups are limited.


We also acknowledge a limitation of our synthetic pipeline (Section~\ref{app:data}): generating large batches of very large instances (e.g., \(n{=}10000\)) while simultaneously controlling difficulty is nontrivial. In particular, as both \(n\) and \(m\) grow, our generator becomes highly faithful to its target marginals and copula.

Overall, the results indicate that our sparsifier is most useful precisely where there is substantive scope for sparsification -- regimes with nontrivial structural redundancy and large feasible region, where \emph{exact solving} alone is computationally expensive but a carefully reduced instance incurs only negligible loss in objective value. An interesting direction for future work is to design richer generators or controlled real–world benchmarks with tunable redundancy, and to explore adaptive instantiations of our scheme (e.g., choosing \(\epsilon\) or bypassing the LP stage) that automatically recognize low-redundancy settings and concentrate effort where sparsification offers the greatest benefit.


\newcommand{\ETRConventions}{%
Columns are $m\in\{1,2,5\}$; rows 1–7 are per-$\rho$ violin plots with medians (red) and means (orange), and the bottom row is a $\rho$-colored scatter. The $x$-axis is logarithmic and truncated at $r\le 50$ for readability; vertical lines mark equal-width bins in $\log_{10}$, with per-bin sample sizes annotated above each panel. For $m=1$ the $y$-axis is fixed near $1$; for $m\in\{2,5\}$ we clip outliers at the $0.99$ quantile and unify $y$-limits within the figure to enable cross-panel comparison.%
}

\begin{figure}[t]
  \centering
  \includegraphics[width=\linewidth]{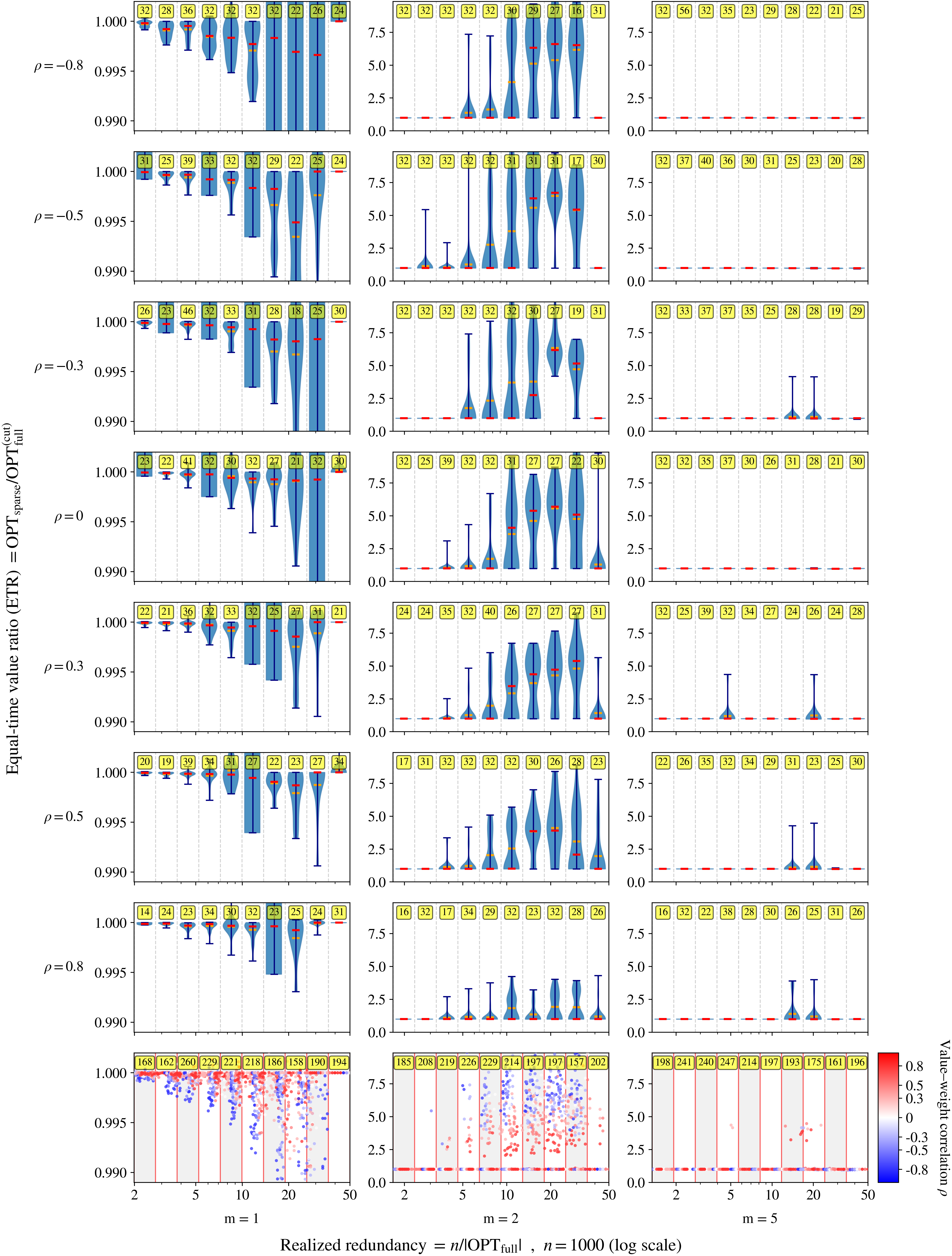}
  \caption{Equal-time value ratio (ETR $=\OPT_{\text{sparse}}/\OPT_{\text{full}}^{(\text{cut})}$) versus realized redundancy $r=n/|\OPT_{\text{full}}|$ for $n=1000$. \ETRConventions}
  \label{fig:etr-n1000}
\end{figure}

\begin{figure}[t]
  \centering
  \includegraphics[width=\linewidth]{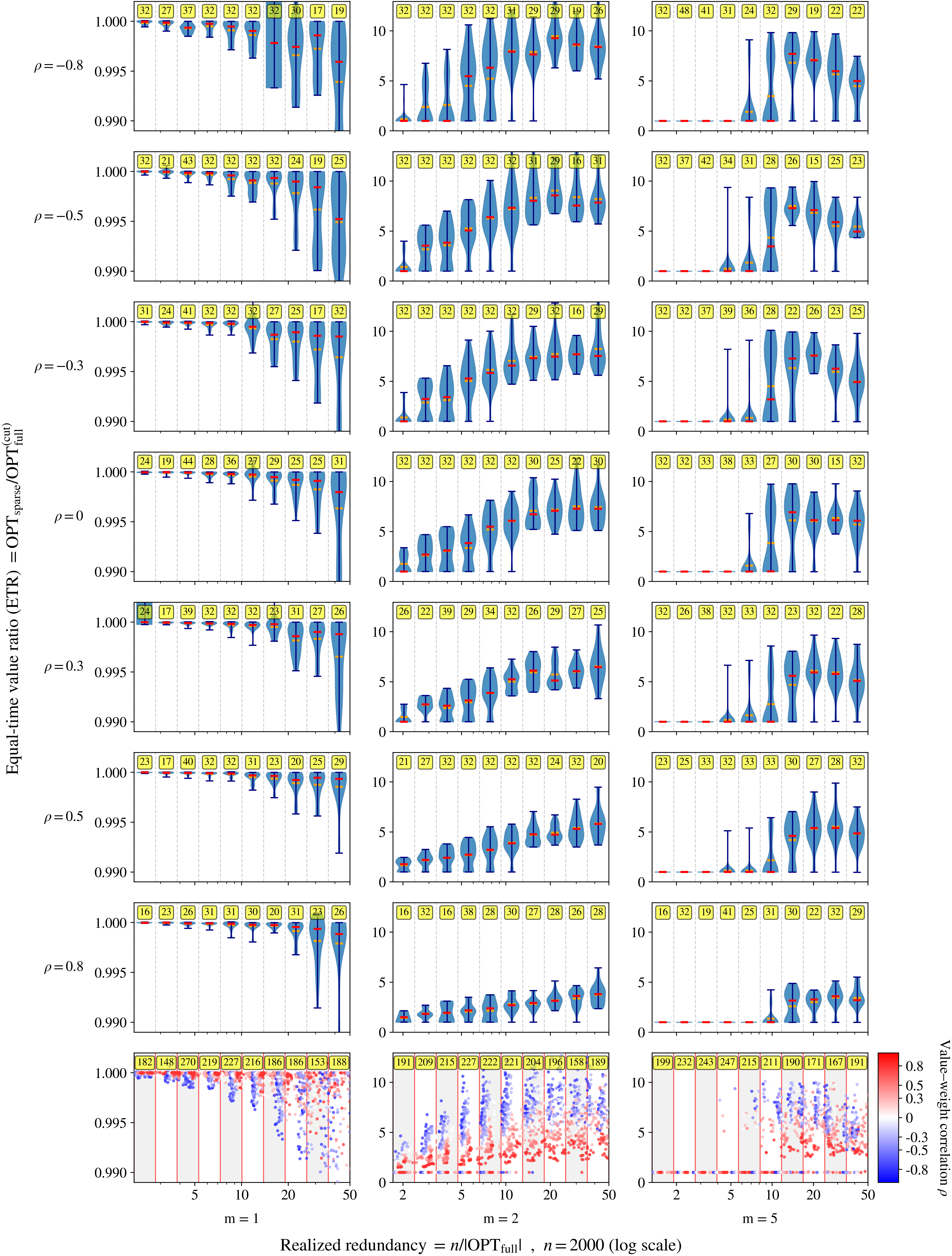}
  \caption{ETR versus realized redundancy for $n=2000$. Conventions as in Fig.~\ref{fig:etr-n1000}.}
  \label{fig:etr-n2000}
\end{figure}

\begin{figure}[t]
  \centering
  \includegraphics[width=\linewidth]{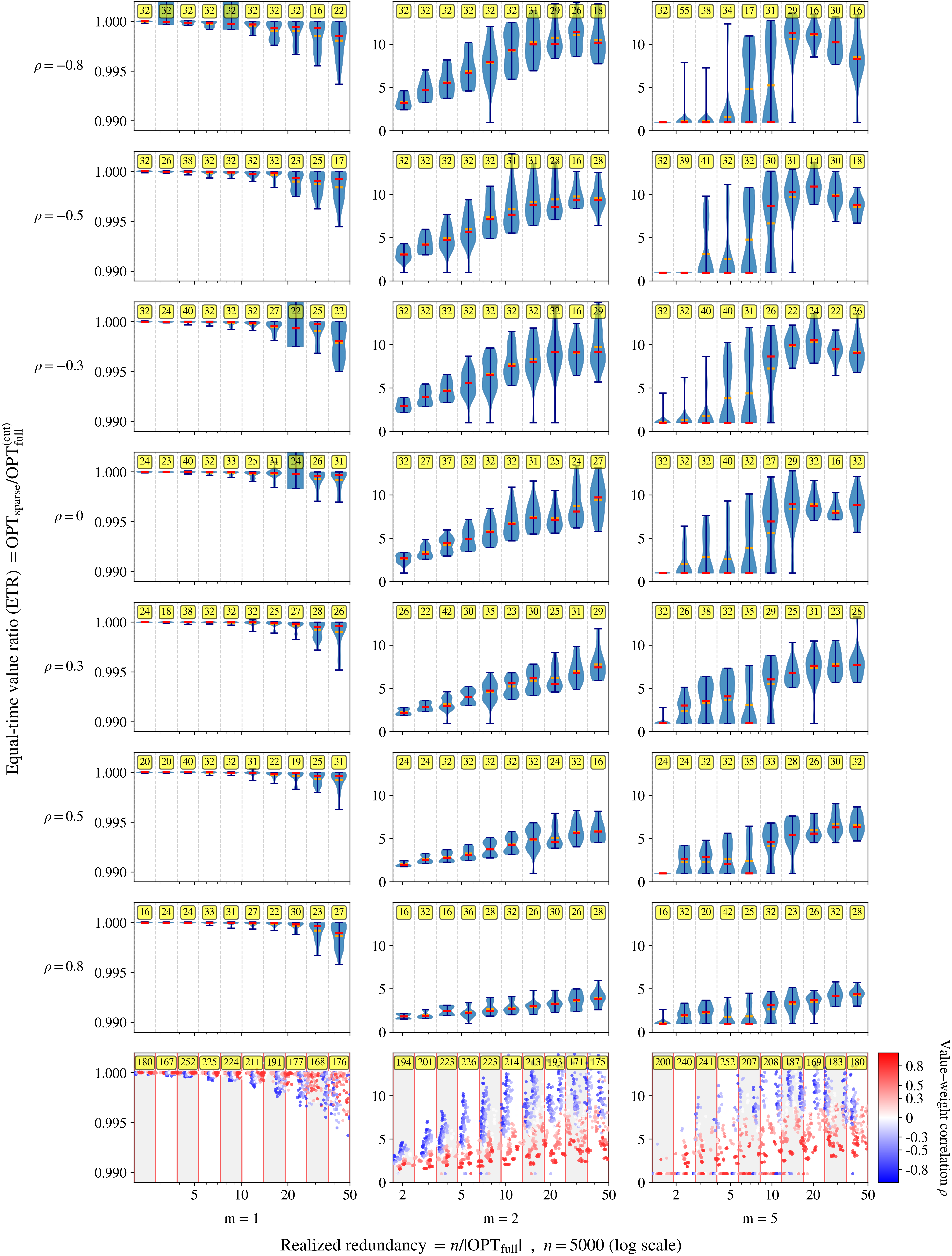}
  \caption{ETR versus realized redundancy for $n=5000$. Conventions as in Fig.~\ref{fig:etr-n1000}.}
  \label{fig:etr-n5000}
\end{figure}

\begin{figure}[t]
  \centering
  \includegraphics[width=\linewidth]{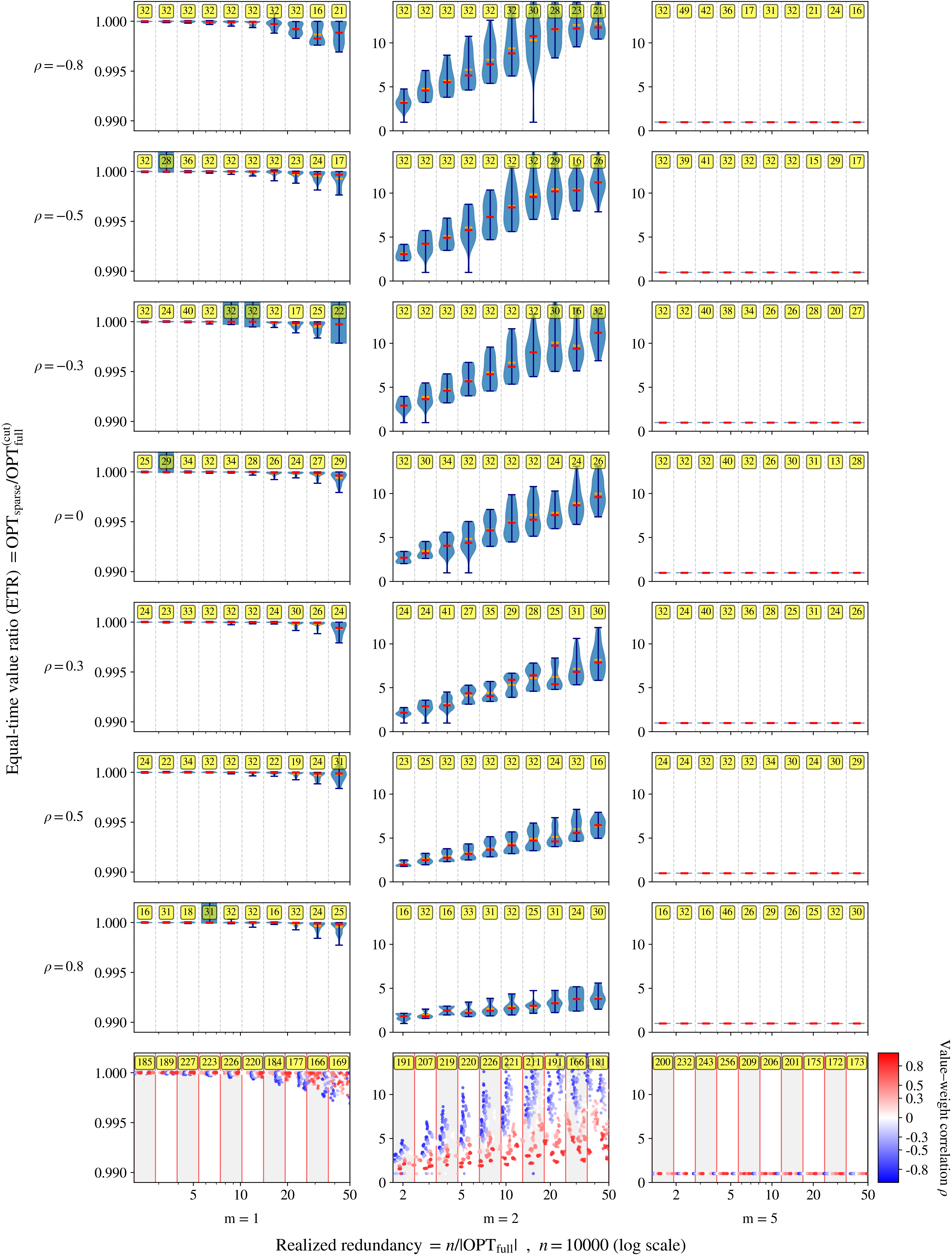}
  \caption{ETR versus realized redundancy for $n=10000$. Conventions as in Fig.~\ref{fig:etr-n1000}.}
  \label{fig:etr-n10000}
\end{figure}

\addcontentsline{toc}{section}{Appendix}

\end{document}